\newtheorem{theorem}{Theorem}
\newtheorem{lemma}{Lemma}
\newtheorem{proposition}{Proposition}
\theoremstyle{definition}
\theoremstyle{remark}
\newtheorem{remark}{Remark}
\def\R{\mathbb{R}}
\def\1{\mathbf{1}}
\def\ve{\varepsilon}
\def\maxZ1{Z^1_{\text{max}}}
\def\Femp{\mathbb{F}}
\def\hFemp{\hat{\Femp}}
\def\hM{\hat{M}}
\def\hQ{\hat{Q}}
\def\Haz{\Lambda}
\def\hHaz{\hat{\Haz}}
\def\Var{\mathrm{Var}}
\def\op{o_P(1)}
\def\opn{o_P(n^{-1/2})}
\def\Opn{O_P(n^{-1/2})}
\def\sj{\sum_{j=1}^n}
\def\avj{\frac1n\sj}
\def\sk{\sum_{k=1}^n}
\def\hxi{\hat{\xi}}
\def\mhat{\hat{m}}
\def\shat{\hat{s}}
\def\hpi{\hat{\pi}}
\numberwithin{equation}{section}
\begin{document}

\title[The nonparametric mixture cure model]
{The nonparametric location--scale mixture cure model}

\author[J.\ Chown]{Justin Chown$^*$}
\address{Fakult\"at f\"ur Mathematik\\
Lehrstuhl f\"ur Stochastik\\
Ruhr-Universit\"at Bochum\\
44780 Bochum\\
Germany}
\email{justin.chown@ruhr-uni-bochum.de}
\thanks{$^*$Correspondences may be addressed to Justin Chown.}

\author[C.\ Heuchenne]{C\'edric Heuchenne}
\address{HEC Li\'ege, University of Li\'ege\\
Institute of Statistics, Biostatistics and Actuarial Sciences\\
Universit\'e catholique de Louvain\\
Rue Louvrex 14\\
4000 Li\'ege\\
Belgium}
\email{c.heuchenne@uliege.be}

\author[I.\ Van Keilegom]{Ingrid Van Keilegom}
\address{ORSTAT\\
KU Leuven\\
Naamsestraat 69\\
B 3000 Leuven\\
Belgium}
\email{ingrid.vankeilegom@kuleuven.be}

\begin{abstract}
We propose completely nonparametric methodology to investigate
location--scale modelling of two--component mixture cure models, where
the responses of interest are only indirectly observable due to the
presence of censoring and the presence of so--called {\em long--term
  survivors} that are always censored. We use covariate-localized
nonparametric estimators, which depend on a bandwidth sequence, to
propose an estimator of the error distribution function that has not
been considered before in the literature. When this bandwidth belongs
to a certain range of undersmoothing bandwidths, the asymptotic
distribution of the proposed estimator of the error distribution
function does not depend on this bandwidth, and this estimator is
shown to be root-$n$ consistent. This suggests that a computationally
costly bandwidth selection procedure is unnecessary to obtain an
effective estimator of the error distribution, and that a simpler
rule-of-thumb approach can be used instead. A simulation study
investigates the finite sample properties of our approach, and the
methodology is illustrated using data obtained to study the behavior
of distant metastasis in lymph-node-negative breast cancer patients.
\end{abstract}
\bigskip

\maketitle

\noindent {\em Keywords:}
censored data, cure model, error distribution function,
nonparametric regression
\bigskip

\noindent {\itshape 2010 AMS Subject Classifications:}
Primary: 62G08, 62N01; Secondary: 62G05, 62N02.

\section{Introduction}
\label{intro}

A common problem faced in medical studies is for some subjects to
never experience the event of interest during the study period. For
example, consider a follow--up study examining the harmful
side--effects of a pharmaceutical product. Since side--effects are
commonly rare, it is expected that many subjects involved in the study
will not experience the harmful effect of the treatment by the end of
the study, and, therefore, these subjects will be censored at the
conclusion of the study. Hence, these subjects are called {\em
  long--term survivors}. The first known study involving statistical
analysis of data containing long--term survivors dates back to Boag
(1949), and this author coined the term {\em cure model} to indicate
the data contained a non--trivial proportion of long--term
survivors. However, Farewell (1986) observes that cure models should
not be used without clear empirical or biological need (see Section
\ref{estimates}). Results on censored data models should be
used in these cases, and studies involving subjects with censored
responses are common. Well known methods can be employed to study
these data (see, for example, Lawless, 1982; Aitkin et al.,1989; Harris
and Albert, 1991; Collett, 1994).

We consider the case of observing responses $Y$ that are right
censored by another random variable $C$, and, hence, observe only the
minimum $Z = Y \wedge C$. Throughout this article, we will assume that
$Y$ and $C$ are only conditionally independent given a covariate $X$,
and, for simplicity, we will assume the censoring variable $C$ is a
continuous random variable. For clarity, we will refer to responses
corresponding to the subpopulation that has survival times that are
finite as $Y_u$, which are only {\em indirectly observed} due to the
presence of censored values. The purpose of this paper is to study the
heteroskedastic nonparametric regression of $Y_u$ given the covariate
$X$:
\begin{equation} \label{modeleq}
Y_u = m(X) + s(X)\ve.
\end{equation}
Here $m$ is the regression function and $s$ is the scale function
(bounded away from zero), which are both assumed to be smooth. We
assume the error $\ve$ is a continuous random variable that is
independent of the covariate $X$ and has distribution function
$F$. Identifiability of the components in the cure model also requires
additional assumptions on the joint distribution of $(X,Z,\delta)$,
where $\delta = \1[Y \leq C]$ is the right--censoring
indicator. Finally, $m$ is assumed to be a location--type functional
and $s$ is assumed to be a scale--type functional, which imposes
additional requirements on the model error $\ve$ that are analogous to
the usual zero mean and unit variance assumptions so that we can
identify both of $m$ and $s$ (see Section \ref{estimates}).

From the random sample of data
$(X_1,Z_1,\delta_1),\ldots,(X_n,Z_n,\delta_n)$ we will propose
nonparametric function estimators of $m$, $s$ and $F$ in Section
\ref{estimates}, and we study the large sample behavior of these
estimators in Section \ref{asymptotics}. Note, these data also include
the cured cases and, therefore, do not directly correspond with
\eqref{modeleq}. Estimating the error distribution function $F$ is
particularly important because many statistical inference procedures
depend on functionals of $F$; for example, Kolmogorov--Smirnov--type
and Cram\'er--von--Mises--type statistics. Van Keilegom and Akritas
(1999) considered estimation of $F$, as well as the functions $m$ and
$s$, from a model similar to \eqref{modeleq} but without considering
cured subjects, which presents a new and important challenge, and,
hence, those results do not directly apply to the present situation.

We are interested in studying a completely nonparametric statistical
methodology for examining location--scale modelling of data containing
long--term survivors. Since the study of this type of data presented
in Boag (1949), the literature on this subject has been divided into
two distinct categories. The original cure model proposed by Boag
(1949) is now known as the two--component mixture cure model (see, for
example, Taylor, 1995). We will refer to the second category as simply
the non--mixture cure model (see, for example, Haybittle, 1959,
1965). Several advancements have been made in the literature when
these models are assumed to have a parametric or a semiparametric
form. Kuk and Chen (1992) investigate combining logistic regression
methods (used to estimate the unknown proportion of cured cases) with
proportional hazards techniques to obtain estimators of their model
parameters. Taylor (1995) works with a similar model as that of Kuk and
Chen (1992) but proposes an Expectation-Maximization algorithm for
simultaneously fitting the model parameters and estimating the
baseline hazard function, and this author also uses simulations to
conjecture that a crucial assumption is required for identifying
components of these models (see Section \ref{estimates}). Sy and
Taylor (2000) consider a similar model as that of Kuk and Chen (1992)
but investigate an estimation technique based on the
Expectation-Maximization algorithm. Lu (2008) considers the
proportional hazards mixture cure model and proposes a semiparametric
estimator of the unknown parameters of that model by maximizing a
so-called nonparametric likelihood function. The estimator is shown to
have minimum asymptotic variance. Lu (2010), the same author as
before, investigates an accelerated failure time cure model (a special
case of the two--component mixture cure model) and proposes an
Expectation-Maximization algorithm for fitting the unknown parameters
of that model as well as obtaining an estimator of the unknown error
density function using a kernel-smoothed conditional profile
likelihood technique. Xu and Peng (2014) and L\'opez-Cheda et al.\
(2017) consider estimating the unknown cure fraction in a completely
nonparametric setting. Patilea and Van Keilegom (2017) consider a
general approach to modelling the conditional survival function of a
subject given that the subject is not cured by proposing so-called
inversion formulae that allows one to express the conditional survival
function of the uncured subjects in terms of the proportion of cured
subjects and the subdistributions of the response and the censoring
variable.

A particularly interesting model belonging to the non--mixture case is
proposed by Yakovlev and Tsodikov (1996). Earlier, in a back-and-forth
exchange over letters to the editors of {\em Statistics in Medicine},
Andrej Yakovlev very clearly details shortcomings of the
two--component mixture cure model. Specifically, he argues that the
two--component mixture cure model implicitly assumes that there is
only a single risk operative in the population, i.e.\ a single
latent variable that determines whether or not subjects are cured /
uncured. Yakovlev argues that, in general, populations have multiple
risk operatives and he proposes what we are referring to as the
non--mixture cure model in his letter to the editors. The exchange
between him and the authors Alan Cantor and Jonathan Shuster can be
found in Yakovlev et al.\ (1994). Sometimes this model is called a
promotion time cure model, and it has gained popularity due to
motivations from Biology (in particular cancer research). Several
results on parametric and semiparametric estimation of the unknown
parameters in these models are available in the literature. However,
this model is not in the scope of the current article, and we only
mention a few of the notable works in this area. Tsodikov (1998)
compares likelihood--based fitting techniques for the non--mixture
cure model. Later, Tsodikov et al.\ (2003) survey the
literature and report that, in less than a decade from its
introduction, the non--mixture cure model is already in popular use,
and these authors promote use of the non--mixture cure model in
semiparametric and Bayesian settings. Zeng et al.\ (2006) propose a
recursive algorithm for obtaining maximum likelihood estimates in the
non--mixture cure model. Additionally, these authors show their
regression parameter estimators have minimum asymptotic
variance. Portier et al.\ (2017) consider an extension of the
non--mixture cure model.

Recently, efforts have been made to unify the two seemingly distinct
categories of cure models. Sinha et al.\ (2003) discuss the
benefits and disadvantages of the mixture and non--mixture cure
models. Yin and Ibrahim (2005) propose transforming the unknown
population survival function in a manner that is analogous to the
Box-Cox transformation for non-normally distributed random
variables.

Our goal is to generally relax the model conditions and investigate an
alternative estimation strategy for the two--component mixture cure
model. A popular theme in both modelling categories is to use
proportional hazards methods, where one estimates the baseline hazard
function using a nonparametric estimator. In this case, the
proportional hazards model is made additionally flexible through a
nonparametric estimator of the baseline hazard function. The result is
a semiparametric estimation technique (the remaining model parameters
are finite dimensional and a likelihood function is usually required
to obtain estimators).

The article is organized as follows. We further discuss model
\eqref{modeleq} and motivate our nonparametric function estimators in
Section \ref{estimates}, and we give asymptotic results on these
estimators in Section \ref{asymptotics}. In Section
\ref{applications}, we investigate the finite sample properties of our
proposed estimator of $F$ and we illustrate the proposed techniques by
characterizing a set of data collected to study distant metastasis in
lymph-node-negative breast cancer patients. Our numerical study of the
previous results in Section \ref{simulations} shows the finite sample
behavior of the estimators proposed in Section \ref{estimates} is
well--described by the asymptotic statements given in Section
\ref{asymptotics}. The proofs of these results and further supporting
technical results are given in Section \ref{appendix}.

%%%%% Section: Estimation in the cure model.

\section{Estimation of the model parameters}
\label{estimates}
We begin this section with a discussion of the identifiability of the
cure model parameters. In the following, write $G$ for the
distribution function of the covariates $X$ and $g$ for the density
function of $G$, where the support of $X$ is $[0,\,1]$. Let $Q$ be the
conditional distribution function of the responses $Y$ given $X$ and
$Q_u$ be the conditional distribution function of $Y_u$ given $X$. For
both cure models and censored response models, it is important that we
place conditions on the distribution function $Q$ (and therefore
$Q_u$) so that we may identify and estimate the regression model
components $m$ and $s$ and the error distribution function
$F$. Empirical or biological need for using a cure model in the
present situation means the support of the censoring variable $C$ is
never contained in the support of the subpopulation $Y_u$, i.e.\ we
require
\begin{equation} \label{curemodid}
\tau_0 = \sup_{x \in [0,\,1]} \tau_u(x) < \tau_C(x),
 \qquad x \in [0,\,1],
\end{equation}
where $\tau_u(x) = \inf \{t \in \R \,:\, 1 - Q_u(t\,|\,x) = 0\}$
and $\tau_C(x) = \inf \{t \in \R \,:\, P(C > t\,|\,X = x) =
0\}$. Taylor (1995) uses simulation evidence to conjecture the
necessity of \eqref{curemodid}. Xu and Peng (2014) and L\'opez-Cheda
et al.\ (2017) observe that \eqref{curemodid} leads to identifiability
of the cure model components (see Lemma 1 of L\'opez-Cheda et al.,
2017); specifically, it is required to identify the conditional
proportion $\pi(X)$ of cured cases given $X$ as well as the
conditional distribution function $Q_u$ of $Y_u$ given $X$. To ensure
the distribution of the censoring variable $C$ is identifiable, we
will further assume that the remaining mass of $Y$ beyond $\tau_u(X)$
occurs at $Y = \infty$, i.e.\ we assume the conditional equivalence of
the events $\{Y > t\} = \{Y = \infty\}$, $t \geq \tau_u(X)$, given
$X$. This justifies writing
\begin{equation*}
P(Y > t\,|\,X) = \pi(X) + \{1 - \pi(X)\}P(Y_u > t\,|\,X),
 \qquad t \in \R,
\end{equation*}
where $\pi(X) = P(Y > \tau_u(X)\,|\,X) = P(Y = \infty \,|\,X)$ is
assumed to be bounded away from zero and one, i.e.\ there are finite
positive real numbers $0 < \pi_l \leq \pi_u < 1$ satisfying $\pi_l
\leq \pi(X) \leq \pi_u$ for every $X$. Hence, $P(Y > \tau_u(X)\,|\,X) =
\pi(X) = P(Y > \tau_0\,|\,X)$. This means that \eqref{curemodid}
implies that we only need an estimator of $\tau_0$, which does not
depend on $X$, rather than $\tau_u(\cdot)$.

To conclude our discussion on identifiability, recall that the
regression function $m$ is a location--type functional and the scale
function $s$ is a scale--type functional. This means there are
transformations $T$ and $V$ such that
\begin{equation*}
m = T\big(Q_u) = m + sT(F)
\quad\text{and}\quad
s = V(Q_u) = sV(F).
\end{equation*}
Therefore, we can see that the regression model components $m$ and $s$
are identifiable when $T(F) = 0$ and $V(F) = 1$.

As noted on page 186 in Dabrowska (1987), responses arising from
experiments with censored values are often skewed to the right and,
therefore, estimators of the mean (and scale) will be
affected. Beran (1981) proposes using $L$--type regression functionals,
which are more robust to skewing in the data. To explain the idea, we
introduce the score function $J$ and the quantiles $\xi_u(p\,|\,x) =
Q_u^{-1}(p\,|\,x) := \inf \{y \in (-\infty,\,\tau_0]\,:\,Q_u(y\,|\,x)
\geq p\}$ for $p \in [0,\,1]$. Here the score function $J$ must be
nonnegative and satisfy $\int_0^1 J(p)\,dp = 1$. Throughout this
article, we work with the following definitions of $m$ and $s$:
\begin{equation*}
m(x) = \int_0^1 \xi_u(p\,|\,x)J(p)\,dp
\quad\text{and}\quad
v(x) = \int_0^1 \xi_u^2(p\,|\,x)J(p)\,dp - m^2(x),
\end{equation*}
where $s(x) = v^{1/2}(x)$ and $x \in [0,\,1]$. Hence, for $m$ and $s$
to be identifiable, we will require that $F$ satisfies
\begin{equation*}
\int_0^1 \xi_F(p)J(p)\,dp = 0
\quad\text{and}\quad
\int_0^1 \xi_F^2(p)J(p)\,dp = 1,
\end{equation*}
where $\xi_F$ is the quantile function of $F$, i.e.\ $\xi_F(p) =
\inf\{t \in \R\,:\,F(t) \geq p\}$ for $p \in [0,\,1]$.

With all of the components of the regression model \eqref{modeleq}
identified, we can define our estimators of the model parameters. To
define the estimator of $Q$, we will introduce further notation. Write
$M$ for the conditional distribution function of the minimum $Z$ given
$X$ and $M^1$ for the conditional subdistribution function of both $Z$
and $\delta = 1$ given $X$. From the discussion above, we can see that
$\tau_M(x) = \inf\{t \in \R\,|\,P(Z > t\,|\,X = x) = 0\} = \tau_C(x) >
\tau_0$ for every $x$, which follows by \eqref{curemodid}, and, hence,
we can consistently estimate $Q$ (and therefore $Q_u$) everywhere on
the region $(-\infty,\,\tau_0] \times [0,\,1]$, cf.\ Van Keilegom and
Akritas (1999).

Using $M$ and $M^1$, the conditional cumulative hazard function $\Haz$
of $Y$ given $X$ may be written as
\begin{equation} \label{hazY}
\Haz(t\,|\,X) = \int_{-\infty}^t \frac{Q(ds\,|\,X)}{1 - Q(s-\,|\,X)}
 = \int_{-\infty}^t \frac{M^1(ds\,|\,X)}{1 - M(s-\,|\,X)},
 \qquad t \in (-\infty, \tau_0].
\end{equation}
To estimate $M$ and $M^1$, we introduce the Nadaraya--Watson weights
\begin{equation*}
W_j(x) = K\bigg(\frac{x - X_j}{a_n}\bigg) \Bigg\slash
 \Bigg\{ \sk K\bigg(\frac{x - X_k}{a_n}\bigg) \Bigg\},
 \qquad j = 1,\ldots,n,
\end{equation*}
where $K$ is a given kernel function and $\{a_n\}_{n \geq 1}$ is a
sequence of bandwidth parameters. Later, we will specify conditions on
choosing $K$ and $\{a_n\}_{n \geq 1}$. Estimates of $M$ and $M^1$ then
follow by the approach of Stone (1977): for $(t,\,x) \in
(-\infty,\,\tau_0] \times [0,\,1]$,
\begin{equation} \label{hMhM1}
\hM(t\,|\,x) = \sj \1(Z_j \leq t)W_j(x)
\quad \text{and} \quad
\hM^1(t\,|\,x) = \sj \delta_j\1(Z_j \leq t)W_j(x).
\end{equation}
Substituting \eqref{hMhM1} into \eqref{hazY} leads to an estimator of
$Q$ in the approach of Beran (1981):
\begin{equation} \label{hQ}
\hQ(t\,|\,x) = 1 - \prod_{Z_{(j)} < t}
 \bigg\{ 1 - \frac{W_{(j)}(x)}{\sum_{k=j}^n W_{(k)}(x)}
 \bigg\}^{\delta_{(j)}},
 \qquad (t,\,x) \in (-\infty,\,\tau_0] \times [0,\,1].
\end{equation}
Here $Z_{(1)} \leq \ldots \leq Z_{(n)}$ is the ascending ordering of
$Z_1,\ldots,Z_n$ and both of $\delta_{(1)},\ldots,\delta_{(n)}$ and
$W_{(1)}(x),\ldots,W_{(n)}(x)$ are ordered according to
$Z_{(1)},\ldots,Z_{(n)}$. For simplicity we will assume that the data
contain no tied responses, which is reasonable because our
assumptions imply the responses $Z_j$ are continuous random
variables. Otherwise the ordering of the variables indicated above is
not unique and the estimator $\hQ$ is affected.

Xu and Peng (2014) propose estimating $\tau_0$ by the largest
uncensored response $\maxZ1$ and then combining this estimator with
\eqref{hQ} to form an estimator the unknown proportion $\pi$ of cured
cases,
\begin{equation} \label{pihat}
\hpi(x) = 1 - \hQ(\maxZ1 \,|\, x),
 \qquad x \in [0,\,1].
\end{equation}
The estimator $\hpi$ is shown to be consistent and asymptotically
normally distributed. Later, L\'opez--Cheda et al.\ (2017) generalize
this result in two steps. First, these authors show the estimator
$\maxZ1$ is strongly consistent for $\tau_0$. Second, the estimator
$\hpi$ is shown to be a uniformly, strongly consistent estimator of
$\pi$.

Turning our attention now to $m$ and $s$, we can see that the unknown
quantiles $\xi_u$ of the {\em uncured population} must be
estimated. It is easy to show the equivalence $\xi_u(p\,|\,x) = \xi((1
- \pi(x))p\,|\,x)$ from the equivalence $Q(\cdot\,|\,x) = \{1 -
\pi(x)\}Q_u(\cdot\,|\,x)$, with $p \in [0,\,1]$ and $x \in [0,\,1]$,
where $\xi((1 - \pi(x))p\,|\,x) = \inf\{y \in
(-\infty,\,\tau_0]\,:\,Q(y\,|\,x) \geq \{1 - \pi(x)\}p\}$. We can
consistently estimate $\xi((1 - \pi(x))p\,|\,x)$ by $\hxi((1 -
\hpi(x))p\,|\,x)$, where $\hxi((1 - \hpi(x))p\,|\,x) = \inf\{y \in
(-\infty,\,\tau_0]\,:\,\hQ(y\,|\,x) \geq \{1 - \hpi(x)\}p\}$. The
resulting estimators of $m$ and $s$ are analogous to those of Van
Keilegom and Akritas (1999):
\begin{equation*}
\mhat(x) = \int_0^1 \hxi\big(\big(1 - \hpi(x)\big)p\,|\,x\big)
 J(p)\,dp
\quad\text{and}\quad
\hat v(x) = \int_0^1 \hxi^2\big(\big(1 - \hpi(x)\big)p\,|\,x\big)
 J(p)\,dp - \mhat^2(x),
\end{equation*}
with $\shat(x) = \hat{v}^{1/2}(x)$, $x \in [0,\,1]$.

Write $\tau_F = \inf\{t \in \R:\, 1 - F(t) = 0\}$ for the largest
observable standardized error, which is finite by
\eqref{curemodid}. It follows that $\{\tau_u(X) -
m(X)\}/s(X)$ does not depend on $X$ and $\tau_F = \{\tau_u(X) -
m(X)\}/s(X)$, for $G$--almost every $X$, from
standardization. This means we can transfer the support of $F$,
$(-\infty,\,\tau_F]$, into the support of $Q$, $(-\infty,\,\tau_u(x)]
= (-\infty,\,\tau_Fs(x) + m(x)]$, $x \in [0,\,1]$, where $Q$ can be
estimated. Note, this is the same transfer of information from $F$ to
$Q$ studied in Van Keilegom and Akritas (1999). However, this implies
that we can form an estimator of $F$ using the estimators of $Q$,
$\pi$, $m$ and $s$, which is new.

Observe the error distribution function $F$ can be written as the
average
\begin{equation*}
F(t) = E\bigg[\frac{Q(t s(X) + m(X)\,|\,X)}{
 1 - \pi(X)}\bigg],
 \qquad -\infty < t \leq \tau_F,
\end{equation*}
where we have used the transference mapping $t \mapsto ts(x) + m(x)$
for $-\infty < t \leq \tau_F$. We arrive at the proposed estimator of
$F$:
\begin{equation} \label{Fhat}
\hFemp(t) = \avj \frac{\hQ(t\shat(X_j) + \mhat(X_j)\,|\,X_j)}{
 1 - \hpi(X_j)},
\qquad -\infty < t \leq \tau_F.
\end{equation}
Note this estimator is averaging over the local model estimators at each
covariate $X_j$ that are not consistent at the root-$n$ rate but are
consistent at slower rates. Nevertheless, we show the estimator
$\hFemp$ is root-$n$ consistent for $F$ and satisfies a functional
central limit theorem (see Section \ref{asymptotics}).

%%%%% Section: Asymptotic results on the function parameter estimators.

\subsection{Asymptotic results on the nonparametric function estimators}
\label{asymptotics}

In order to state our asymptotic results for the estimators introduced
in the previous section, we require the following assumptions:

\begin{enumerate}[label=\textnormal{(A\arabic*)}]
\item The bandwidth $a_n$ satisfies
  $(na_n^2)^{-1}\log\log(n) = O(1)$ and $na_n^5\log^{-1}(n) = O(1)$.
 \label{assumpbw}
\item There are real numbers $0 < \pi_l \leq \pi_u < 1$ satisfying
  $\pi_l < \pi(X) < \pi_u$ for every $X$.
 \label{assumppi}
\item
  \begin{itemize}
  \item[(i)] The kernel function $K$ is a symmetric probability
    density function with support $[-1,\,1]$.
  \item[(ii)] $K$ has bounded second derivative.
  \end{itemize}
  \label{assumpK}
\item 
  \begin{itemize}
  \item[(i)] The distribution function $G$ of the covariates $X$ has a
    density function $g$ that is bounded and bounded away from zero in
    $[0,\,1]$.
  \item[(ii)] The density function $g$ has two bounded derivatives.
  \end{itemize}
  \label{assumpG}
\item
  \begin{itemize}
  \item[(i)] There is a continuous nondecreasing function $L_1$
    satisfying $L_1(-\infty) = 0$ and $L_1(\tau_0) < \infty$ such
    that
\begin{equation*}
M(t\,|\,x) - M(s\,|\,x) \leq L_1(t) - L_1(s),
 \qquad x \in [0,\,1],~-\infty < s < t \leq \tau_0.
\end{equation*}
  \item[(ii)] The conditional (sub)distribution functions $M$ and
    $M^1$ have continuous partial derivatives, with respect to $x$,
    $\dot M$ and $\dot M^1$, respectively, that are bounded in
    $(-\infty,\,\tau_c]\times [0,\,1]$.
  \item[(iii)] There are continuous nondecreasing functions $L_2$ and
    $L_3$ with $L_2(\tau_0) < \infty$, $L_3(\tau_0) < \infty$
    and $L_2(-\infty) = L_3(-\infty) = 0$ such that
\begin{gather*}
\dot M(t\,|\,x) - \dot M(s\,|\,x) \leq L_2(t) - L_2(s),
 \qquad x \in [0,\,1],~-\infty < s < t \leq \tau_0,\\
\text{and}\\
\dot M^1(t\,|\,x) - \dot M^1(s\,|\,x) \leq L_3(t) - L_3(s),
 \qquad x \in [0,\,1],-\infty < s < t \leq \tau_0.
\end{gather*}
  \item[(iv)] The second partial derivatives, with respect to $x$, of the
    conditional (sub)distribution functions $M$ and $M^1$ exist and
    are bounded in $(-\infty,\,\tau_0]\times [0,\,1]$.
  \end{itemize}
  \label{assumpMandM1}
\item The conditional distribution functions $M$ and $M^1$ admit
  bounded Lebesgue density functions.
  \label{assumpMandM1Densities}
\item The (conditional) distribution functions $P(Z \leq t)$ and $P(Z
  \leq t\,|\,\delta = 1)$ are twice continuously differentiable and
  bounded away from zero in absolute value on any compact interval in
  the region $(-\infty,\,\tau_0]$, with the density function of $P(Z
  \leq t\,|\,\delta = 1)$ bounded away from zero at $t = \tau_0$.
  \label{assumpRho}
\item
  \begin{itemize}
    \item[(i)] The score function $J$ is bounded and nonnegative, and
      there are constants $0 < p_l < p_u \leq 1$ such that $J$ is bounded
      away from zero on $(p_l,\,p_u)$ but equal to zero on $[0,\,p_l]
      \cup [p_u,\,1]$ (when $p_u = 1$ we only require that $J$ is
      equal to zero on $[0,\,p_l]$).
    \item[(ii)] $J$ is continuously differentiable with bounded
      derivative $J'$.
  \end{itemize}
  \label{assumpJ}
\end{enumerate}

Assumptions \ref{assumpK} and \ref{assumpG} are common assumptions
taken for nonparametric regression models, which guarantee good
performance of nonparametric function estimators. Note that
Assumptions \ref{assumpMandM1} (i) and (iii) are satisfied for many
distributions. Suppose that $M$ is the logistic distribution
function with a positive, bounded mean function $m(x)$ and scale
function $s \equiv 1$. Write $l_m = \inf_x m(x)$ and $u_m = \sup_x
m(x)$. Then Assumption \ref{assumpMandM1} (i) is satisfied by choosing
$L_1(t) = \int_{-\infty}^t\,\exp(u_m - s)\{1 + \exp(l_m -
s)\}^{-2}\,ds$. When $m$ is also differentiable with a bounded
derivative, then bounding functions $L_2$ and $L_3$ (that are similar
to $L_1$) can be chosen to satisfy Assumption \ref{assumpMandM1} (iii)
as well. Assumptions \ref{assumpMandM1} (ii) and (iv) and
\ref{assumpMandM1Densities} imply the conditional distribution
functions $Q_u$ and $P(C \leq t \,|\, X)$ also meet these conditions
and that $\pi$ must meet Assumptions \ref{assumpMandM1} (ii) and (iv),
when these assumptions are required; e.g.\ due to the conditional
independence of $Y$ and $C$ given $X$ we can write
\begin{equation*}
1 - M(t\,|\,X)
 = \big[\pi(X) + \{1 - \pi(X)\}\{1 - Q_u(t\,|\,X)\}\big]P(C > t\,|\,X).
\end{equation*}
In addition, $m$ and $s$ defined in Section \ref{estimates} are
functionals based on truncated means, which implies the integrals are
restricted to compact subsets of $(-\infty,\,\tau_0]$. Therefore,
combining the Leibniz integral rule for differentiation with
Assumptions \ref{assumpMandM1} (ii) and (iv) yields that both $m$ and
$s$ are twice differentiable with bounded derivatives. Assumption
\ref{assumpRho} is a technical assumption required for the consistency
of $\maxZ1$ for $\tau_0$, and many probability distributions satisfy
this assumption as well. Finally, Assumption \ref{assumpJ} is a
standard assumption that ensures $m$ and $s$ are well-defined
$L$--type regression functionals (see page 186 of Dabrowska,
1987).

Define
\begin{equation*}
\zeta\big(x,Z_j,\delta_j,t\big)
 = \frac{\delta_j\1[Z_j \leq t]}{1 - M(Z_j-\,|\,x)}
 - \int_{-\infty}^{t}
 \frac{\1[Z_j > s]}{\{1 - M(s-\,|\,x)\}^2} M^1(ds\,|\,x),
\qquad j=1,\ldots,n.
\end{equation*}
Our first result specifies the asymptotic order and expansion of the
estimator $\hpi$, which is given in Theorem 3 of L\'opez-Cheda et
al. (2017). We offer an alternative proof of this result, which may be
found in Section \ref{appendix}.

\begin{proposition} \label{prophpi}
Let Assumptions \ref{assumpbw} -- \ref{assumpRho} hold. Then
\begin{equation*}
\sup_{x \in [0,\,1]}\big|\hpi(x) - \pi(x)\big|
 = O\big((na_n)^{-1/2}\log^{1/2}(n)\big),
 \qquad\text{a.s.}
\end{equation*}
Additionally,
\begin{equation*}
\hpi(x) - \pi(x) = -\frac{\pi(x)}{g(x)} \frac1{na_n} \sj
 K\bigg(\frac{x - X_j}{a_n}\bigg)
 \zeta\big(x,Z_j,\delta_j,\tau_0\big) + R_{1,n}(x),
\end{equation*}
where $\sup_{x \in [0,\,1]}|R_{1,n}(x)| =
O((na_n)^{-3/4}\log^{3/4}(n))$, almost surely.
\end{proposition}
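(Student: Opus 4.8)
The plan is to start from the definition $\hpi(x) = 1 - \hQ(\maxZ1 \,|\, x)$ and the representation $\pi(x) = 1 - Q(\tau_0 \,|\, x)$, so that $\hpi(x) - \pi(x) = Q(\tau_0\,|\,x) - \hQ(\maxZ1\,|\,x)$. I would split this into two pieces: the sampling error $Q(\tau_0\,|\,x) - \hQ(\tau_0\,|\,x)$ at the deterministic point $\tau_0$, plus the error coming from evaluating $\hQ$ at the random point $\maxZ1$ rather than $\tau_0$, namely $\hQ(\tau_0\,|\,x) - \hQ(\maxZ1\,|\,x)$. For the first piece, I would invoke the standard i.i.d.\ (Hadamard-type) expansion for the Beran estimator $\hQ$ — this is where the cumulative-hazard identity \eqref{hazY} is used: writing $\hHaz(t\,|\,x) = \int_{-\infty}^t \hM^1(ds\,|\,x)/\{1-\hM(s-\,|\,x)\}$ and linearizing the product-limit form of $\hQ$ around $Q$, one obtains, uniformly in $x$,
\begin{equation*}
\hQ(\tau_0\,|\,x) - Q(\tau_0\,|\,x)
 = \{1 - Q(\tau_0\,|\,x)\}\,\big\{\hHaz(\tau_0\,|\,x) - \Haz(\tau_0\,|\,x)\big\} + (\text{remainder}),
\end{equation*}
and then plugging in $\hM, \hM^1$ from \eqref{hMhM1} together with their uniform a.s.\ rates (which follow from Assumptions \ref{assumpK}, \ref{assumpG}, \ref{assumpMandM1}, \ref{assumpMandM1Densities}) converts $\hHaz - \Haz$ into the kernel-weighted average $\frac1{na_n}\sum_j K((x-X_j)/a_n)\,\zeta(x,Z_j,\delta_j,\tau_0)/g(x)$ up to a remainder of the claimed order $O((na_n)^{-3/4}\log^{3/4}(n))$. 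Since $1 - Q(\tau_0\,|\,x) = \pi(x)$, the leading term already matches the statement (note the sign: $\hpi - \pi = -(\hQ-Q)$).

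Second, I would control $\hQ(\tau_0\,|\,x) - \hQ(\maxZ1\,|\,x)$. The key input here is the strong consistency of $\maxZ1$ for $\tau_0$ from L\'opez-Cheda et al.\ (2017) (Assumption \ref{assumpRho} is exactly what makes this work), so $\maxZ1 \to \tau_0$ a.s. Because $\hQ(\cdot\,|\,x)$ is a step function that only jumps at uncensored $Z_{(j)}$ and $\maxZ1$ is the largest uncensored observation, $\hQ(\maxZ1\,|\,x) = \hQ(\tau_0\,|\,x)$ whenever there are no uncensored observations strictly between $\maxZ1$ and $\tau_0$ — which always holds by the definition of $\maxZ1$, so in fact this second term vanishes identically. (If one prefers to be cautious about the left-continuity convention in \eqref{hQ}, one instead bounds the difference by the mass $Q(\tau_0\,|\,x) - Q(\maxZ1\,|\,x)$ plus a uniform sampling term; using $\tau_0 - \maxZ1 = o(1)$ a.s.\ and the modulus of continuity of $Q$ (Assumption \ref{assumpMandM1}(i), via $L_1$) shows this is negligible relative to $(na_n)^{-3/4}\log^{3/4}(n)$.) This gives the second, almost-sure-rate assertion once we also record the uniform rate $\sup_x|\hHaz(\tau_0\,|\,x) - \Haz(\tau_0\,|\,x)| = O((na_n)^{-1/2}\log^{1/2}(n))$ a.s., which is the classical rate for kernel estimators of $M$ and $M^1$ under Assumptions \ref{assumpbw} and \ref{assumpK}--\ref{assumpMandM1Densities} (the iterated-logarithm-type term in \ref{assumpbw} is what pins down the $\log^{1/2}(n)$).

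The main obstacle is the remainder analysis in the linearization of $\hQ$: one must show that the second-order terms in the expansion of the product-limit estimator — products of $(\hM - M)$ with $(\hM^1 - M^1)$ and the error from approximating $\log(1-u) \approx -u$ in $\hQ = 1 - \prod(1 - u_j)^{\delta_j}$ — are uniformly of order $O((na_n)^{-3/4}\log^{3/4}(n))$ a.s. This requires combining the $O((na_n)^{-1/2}\log^{1/2}(n))$ a.s.\ uniform rates with a bit of care: a naive bound gives only $O((na_n)^{-1}\log(n))$, which is better than needed, so the $3/4$-power remainder is actually generous and the delicate point is instead the denominator $1 - \hM(s-\,|\,x)$, which must be bounded away from zero uniformly on $(-\infty,\tau_0]\times[0,1]$ — this is guaranteed because $1 - M(\tau_0\,|\,x) \geq \pi_l\,P(C>\tau_0\,|\,x) > 0$ by \eqref{curemodid} and Assumption \ref{assumppi}, and $\hM$ is uniformly close to $M$. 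I would also need to verify that the bias contribution $E[\hHaz(\tau_0\,|\,x)] - \Haz(\tau_0\,|\,x) = O(a_n^2)$ (standard, using \ref{assumpK}(i), \ref{assumpG}(ii), \ref{assumpMandM1}(iv)) is absorbed into the remainder, which it is precisely under the undersmoothing-adjacent condition $na_n^5\log^{-1}(n) = O(1)$ of Assumption \ref{assumpbw}.
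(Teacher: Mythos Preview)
Your decomposition and the use of the i.i.d.\ expansion of $\hQ$ at $\tau_0$ mirror the paper's argument exactly; the paper packages this as its Lemma~\ref{lemhQConsistencyModulusExpan} (third statement) applied to the term $R_1(x) = Q(\tau_0\,|\,x) - \hQ(\tau_0\,|\,x)$, and the first assertion follows from the first statement of that lemma.

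Where you diverge is in handling $\hQ(\tau_0\,|\,x) - \hQ(\maxZ1\,|\,x)$. Your observation that $\hQ(\cdot\,|\,x)$ has no uncensored jumps to the right of $\maxZ1$ is correct and gives a shorter route than the paper's: under the strict-inequality convention in \eqref{hQ} this difference is exactly the single jump at $\maxZ1$, of size at most $W_{(j^*)}(x)\big/\{1 - \hM(\maxZ1-\,|\,x)\} = O((na_n)^{-1})$ uniformly in $x$, which is $o((na_n)^{-3/4}\log^{3/4}(n))$. The paper instead writes this piece as $R_2 + R_3$ with $R_2 = [\hQ(\tau_0\,|\,x) - Q(\tau_0\,|\,x)] - [\hQ(\maxZ1\,|\,x) - Q(\maxZ1\,|\,x)]$ and $R_3 = Q(\tau_0\,|\,x) - Q(\maxZ1\,|\,x)$, bounding $R_2$ via the modulus of continuity of $\hQ - Q$ (Lemma~\ref{lemhQConsistencyModulusExpan}, second statement) and $R_3$ via the bounded density of $Q$ together with the sharp rate $\tau_0 - \maxZ1 = o((na_n)^{-3/4}\log^{3/4}(n))$ from Lemma~\ref{lemMaxZ1Consistency}. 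Your ``cautious'' fallback is this same decomposition, but note that $\tau_0 - \maxZ1 = o(1)$ alone is not enough --- you need the rate from Lemma~\ref{lemMaxZ1Consistency} --- and the sampling part requires the modulus of continuity of $\hQ - Q$, not of $Q$.

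One correction to your remainder discussion: the cross term $\int (\hM - M)\, d(\hM^1 - M^1)$ does \emph{not} admit a naive $O((na_n)^{-1}\log(n))$ bound, because the total variation of $\hM^1 - M^1$ is only $O(1)$; the crude bound is therefore just $O((na_n)^{-1/2}\log^{1/2}(n))$. Getting the $3/4$-rate here (the paper's Proposition~\ref{propAnOrder}, from Du and Akritas, 2002) genuinely requires the modulus-of-continuity result Lemma~\ref{lemhMhM1modulus}, not merely the uniform consistency rate --- so the $(na_n)^{-3/4}\log^{3/4}(n)$ remainder is not a generous overestimate but the actual output of a nontrivial argument you would still need to supply or cite.
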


In the next two results, the asymptotic orders and expansions of the
location estimator $\mhat$ and the scale estimator $\shat$ are given.

\begin{proposition} \label{propmhat}
Let Assumptions \ref{assumpbw} -- \ref{assumpRho} and Assumption
\ref{assumpJ} (i) hold. Then
\begin{equation*}
\sup_{x \in [0,\,1]} \Big| \mhat(x) - m(x) \Big|
 = O\big((na_n)^{-1/2}\log^{1/2}(n)\big),
 \quad\text{a.s.}
\end{equation*}
Additionally, if Assumption \ref{assumpJ} (ii) holds,
\begin{align*}
\mhat(x) - m(x) &= -\frac1{g(x)na_n} \sj
 K\bigg(\frac{x - X_j}{a_n}\bigg)
 \int_{-\infty}^{\tau_0}\,\zeta(x,Z_j,\delta_j,y)
 \frac{1 - Q(y\,|\,x)}{1 - \pi(x)}
 J\big(Q_u(y\,|\,x)\big)\,dy \\
&\quad + \frac{\pi(x)}{1 - \pi(x)} \frac{C_m(x)}{g(x)na_n} \sj
 K\bigg(\frac{x - X_j}{a_n}\bigg) \zeta(x,Z_j,\delta_j,\tau_0)
 + R_{2,n}(x),
\end{align*}
where $\sup_{x \in [0,\,1]} |R_{2,n}(x)| =
O((na_n)^{-3/4}\log^{3/4}(n))$, almost surely, and
\begin{equation*}
C_m(x) = \int_{-\infty}^{\tau_0}\,Q_u(y\,|\,x)
 J\big(Q_u(y\,|\,x)\big)\,dy.
\end{equation*}
\end{proposition}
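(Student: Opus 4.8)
The plan is to view $\mhat(x)-m(x)$ as a smooth functional of the pair $\big(\hQ(\cdot\,|\,x),\hpi(x)\big)$, linearised about $\big(Q(\cdot\,|\,x),\pi(x)\big)$, and then to insert the i.i.d.\ representations of its two arguments: Proposition \ref{prophpi} for $\hpi$, and the uniform strong consistency and Bahadur-type expansion of Beran's conditional estimator $\hQ$ (established in Section \ref{appendix}, in the spirit of Van Keilegom and Akritas, 1999). Writing $\xi_u(p\,|\,x)=\xi\big((1-\pi(x))p\,|\,x\big)=Q^{-1}\big((1-\pi(x))p\,|\,x\big)$, the quantity to control is $\hxi\big((1-\hpi(x))p\,|\,x\big)-\xi\big((1-\pi(x))p\,|\,x\big)$, uniformly in $x\in[0,1]$ and in $p$ in the support $[p_l,p_u]$ of $J$ (Assumption \ref{assumpJ}(i)). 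Since $J$ vanishes outside $[p_l,p_u]$, the relevant levels $(1-\pi(x))p$ stay in a fixed compact subset of $(0,1-\pi_l]$ on which the conditional Lebesgue density $q(\cdot\,|\,x)$ of $Q(\cdot\,|\,x)$ is bounded and bounded away from zero uniformly in $x$, by Assumptions \ref{assumpMandM1}, \ref{assumpMandM1Densities} and \ref{assumpRho} (the last of these covering the endpoint $\tau_0$ in the borderline case $p_u=1$). The first display of the proposition then follows by combining the uniform rate $\sup_{x,\,t\le\tau_0}\big|\hQ(t\,|\,x)-Q(t\,|\,x)\big|=O\big((na_n)^{-1/2}\log^{1/2}(n)\big)$ a.s.\ with Proposition \ref{prophpi} and this lower bound on $q$, and integrating against $J$.

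For the expansion I would use three ingredients, all valid almost surely and uniformly in $x\in[0,1]$: the conditional-quantile Bahadur representation
\[
\hxi(u\,|\,x)-\xi(u\,|\,x)=-\frac{\hQ\big(\xi(u\,|\,x)\,|\,x\big)-Q\big(\xi(u\,|\,x)\,|\,x\big)}{q\big(\xi(u\,|\,x)\,|\,x\big)}+O\big((na_n)^{-3/4}\log^{3/4}(n)\big);
\]
the Duhamel (product-integral) linearisation $\hQ(t\,|\,x)-Q(t\,|\,x)=\big(1-Q(t\,|\,x)\big)\big(\hHaz(t\,|\,x)-\Haz(t\,|\,x)\big)+O\big((na_n)^{-1}\log(n)\big)$; and the weighted-hazard expansion
\[
\hHaz(t\,|\,x)-\Haz(t\,|\,x)=\frac1{na_ng(x)}\sj K\bigg(\frac{x-X_j}{a_n}\bigg)\zeta(x,Z_j,\delta_j,t)+O\big((na_n)^{-1}\log(n)\big),
\]
the last following from the i.i.d.\ expansions of $\hM$ and $\hM^1$, the bounding functions of Assumption \ref{assumpMandM1}, and the Nadaraya--Watson weight approximation $W_j(x)=(na_ng(x))^{-1}K\big((x-X_j)/a_n\big)\big(1+O((na_n)^{-1/2}\log^{1/2}(n))\big)$ under Assumptions \ref{assumpbw}, \ref{assumpK} and \ref{assumpG}.

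It then remains to linearise in the two arguments and collect terms. In the $\hQ$-direction, evaluating the Bahadur representation at $u=(1-\pi(x))p$, integrating against $J(p)\,dp$, and carrying out the change of variables $y=\xi\big((1-\pi(x))p\,|\,x\big)$ --- under which $(1-\pi(x))p=Q(y\,|\,x)$, $p=Q_u(y\,|\,x)$ and $(1-\pi(x))\,dp=q(y\,|\,x)\,dy$ --- makes $q(y\,|\,x)$ cancel and reproduces exactly the first displayed term of the proposition. In the $\hpi$-direction, a second-order Taylor expansion of $u\mapsto\xi(u\,|\,x)$, using that $\partial_u\xi(u\,|\,x)=1/q\big(\xi(u\,|\,x)\,|\,x\big)$ is Lipschitz in $u$ (Assumptions \ref{assumpMandM1Densities} and \ref{assumpRho}), gives
\[
\xi\big((1-\hpi(x))p\,|\,x\big)-\xi\big((1-\pi(x))p\,|\,x\big)=-\frac{p\big(\hpi(x)-\pi(x)\big)}{q\big(\xi((1-\pi(x))p\,|\,x)\,|\,x\big)}+O\big((na_n)^{-1}\log(n)\big);
\]
integrating against $J$ and using the same change of variables turns the leading part into $-\{C_m(x)/(1-\pi(x))\}\big(\hpi(x)-\pi(x)\big)$, and substituting the expansion of $\hpi(x)-\pi(x)$ from Proposition \ref{prophpi} produces the second displayed term together with $-\{C_m(x)/(1-\pi(x))\}R_{1,n}(x)$. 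Smoothness of $J$ (Assumption \ref{assumpJ}(ii)) is what keeps the errors in these changes of variables, and in the $\hpi$-perturbed integrand, of the claimed order. Finally, $R_{2,n}(x)$ collects: the Bahadur remainder; the term $-\{C_m(x)/(1-\pi(x))\}R_{1,n}(x)$, which is $O\big((na_n)^{-3/4}\log^{3/4}(n)\big)$ a.s.\ by Proposition \ref{prophpi}; the error from replacing the $\hpi$-perturbed level by the true one inside the $\hQ$-part, equal to the oscillation of $\hQ(\cdot\,|\,x)-Q(\cdot\,|\,x)$ over an interval of width $O\big(|\hpi(x)-\pi(x)|\big)=O\big((na_n)^{-1/2}\log^{1/2}(n)\big)$ and hence $O\big((na_n)^{-3/4}\log^{3/4}(n)\big)$ a.s.\ by the standard oscillation modulus of the localised empirical process; and the genuinely $o\big((na_n)^{-3/4}\log^{3/4}(n)\big)$ second-order, Duhamel and weight-approximation terms. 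Bounding $\frac1{na_n}\sj\big|K((x-X_j)/a_n)\big|$ uniformly under Assumption \ref{assumpbw} then yields $\sup_{x\in[0,1]}|R_{2,n}(x)|=O\big((na_n)^{-3/4}\log^{3/4}(n)\big)$ a.s.

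The main obstacle will be the conditional-quantile Bahadur representation, uniformly in $x$ and with the almost-sure remainder order $(na_n)^{-3/4}\log^{3/4}(n)$, especially near the right endpoint $\tau_0$, where the conditional density degenerates in general --- precisely the reason Assumption \ref{assumpRho} pins down the behaviour of the subdistribution density at $\tau_0$ --- together with the coupling between $\hQ$ and $\hpi$: since the argument perturbation and the quantile process both feed into the same integral, one must control the oscillation modulus of the localised empirical process over data-dependent windows (this is what produces the $3/2$-power remainder rate) while retaining almost-sure rather than in-probability bounds, which forces law-of-the-iterated-logarithm / oscillation estimates instead of mere rates of convergence.
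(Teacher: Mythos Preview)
Your approach is correct and leads to the same expansion, but it follows a genuinely different route from the paper. You work at the quantile level, invoking a uniform Bahadur representation for $\hxi(u\,|\,x)$ and then integrating against $J$ with a change of variables $y=\xi((1-\pi(x))p\,|\,x)$ that makes the density $q$ cancel. The paper instead bypasses the quantile process entirely: it uses the elementary identity (following Akritas and Van Keilegom, 2001)
\[
\mhat(x)-m(x)=\int_{-\infty}^{\tau_0}\Big\{I\big(Q_u(t\,|\,x)\big)-I\big(\hQ(t\,|\,x)/(1-\hpi(x))\big)\Big\}\,dt,\qquad I(q)=\int_0^q J(p)\,dp,
\]
so that only Lipschitz continuity of $I$ is needed for the rate, and a one-term Taylor expansion of $I$ (using bounded $J'$) gives the linearisation directly in terms of $\hQ-Q$ and $\hpi-\pi$. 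The two leading terms then drop out immediately from the third statement of Lemma~\ref{lemhQConsistencyModulusExpan} and Proposition~\ref{prophpi}.

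The practical difference is exactly the obstacle you flagged: your route requires establishing the Bahadur remainder $O((na_n)^{-3/4}\log^{3/4}(n))$ uniformly in $x$ and in $p\in[p_l,p_u]$, which is delicate precisely when $p_u=1$ because the remainder scales like $1/q(\xi(u\,|\,x)\,|\,x)$ and the density at $\tau_0$ need not be bounded below. The paper's integral form never divides by $q$, so this boundary issue simply does not arise --- the oscillation-modulus statement for $\hQ$ (second part of Lemma~\ref{lemhQConsistencyModulusExpan}) and the pointwise expansion suffice. Your argument would go through under a mild additional assumption on $q_u$ near $\tau_0$, but the paper's trick buys you the result without it and with less intermediate machinery. (Minor point: your Duhamel and hazard-expansion remainders are stated as $O((na_n)^{-1}\log(n))$, but the paper only claims and needs $O((na_n)^{-3/4}\log^{3/4}(n))$; this does not affect your conclusion.)
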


\begin{proposition} \label{propshat}
Let Assumptions \ref{assumpbw} -- \ref{assumpRho} and Assumption
\ref{assumpJ} (i) hold. Then
\begin{equation*}
\sup_{x \in [0,\,1]} \Big| \shat(x) - s(x) \Big|
 = O\big((na_n)^{-1/2}\log^{1/2}(n)\big),
 \quad\text{a.s.}
\end{equation*}
Additionally, if Assumption \ref{assumpJ} (ii) holds,
\begin{align*}
\shat(x) - s(x) &= -\frac1{g(x)na_n} \sj
 K\bigg(\frac{x - X_j}{a_n}\bigg)
 \int_{-\infty}^{\tau_0}\,\zeta(x,Z_j,\delta_j,y)
 \frac{1 - Q(y\,|\,x)}{1 - \pi(x)}
 \frac{y - m(x)}{s(x)}
 J\big(Q_u(y\,|\,x)\big)\,dy \\
&\quad + \frac{\pi(x)}{1 - \pi(x)} \frac{C_s(x)}{g(x)na_n} \sj
 K\bigg(\frac{x - X_j}{a_n}\bigg) \zeta(x,Z_j,\delta_j,\tau_0)
 + R_{3,n}(x),
\end{align*}
where $\sup_{x \in [0,\,1]}
|R_{3,n}(x)| = O((na_n)^{-3/4}\log^{3/4}(n))$, almost surely, and 
\begin{equation*}
C_s(x) = \int_{-\infty}^{\tau_0}\,Q_u(y\,|\,x)
 \frac{y - m(x)}{s(x)}J\big(Q_u(y\,|\,x)\big)\,dy.
\end{equation*}
\end{proposition}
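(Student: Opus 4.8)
The plan is to reduce the claim to an expansion of $\hat{v}(x)-v(x)$ and then to reuse, almost verbatim, the linearization that yields Proposition~\ref{propmhat}. Since $s(x)=v^{1/2}(x)$ is bounded away from zero on $[0,1]$ and the a.s.\ uniform rates for $\hQ$, $\hpi$ and $\mhat$ force $\sup_{x\in[0,1]}|\hat{v}(x)-v(x)|=O((na_n)^{-1/2}\log^{1/2}(n))$ a.s., on an event of probability one we have $\hat{v}(x)>0$ uniformly in $x$ for all large $n$, and a first--order Taylor expansion of $t\mapsto t^{1/2}$ about $v(x)$ gives
\[
\shat(x)-s(x)=\frac{1}{2s(x)}\big(\hat{v}(x)-v(x)\big)+O\big((na_n)^{-1}\log(n)\big)
\]
uniformly in $x$, the error being negligible relative to $(na_n)^{-3/4}\log^{3/4}(n)$ under Assumption~\ref{assumpbw}. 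So it suffices to expand $\hat{v}(x)-v(x)$.

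Write $\Phi_2(H,\rho)(x)=\int_0^1\big[H^{-1}((1-\rho(x))p\,|\,x)\big]^2J(p)\,dp$ for the truncated second--moment $L$--functional, so that
\[
\hat{v}(x)-v(x)=\big[\Phi_2(\hQ,\hpi)(x)-\Phi_2(Q,\pi)(x)\big]-\big(\mhat^2(x)-m^2(x)\big).
\]
The last term is immediate: $\mhat^2(x)-m^2(x)=2m(x)(\mhat(x)-m(x))+(\mhat(x)-m(x))^2$, the squared term being $O((na_n)^{-1}\log(n))$ a.s.\ and the linear term being supplied by Proposition~\ref{propmhat}. For the $\Phi_2$--bracket I would rerun the argument behind Proposition~\ref{propmhat}: Assumption~\ref{assumpJ}(i) confines every relevant integral to a fixed compact subinterval of $(-\infty,\tau_0]$ on which $Q(\cdot\,|\,x)$ has a density bounded away from zero, so the substitution $y=H^{-1}((1-\rho(x))p\,|\,x)$ is legitimate and gives $\Phi_2(Q,\pi)(x)=\int_{-\infty}^{\tau_0}y^2J(Q_u(y\,|\,x))(1-\pi(x))^{-1}Q(dy\,|\,x)$; linearizing $\Phi_2$ in its two arguments and integrating by parts then produces a term linear in $\hQ(\cdot\,|\,x)-Q(\cdot\,|\,x)$ and a term linear in $\hpi(x)-\pi(x)$, just as for the first--moment functional $m$ but with an extra factor $y$ inside the integrands. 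Their remainders are controlled by the a.s.\ uniform rates for $\hQ-Q$ and for $\hpi-\pi$ (Proposition~\ref{prophpi}) over compact subsets of $(-\infty,\tau_0]$, which keeps them at order $(na_n)^{-3/4}\log^{3/4}(n)$.

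The last step is substitution and bookkeeping. Into the $\hQ-Q$--linear term I would insert the i.i.d.--type representation of the Beran estimator that drives Proposition~\ref{prophpi},
\[
\hQ(y\,|\,x)-Q(y\,|\,x)=\frac{1-Q(y\,|\,x)}{g(x)na_n}\sj K\Big(\frac{x-X_j}{a_n}\Big)\zeta(x,Z_j,\delta_j,y)+O\big((na_n)^{-3/4}\log^{3/4}(n)\big)
\]
(uniformly in $x$ and in $y$ over compact subsets of $(-\infty,\tau_0]$), and into the $\hpi-\pi$--linear term the expansion of Proposition~\ref{prophpi}. Collecting the $\zeta(x,Z_j,\delta_j,y)$--integral contributions from $\Phi_2$ and from $-2m(x)(\mhat(x)-m(x))$ merges the ``$y$'' of the second moment with the ``$-m(x)$'' into $(y-m(x))$; collecting the $\zeta(x,Z_j,\delta_j,\tau_0)$--contributions from the $\hpi-\pi$--linear part of $\Phi_2$ and from the second term of Proposition~\ref{propmhat} produces $\int_{-\infty}^{\tau_0}yQ_u(y\,|\,x)J(Q_u(y\,|\,x))\,dy-m(x)C_m(x)=\int_{-\infty}^{\tau_0}(y-m(x))Q_u(y\,|\,x)J(Q_u(y\,|\,x))\,dy$. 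Multiplying through by $1/(2s(x))$ turns these into exactly the two displayed terms of the statement, with $C_s(x)=s(x)^{-1}\int_{-\infty}^{\tau_0}(y-m(x))Q_u(y\,|\,x)J(Q_u(y\,|\,x))\,dy$; the claimed a.s.\ supremum bound for $\shat-s$ then drops out of the expansion, or, more directly, from the a.s.\ uniform rates for $\hQ$, $\hpi$ and $\mhat$.

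I do not expect a conceptual obstacle --- the whole derivation runs in lockstep with Proposition~\ref{propmhat} --- but the work is organizational. The two points needing care are (i) carrying the second--moment linearization of $\Phi_2$ with a remainder that is a.s.\ of order $(na_n)^{-3/4}\log^{3/4}(n)$ \emph{uniformly in} $x$, and (ii) verifying, through the change--of--variables and integration--by--parts bookkeeping, that the $\zeta(\cdot,\tau_0)$ coefficients really assemble into $C_s$ and that the square--root correction in the first paragraph is genuinely of smaller order than $R_{3,n}$. Step (ii) is where I expect most of the checking to go.
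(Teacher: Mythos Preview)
Your proposal is correct and follows essentially the same route as the paper: reduce $\shat-s$ to $(\hat v-v)/(2s)$ plus a negligible square, split $\hat v-v$ into the second--moment $L$--functional difference and $\mhat^2-m^2$, rerun the linearization of Proposition~\ref{propmhat} with the extra factor $y$, insert the Beran and $\hpi$ expansions, and collect. The only cosmetic differences are that the paper uses the exact identity $\shat-s+\tfrac{1}{2s}(\shat-s)^2=\tfrac{1}{2s}(\hat v-v)$ in place of your Taylor step, and writes the second--moment difference via $I(q)=\int_0^q J(p)\,dp$ and the substitution $t\mapsto\sqrt t$ to obtain $2\int_{-\infty}^{\tau_0}\{I(Q/(1-\pi))-I(\hQ/(1-\hpi))\}y\,dy$ before Taylor expanding $I$; this is exactly your ``extra factor $y$'' observation in a different order.
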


To continue, it is common in heteroskedastic models to place a
restriction on the curvature of the distribution function of either
the responses or the errors (see, for example, Chown, 2016, who works
with finite Fisher information for location and scale). Recall the
functions $L_1$, $L_2$ and $L_3$ from Assumption \ref{assumpMandM1}
(i) and (iii). We will require the function $L = L_1 + L_2 + L_3$ to
satisfy the following curvature restriction that is analogous to
assuming finite Fisher information for location and scale, i.e.\ we
assume that $L$ has two derivatives such that
\begin{equation} \label{assumpLcurve}
\int_{-\infty}^{\tau_0}
 \big(1 + v^2\big)\bigg\{\frac{L''(v)}{L'(v)}\bigg\}^2\,L(dv)
 < \infty,
 \quad \int_{-\infty}^{\tau_0} \big(1 + v^2\big)\,L(dv) < \infty
 \quad\text{and}\quad
 \sup_{-\infty < t \leq \tau_0} \big|tL'(t)\big| < \infty.
\end{equation}
Consequently, for sequences of real numbers $\{u_n\}_{n \geq 1}$ and
$\{v_n\}_{n \geq 1}$ satisfying $u_n \to u$ and $v_n \to v$, as $n \to
\infty$ and with $u,v \in \R$, \eqref{assumpLcurve} implies $|L(tv_n +
u_n) - L(tv + u)| = O(|u_n - u| + |v_n - v|)$, uniformly in $t$ (see the
proof of Theorem \ref{thmFhat} in Section \ref{appendix}). Note,
\eqref{assumpLcurve} is also satisfied for many distributions,
which includes the logistic distribution as in the example given above
of a distribution satisfying Assumptions \ref{assumpMandM1} (i) and
(iii). We are now prepared to state the two main results of this
section: a strong uniform representation of the difference $\hFemp -
F$ and the weak convergence of $n^{1/2}\{\hFemp - F\}$.

\begin{theorem}[{\sc strong uniform representation of $\hFemp - F$}]
\label{thmFhat}
Let Assumptions \ref{assumpbw} -- \ref{assumpJ} hold. Assume the
function $L = L_1 + L_2 + L_3$, where the functions $L_1$, $L_2$ and
$L_3$ are given in Assumption \ref{assumpMandM1}, is twice
differentiable and satisfies \eqref{assumpLcurve}. Finally, let $F$
satisfy \eqref{assumpLcurve}, i.e.\ $F$ has finite Fisher information
for both location and scale and the error density $f$ is bounded and
satisfies $\sup_{-\infty < t \leq \tau_F} |tf(t)| < \infty$. Then
\begin{equation*}
\hFemp(t) - F(t) = E_n(t) + R_{4,n}(t),
\end{equation*}
where $\sup_{-\infty < t \leq \tau_F} |R_{4,n}(t)| =
O((na_n)^{-3/4}\log^{3/4}(n))$, almost surely, and
\begin{equation*}
E_n(t) = T_n(t) - f(t)\big\{U_n + tV_n\big\} - W_n(t)
\end{equation*}
with
\begin{equation*}
T_n(t) = \frac1{n^2a_n} \sum_{1 \leq j,k \leq n}
 K\bigg(\frac{X_j - X_k}{a_n}\bigg)
 \frac{1 - Q(ts(X_j) + m(X_j)\,|\,X_j)}{1 - \pi(X_j)}
 \frac{\zeta(X_j,Z_k,\delta_k,ts(X_j) + m(X_j))}{g(X_j)},
\end{equation*}
\begin{equation*}
U_n = \frac1{n^2a_n} \sum_{1 \leq j,k \leq n}
 K\bigg(\frac{X_j - X_k}{a_n}\bigg)\frac1{g(X_j)}
 \int_{-\infty}^{\tau_0}\,\zeta(X_j,Z_k,\delta_k,y)
 \frac{1 - Q(y\,|\,X_j)}{1 - \pi(X_j)}
 J\big(Q_u(y\,|\,X_j)\big)\,dy,
\end{equation*}
\begin{equation*}
V_n = \frac1{n^2a_n} \sum_{1 \leq j,k \leq n}
 K\bigg(\frac{X_j - X_k}{a_n}\bigg) \frac1{g(X_j)}
 \int_{-\infty}^{\tau_0}\,\zeta(X_j,Z_k,\delta_k,y)
 \frac{1 - Q(y\,|\,X_j)}{1 - \pi(X_j)}
 \frac{y - m(X_j)}{s(X_j)}
 J\big(Q_u(y\,|\,X_j)\big)\,dy
\end{equation*}
and
\begin{equation*}
W_n(t) = \frac1{n^2a_n} \sum_{1 \leq j,k \leq n}
 K\bigg(\frac{X_j - X_k}{a_n}\bigg)
 \frac{\pi(X_j)}{1 - \pi(X_j)}
 \frac{F(t) - f(t)\{C_m(X_j) + tC_s(X_j)\}}{g(X_j)}
 \zeta(X_j,Z_k,\delta_k,\tau_0).
\end{equation*}
\end{theorem}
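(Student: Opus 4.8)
The plan is to start from the exact identity $Q(ts(x)+m(x)\,|\,x)/(1-\pi(x)) = F(t)$, valid for every $x \in [0,1]$ and $t \le \tau_F$, which is immediate from the standardization $Q_u(y\,|\,x) = F\big((y-m(x))/s(x)\big)$ together with $Q(\cdot\,|\,x) = (1-\pi(x))Q_u(\cdot\,|\,x)$. Writing $\phi_j = ts(X_j)+m(X_j)$ and $\hat\phi_j = t\shat(X_j)+\mhat(X_j)$, this turns the target into
$$\hFemp(t) - F(t) = \avj\bigg[\frac{\hQ(\hat\phi_j\,|\,X_j)}{1-\hpi(X_j)} - \frac{Q(\phi_j\,|\,X_j)}{1-\pi(X_j)}\bigg],$$
with \emph{no} residual empirical-average-over-$X_j$ term, precisely because the ratio being averaged has the constant mean $F(t)$ pointwise in $x$. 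I would then linearize each summand in its three sources of error — the Beran estimator $\hQ$, the shift $\hat\phi_j - \phi_j$ of the evaluation point (driven by $\mhat$ and $\shat$), and the perturbation $\hpi - \pi$ in the denominator — by repeatedly adding and subtracting the natural intermediate quantities, and match the first-order part with $E_n(t)$.

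The three ingredients feeding the linearization are: (a) a uniform-in-$(t,x)$ i.i.d.\ representation of $\hQ$, of the form $\hQ(t\,|\,x) - Q(t\,|\,x) = -\big(1-Q(t\,|\,x)\big)(na_n g(x))^{-1}\sk K\big((x-X_k)/a_n\big)\zeta(x,Z_k,\delta_k,t) + \text{rem}$ with $\sup_{t,x}|\text{rem}| = O\big((na_n)^{-3/4}\log^{3/4}(n)\big)$ a.s., which I would establish as a separate lemma (it is the natural extension to the present censored/cure setting of the Van Keilegom--Akritas expansion of the Beran estimator); (b) Propositions~\ref{prophpi}--\ref{propshat}; and (c) Taylor expansion of $y \mapsto Q(y\,|\,x)$ about $y = \phi_j$, in which the standardization identifies the conditional density at that point as $(1-\pi(x))f(t)/s(x)$, together with Taylor expansion of $z \mapsto (1-z)^{-1}$ about $z = \pi(X_j)$. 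Collecting first-order terms: the contribution of $\hQ - Q$ evaluated at the true argument yields $T_n(t)$; substituting the representations of $\mhat - m$ and $\shat - s$ into the density-weighted shift term yields $-f(t)\{U_n + tV_n\}$ together with the $C_m$, $C_s$ pieces; and substituting Proposition~\ref{prophpi} into $F(t)\big(\hpi(X_j)-\pi(X_j)\big)/(1-\pi(X_j))$ yields the $F(t)$-part of $-W_n(t)$. This produces $E_n(t)$.

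Most of $R_{4,n}(t)$ is then routine. The remainders $R_{1,n},R_{2,n},R_{3,n}$ of Propositions~\ref{prophpi}--\ref{propshat} and the remainder of the Beran expansion are already $O\big((na_n)^{-3/4}\log^{3/4}(n)\big)$ a.s.\ uniformly; every product of two first-order error terms, and every second-order Taylor remainder in $\phi$ or in $\pi$, is $O\big((na_n)^{-1}\log(n)\big) = o\big((na_n)^{-3/4}\log^{3/4}(n)\big)$, since $\sup_x|\hpi-\pi|$, $\sup_x|\mhat-m|$, $\sup_x|\shat-s|$ and $\sup_{t,x}|\hQ - Q|$ are all $O\big((na_n)^{-1/2}\log^{1/2}(n)\big)$ a.s. Uniformity of these bounds over $t$ up to $\tau_F$ (equivalently over the transferred argument up to $\tau_0$) is exactly what the curvature assumption~\eqref{assumpLcurve} on $L$ and on $F$ buys, via the displayed bound $|L(tv_n+u_n) - L(tv+u)| = O(|u_n-u| + |v_n-v|)$ uniformly in $t$ and the conditions $\sup_{t\le\tau_F}|f(t)| < \infty$, $\sup_{t\le\tau_F}|tf(t)| < \infty$, which together control the modulus of continuity of $Q(\cdot\,|\,x)$ and of the relevant envelope functions near the endpoint of the support.

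The genuine obstacle is the ``difference of differences''
$$\big[\hQ(\hat\phi_j\,|\,X_j) - Q(\hat\phi_j\,|\,X_j)\big] - \big[\hQ(\phi_j\,|\,X_j) - Q(\phi_j\,|\,X_j)\big],$$
in which $\hat\phi_j$ and $\phi_j$ are random evaluation points built from the same sample as $\hQ(\cdot\,|\,X_j)$. I would handle it by adding and subtracting $q(\phi_j\,|\,X_j)(\hat\phi_j - \phi_j)$: the exact part $Q(\hat\phi_j\,|\,X_j) - Q(\phi_j\,|\,X_j) - q(\phi_j\,|\,X_j)(\hat\phi_j-\phi_j)$ is $O\big((\hat\phi_j-\phi_j)^2\big) = O\big((na_n)^{-1}\log(n)\big)$, while $\hQ(\hat\phi_j\,|\,X_j) - \hQ(\phi_j\,|\,X_j) - q(\phi_j\,|\,X_j)(\hat\phi_j-\phi_j)$ is a centred, kernel-weighted empirical-process increment over an interval whose length is, uniformly in $j$ and in $t$, $O\big((na_n)^{-1/2}\log^{1/2}(n)\big)$ by Propositions~\ref{propmhat}--\ref{propshat}. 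Bounding this increment uniformly over $t \le \tau_F$, over $x \in [0,1]$, and over all such short intervals — which one does by passing to an almost-sure modulus-of-continuity bound (via chaining, or strong approximation) for the whole family of localized empirical processes $\big\{(na_n g(x))^{-1}\sk K((x-X_k)/a_n)\zeta(x,Z_k,\delta_k,\cdot)\big\}$ that is insensitive to inserting data-dependent arguments — produces exactly the rate $(na_n)^{-3/4}\log^{3/4}(n)$: the $3/4$ is the $1/2$-H\"older modulus exponent applied to an interval of length $(na_n)^{-1/2}\log^{1/2}(n)$, times the extra $\log^{1/2}(n)$ needed to make the bound hold almost surely and uniformly. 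Proving that modulus-of-continuity bound — where Assumptions~\ref{assumpMandM1}(i),(iii), \ref{assumpMandM1Densities} and \ref{assumpRho} together with~\eqref{assumpLcurve} are all used — is the technical core of the argument.
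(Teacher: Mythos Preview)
Your proposal is correct and follows essentially the same route as the paper. The paper writes out the decomposition more explicitly as a sum of ten named remainder terms $D_{1,n},\ldots,D_{10,n}$, but the structure is identical: linearize the ratio in $\hQ$, in the argument shift $\hat\phi_j-\phi_j$, and in $\hpi$; plug in the i.i.d.\ representations for $\hQ$, $\hpi$, $\mhat$, $\shat$ to get $E_n(t)$; and control the cross terms and second-order Taylor pieces at rate $(na_n)^{-1}\log n$. Two small remarks: (i) the sign in your Beran representation should be $+$, not $-$, to match the paper's $\zeta$ and hence $T_n(t)$; (ii) for the difference-of-differences you do not need to add and subtract $q(\phi_j\,|\,X_j)(\hat\phi_j-\phi_j)$ — the paper simply invokes the modulus-of-continuity statement for $\hQ-Q$ (its second Beran lemma) after verifying $\sup_{t,x}|L(\hat\phi_j)-L(\phi_j)| = O\big((na_n)^{-1/2}\log^{1/2}(n)\big)$ via \eqref{assumpLcurve}, which is exactly the chaining/modulus bound you describe, applied directly to the difference $\big[\hQ(\hat\phi_j)-Q(\hat\phi_j)\big]-\big[\hQ(\phi_j)-Q(\phi_j)\big]$.
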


\begin{theorem}[{\sc weak convergence of $n^{1/2}\{\hFemp - F\}$}]
\label{thmFhatWC}
Under the conditions of Theorem \ref{thmFhat}, if the bandwidth
sequence $\{a_n\}_{n \geq 1}$ is chosen such that $a_n^2 =
o(n^{-1/2})$ and $(na_n)^{-3/4}\log^{3/4}(n) = o(n^{-1/2})$ (e.g.,
$a_n = O(n^{-1/4 - \gamma} \log^{1/4 + \gamma}(n))$ for any $0 <
\gamma < 1/12$) then $n^{1/2}\{\hFemp - F\}$ is asymptotically linear,
i.e.\
\begin{equation*}
n^{1/2}\{\hFemp(t) - F(t)\} = n^{-1/2} \sj b_t(X_j,Z_j,\delta_j)
 + R_{5,n}(t),
\end{equation*}
where $\sup_{-\infty < t \leq \tau_F}|R_{5,n}(t)| = \op$ and the
influence function is
\begin{align*}
b_t(X_j,Z_j,\delta_j) &=
 \frac{1 - Q(ts(X_j) + m(X_j)\,|\,X_j)}{1 - \pi(X_j)}
 \zeta\big(X_j,Z_j,\delta_j,ts(X_j) + m(X_j)\big) \\
&\quad - f(t) \int_{-\infty}^{\tau_0}\,
 \zeta\big(X_j,Z_j,\delta_j,y\big)
 \frac{1 - Q(y\,|\,X_j)}{1 - \pi(X_j)}
 J\big(Q_u\big(y\,\big|\,X_j\big)\big)\,dy \\
&\quad - tf(t) \int_{-\infty}^{\tau_0}\,
 \zeta\big(X_j,Z_j,\delta_j,y\big)
 \frac{1 - Q(y\,|\,X_j)}{1 - \pi(X_j)}
 \frac{y - m(X_j)}{s(X_j)}
 J\big(Q_u\big(y\,\big|\,X_j\big)\big)\,dy \\
&\quad - \frac{\pi(X_j)}{1 - \pi(X_j)}
 \Big[ F(t) - f(t)\big\{C_m(X_j) + tC_s(X_j)\big\} \Big]
 \zeta\big(X_j,Z_j,\delta_j,\tau_0\big),
\end{align*}
with $-\infty < t \leq \tau_F$. Consequently, the process
$\{n^{1/2}\{\hFemp(t) - F(t)\}\,:\,-\infty < t \leq \tau_F\}$ weakly
converges to a mean zero Gaussian process $\{Z(t)\,:\,-\infty < t \leq
\tau_F\}$ with covariance function $\Sigma(t,v) =
E[b_t(X,Z,\delta)b_v(X,Z,\delta)]$ for $-\infty < t,v \leq \tau_F$.
\end{theorem}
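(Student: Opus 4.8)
The plan is to deduce the result from the strong uniform representation of Theorem~\ref{thmFhat}. Because the bandwidth is chosen so that $(na_n)^{-3/4}\log^{3/4}(n) = o(n^{-1/2})$, the remainder there satisfies $\sup_{t \le \tau_F} n^{1/2}|R_{4,n}(t)| = o(1)$ almost surely, so it is enough to show that
\[
n^{1/2} E_n(t) = n^{1/2}\{T_n(t) - f(t)(U_n + tV_n) - W_n(t)\} = n^{-1/2}\sj b_t(X_j,Z_j,\delta_j) + \op
\]
uniformly in $t \le \tau_F$, and then to invoke a functional central limit theorem for the right-hand empirical sum. Each of $T_n$, $U_n$, $V_n$, $W_n$ is a bandwidth-dependent second-order $V$-statistic of the schematic form $\frac1{n^2 a_n}\sum_{j,k} K((X_j - X_k)/a_n)\,\phi_t(X_j;Z_k,\delta_k)$, where $\phi_t(x;z,\delta) = \{c_t(x)/g(x)\}\,\zeta(x,z,\delta,\rho_t(x))$ or an analogous $J$-weighted $y$-integral of $\zeta(x,z,\delta,y)$, the coefficient functions $c_t,\rho_t$ being deterministic, smooth in $x$, and bounded uniformly in $t$; moreover $\phi_t$ is bounded uniformly in $t$ because $1 - M(\cdot\,|\,x)$ is bounded away from zero on $(-\infty,\tau_0]$ uniformly in $x$ (a consequence of \eqref{curemodid}) and because $J$ has compact support and $F$ satisfies \eqref{assumpLcurve}.

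For a generic such term I would first discard the diagonal $j = k$, which equals $K(0)(n^2 a_n)^{-1}\sj\phi_t(X_j;Z_j,\delta_j)$, hence is $O((na_n)^{-1})$ uniformly in $t$ and, multiplied by $n^{1/2}$, is $o(1)$ since $na_n^2 \to \infty$ by Assumption~\ref{assumpbw}. For the off-diagonal part I would use the Hoeffding decomposition. The algebraic engine is the identity $E[\zeta(x,Z,\delta,y)\,|\,X = x] = 0$ for all $x$ and $y$ (the two terms defining $\zeta$ have equal conditional mean when $M(\cdot\,|\,x)$ is continuous). Consequently the H\'ajek projection onto the first-coordinate observations, $a_n^{-1}\int K((X_j - x)/a_n)\,E[\phi_t(X_j;Z,\delta)\,|\,X = x]\,g(x)\,dx$, vanishes at $x = X_j$ and, by a second-order Taylor expansion that is legitimate because the \emph{conditional expectation} $x \mapsto E[\zeta(X_j,Z,\delta,y)\,|\,X = x]$ is smooth (even though the integrand is not) together with the symmetry of $K$, it is $O(a_n^2)$ uniformly in $t$; multiplied by $n^{1/2}$ this is $o(1)$ since $a_n^2 = o(n^{-1/2})$. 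The projection onto the second-coordinate observations, $a_n^{-1}\int K((x - X_k)/a_n)\,\phi_t(x;Z_k,\delta_k)\,g(x)\,dx$, is replaced by $g(X_k)\phi_t(X_k;Z_k,\delta_k)$; again the difference has mean $O(a_n^2)$ (expand the smooth conditional expectation) and per-term $L^2$-size $O(a_n^{1/2})$ (the only contribution of order $1$ comes from the event, of probability $O(a_n)$, that $\rho_t(\cdot)$ crosses $Z_k$ inside an $a_n$-neighbourhood of $X_k$), so $n^{1/2}$ times the averaged difference is $\op$. Since the factor $g(X_k)$ cancels the $1/g$ inside $\phi_t$, $g(X_k)\phi_t(X_k;Z_k,\delta_k)$ is precisely the summand of $b_t$ attached to that $V$-statistic; assembling the four contributions with the deterministic, uniformly bounded multipliers $1,-f(t),-tf(t),-1$ reconstructs $b_t$. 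Because $E[\zeta(X,Z,\delta,\cdot)\,|\,X] = 0$, each summand of $b_t$, and hence $b_t$ itself, has mean zero, so $n^{-1/2}\sj b_t(X_j,Z_j,\delta_j)$ is already centred.

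Two technical items remain. First, the completely degenerate part of each off-diagonal $V$-statistic has $L^2$-norm of order $(n a_n^{1/2})^{-1}$ by the standard variance bound for degenerate second-order $U$-statistics (the kernel $a_n^{-1}K(\cdot/a_n)$ contributes a second moment of order $a_n^{-1}$), so $n^{1/2}$ times it is $O((na_n)^{-1/2}) = o(1)$ pointwise; uniformity over $t$ follows from a maximal inequality for the degenerate $U$-process, which is available because the building blocks $t \mapsto \1[Z_k \le ts(x) + m(x)]$, $t \mapsto Q(ts(x)+m(x)\,|\,x)$ and the $J$-weighted $y$-integrals of $\zeta$ are monotone in $t$ (so the relevant function classes have bounded uniform entropy) and the summands are uniformly bounded --- a fine grid in $t$ combined with monotone interpolation and a union bound is also sufficient here. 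Second, the class $\{(x,z,\delta) \mapsto b_t(x,z,\delta) : -\infty < t \le \tau_F\}$ is uniformly bounded and, up to Lipschitz and bounded multipliers, is indexed by the same monotone-in-$t$ families, hence is Donsker; the functional central limit theorem then gives weak convergence of $n^{-1/2}\sj b_t(X_j,Z_j,\delta_j)$ to a mean-zero Gaussian process with covariance $E[b_t(X,Z,\delta)b_v(X,Z,\delta)]$, and combining this with the two $\op$ remainders establishes both claims. I expect the main obstacle to be carrying out the Hoeffding decomposition \emph{uniformly in} $t$: the pointwise projection identities, the $O(a_n^2)$ bias bounds, and the degenerate-part estimate are each routine, but handling them simultaneously over the (effectively truncated, via the support of $J$ and $\tau_F$) index set $(-\infty,\tau_F]$ requires the monotonicity and entropy bookkeeping sketched above.
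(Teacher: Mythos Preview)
Your architecture matches the paper's: start from the strong representation of Theorem~\ref{thmFhat}, treat each of $T_n,U_n,V_n,W_n$ as a second-order $V$-statistic, remove the diagonal, apply a Hoeffding/H\'ajek decomposition, identify the linear part from the second-coordinate projection (which becomes $b_t$ once $g$ cancels), and kill the first-coordinate projection and the degenerate remainder. You also correctly absorb the centering $E[E_n(t)]=O(a_n^2)$ into the first-coordinate projection rather than isolating it as a separate lemma (the paper does the latter via Lemma~\ref{lemEntmeanOrder}); both bookkeepings are equivalent.

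Where you and the paper diverge is in the machinery used to make all of this uniform in $t$. The paper proceeds by verifying that the kernel class $\mathcal{T}$ is Euclidean in the sense of Pakes--Pollard (decomposing it into $\mathcal{K}\cdot\mathcal{S}\cdot\mathcal{Z}$ and computing explicit bracketing numbers for each factor), then invokes Sherman's Corollary~7 for the full $U$-process and Corollary~4(ii) for the degenerate part; the final step, replacing $h_{a_n,t}$ by $h_{0,t}$, is handled via bracketing-entropy Donsker and the asymptotic equicontinuity criterion (van der Vaart--Wellner Corollary~2.3.12), after checking $\mathrm{Var}(h_{a_n,t}-h_{0,t})\to 0$. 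Your route is more elementary: direct $L^2$ variance bounds for the degenerate part and the $a_n\to 0$ replacement, with uniformity supplied by monotonicity-in-$t$ of the indicator and $Q$ pieces plus a grid argument. Your approach is lighter and avoids importing the Euclidean-class formalism, but the trade-off is that the monotonicity/grid step for the replacement $h_{2,a_n,t}\to h_{2,0,t}$ and for the degenerate $U$-process needs more care than your sketch indicates---in particular, the $O(a_n^{1/2})$ per-term $L^2$ bound alone does not give uniform-in-$t$ negligibility without an accompanying entropy or chaining argument, which is precisely what the paper's Donsker/equicontinuity step supplies. The paper's heavier route pays off by making that step routine once the bracketing numbers are in hand.
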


\begin{remark}[{\sc consequences for the choice of bandwidth}]
\label{remBandwidthChoice}
Theorem \ref{thmFhatWC} implies that the estimator
$\hFemp$ is a root-$n$ consistent estimator of $F$ only when the
bandwidth sequence $\{a_n\}_{n \geq 1}$ satisfies $a_n^2 =
o(n^{-1/2})$ and $(na_n)^{-3/4}\log^{3/4}(n) = o(n^{-1/2})$, which
{\em undersmoothes} the estimators $\hQ$, $\mhat$, $\shat$ and
$\hpi$. A bandwidth sequence given by $a_n = O(n^{-1/4 -
  \gamma}\log^{1/4 + \gamma}(n))$ satisfies $a_n^2 = o(n^{-1/2})$ and
$(na_n)^{-3/4}\log^{3/4}(n) = o(n^{-1/2})$ for every $0 < \gamma <
1/12$. Note that when $\gamma = 1/12$ we have $a_n =
O(n^{-1/3}\log^{1/3}(n))$, and this choice does not lead to a root-$n$
consistent estimator because this bandwidth undersmoothes by too
much. Alternatively, when $\gamma = 0$ we have $a_n =
O(n^{-1/4}\log^{1/4}(n))$, and this choice also does not lead to a
root-$n$ consistent estimator because this bandwidth does not
undersmooth by enough. Another interesting consequence highlighted by
Theorem \ref{thmFhatWC} is that the asymptotic behavior of
$n^{1/2}\{\hFemp - F\}$ does not depend on the bandwidth sequence
$\{a_n\}_{n \geq 1}$ used to construct the covariate-localized
estimators when $a_n^2 = o(n^{-1/2})$ and $(na_n)^{-3/4}\log^{3/4}(n)
= o(n^{-1/2})$.
\end{remark}

%%%%% Section: Simulation study of previous results.

\section{Applications of the previous results}
\label{applications}

Here we investigate the finite sample properties of the proposed
estimator $\hFemp$ using a simulation study. Our results indicate good
finite sample behavior even at the smaller sample size of $100$ with
multiple bandwidth configurations. This is particularly encouraging
as we did not need to perform a computationally costly bandwidth
selection procedure. Instead, we consider a variety of bandwidths of
the form $C\hat{\sigma}_{X} n^{-1/4 - \gamma}\log^{1/4 + \gamma}(n)$,
with parameters $C$ and $\gamma$ arbitrarily chosen and
$\hat{\sigma}_{X}$ denoting the sample standard deviation of the
covariates. The results given here reflect the conclusion in Remark
\ref{remBandwidthChoice}: the estimator $\hFemp$ shows insensitivity to
the choice of bandwidth parameters $\{a_n\}_{n \geq 1}$ used to
construct the local model estimators when this sequence is chosen to
appropriately undersmooth these estimators. This section is concluded
with an illustration of the previous results using a dataset collected
to study the behavior of distant metastasis in lymph-node-negative
breast cancer sufferers.

\subsection{Numerical study}
\label{simulations}
To study the finite sample performance of the estimator $\hFemp$, we
conducted simulations of $1000$ runs using sample sizes $100$, $200$,
$500$ and $1000$ under the following data generation scheme. The
covariates $X$ are uniformly distributed on the interval $[-1,\,1]$,
and the location and scale functions are chosen as
\begin{equation} \label{sims_ms}
m(x) = 1 + 2x + \frac54 \cos\big(\pi x^2\big)
\quad\text{and}\quad
s(x) = 1 + \frac12 \cos\big(\pi x\big),
\qquad x \in [-1,\,1].
\end{equation}
For the error distribution, we chose the standard normal
distribution that has been truncated at 2, centered at
zero and scaled to satisfy the cure model identifiability
requirements (see Section \ref{estimates}). An initial set of
responses $Y$ are then obtained using \eqref{modeleq}.

We work with a cure proportion function given by the logistic
distribution function with standard scaling that has been centered at
$7 / 4$, which gives about $16\%$ cured cases on average. Cure
indicators are randomly generated based on this probability function,
and whenever a cure indicator is equal to one we replace the
corresponding value of $Y$ with $\infty$. Finally, the censoring
variables $C$ are randomly generated from a mixture distribution of two
components with equal mixing probabilities, where one component
distribution is a normal distribution centered at $10$ with variation
$1 / 2$ and the other component distribution is a shifted version of
\eqref{modeleq}, with $m$ and $s$ as in \eqref{sims_ms} but now $m$ is
shifted up by $1/2$ and the model errors are standard normally
distributed (no truncation). These choices result in about $18\%$
censored values for the uncured cases. When we combine the censoring
from both cases (cured and uncured) we expect a typical dataset
generated in our simulations to present with about $31\%$ of censored
response values.

The resulting response values $Z$ are taken as the minimum of each $Y$
and $C$ and a censoring indicator $\delta$ is set equal to one
whenever $Y \leq C$ and zero otherwise. Finally, the score function
$J$ is chosen by regions of $[0,\,1]$. In the region $(0.0001,\,1]$,
$J$ is chosen as the logistic distribution function with scaling
$0.0001$, and, in the region $[0,\,0.0001]$, $J$ is set equal to
$0$. This is a smooth approximation of a step function that nearly
integrates to $1$.

The bandwidth parameter sequence $\{a_n\}_{n \geq 1}$ used to
construct the covariate-localized model estimators is of the form $a_n
= C\hat{\sigma}_{X} n^{-1/4 - \gamma} \log^{1/4 + \gamma}(n)$, where
$\hat{\sigma}_{X}$ denotes the sample standard deviation of the
covariates $X$. We investigate four situations: the constant of
proportionality $C$ is either $3/4$ or $9/8$ and the exponent
parameter $\gamma$ is either $1/16$ or $1/28$. These choices are
appropriate for Theorem \ref{thmFhatWC}, since $1/28 < 1/16 < 1/12$.

We numerically measure the performance of the estimator $\hFemp$ in two
ways. First, the asymptotic mean squared errors at $t$-values $-2$,
$-1$, $0$, $1$ and $2$ are simulated, where this performance metric is
calculated by first obtaining the simulated mean squared errors and
then multiplying these by the corresponding sample size. Second, the
asymptotic integrated mean squared error is simulated, where this
quantity is calculated similarly to the asymptotic mean squared errors
but now we integrate over $t$. These performance metrics are predicted
to be stable from Theorem \ref{thmFhatWC}.

\begin{table}
\centering
\begin{tabular}{|r|c|c c c c c|}
\hline
$n$    & $C,\,\gamma$ & $-2$ & $-1$ & $0$ & $1$ & $2$ \\
\hline
$100$  & $3/4,\,1/16$ & $0.014$ & $0.046$ & $0.078$ & $0.054$ & $0.003$ \\
       & $3/4,\,1/28$ & $0.014$ & $0.041$ & $0.073$ & $0.051$ & $0.004$ \\
       & $9/8,\,1/16$ & $0.015$ & $0.030$ & $0.082$ & $0.038$ & $0.006$ \\
       & $9/8,\,1/28$ & $0.014$ & $0.028$ & $0.079$ & $0.032$ & $0.006$ \\
\hline
$200$  & $3/4,\,1/16$ & $0.014$ & $0.040$ & $0.078$ & $0.043$ & $0.005$ \\
       & $3/4,\,1/28$ & $0.014$ & $0.036$ & $0.070$ & $0.037$ & $0.006$ \\
       & $9/8,\,1/16$ & $0.016$ & $0.030$ & $0.086$ & $0.030$ & $0.011$ \\
       & $9/8,\,1/28$ & $0.017$ & $0.026$ & $0.095$ & $0.032$ & $0.009$ \\
\hline
$500$  & $3/4,\,1/16$ & $0.015$ & $0.035$ & $0.088$ & $0.040$ & $0.009$ \\
       & $3/4,\,1/28$ & $0.016$ & $0.036$ & $0.088$ & $0.038$ & $0.010$ \\
       & $9/8,\,1/16$ & $0.019$ & $0.032$ & $0.099$ & $0.031$ & $0.018$ \\
       & $9/8,\,1/28$ & $0.020$ & $0.031$ & $0.111$ & $0.029$ & $0.023$ \\
\hline
$1000$ & $3/4,\,1/16$ & $0.017$ & $0.043$ & $0.093$ & $0.047$ & $0.012$ \\
       & $3/4,\,1/28$ & $0.017$ & $0.039$ & $0.101$ & $0.041$ & $0.014$ \\
       & $9/8,\,1/16$ & $0.019$ & $0.034$ & $0.109$ & $0.032$ & $0.026$ \\
       & $9/8,\,1/28$ & $0.023$ & $0.032$ & $0.126$ & $0.033$ & $0.033$ \\
\hline
\end{tabular}
\caption{Simulated asymptotic mean squared error values of $\hFemp$ at
  the points $-2$, $-1$, $0$, $1$ and $2$.}
\label{table_Fhat_AMSE}
\end{table}

\begin{table}
\centering
\begin{tabular}{|r|c c c c|}
\hline
\diagbox{$C,\,\gamma$}{$n$} & $100$  & $200$  & $500$  & $1000$ \\
\hline
$3/4,\,1/16$ & $0.209$ & $0.189$ & $0.199$ & $0.213$ \\
$3/4,\,1/28$ & $0.195$ & $0.178$ & $0.192$ & $0.206$ \\
$9/8,\,1/16$ & $0.175$ & $0.176$ & $0.197$ & $0.214$ \\
$9/8,\,1/28$ & $0.161$ & $0.179$ & $0.203$ & $0.236$ \\
\hline
\end{tabular}
\caption{Simulated asymptotic integrated mean squared error values of
  $\hFemp$.}
\label{table_Fhat_AMISE}
\end{table}

The results of the simulated asymptotic mean squared errors are
displayed in Table \ref{table_Fhat_AMSE}, and the results of the
simulated asymptotic mean integrated squared errors are given in
Table \ref{table_Fhat_AMISE}. The values in Table
\ref{table_Fhat_AMSE} show the estimator $\hFemp$ has asymptotically
stable pointwise mean squared errors (at the $t$-values $-2$, $-1$, $0$,
$1$ and $2$), and this metric clearly shows the estimator $\hFemp$ to
have good performance on samples as small as $100$. The values in
Table \ref{table_Fhat_AMISE} show a strong mirroring of the
conclusions drawn from Table \ref{table_Fhat_AMSE}, which indicate
that $\hFemp$ is a good estimator of $F$ even for samples as small as
$100$.

We tried other bandwidth configurations and found similar
results. However, in some cases, the performance metrics above were
affected. Specifically, choosing the constant of proportionality $C$
either too large or too small showed the most significant effects
while changing the exponent $\gamma$ showed no practical
effect. We observed that choosing $C$ too large negatively impacted
large sample behavior ($n = 1000$) and choosing $C$ too small
negatively impacted small sample behavior ($n = 100$). This effect can
be seen in Table \ref{table_Fhat_AMISE} for the rows corresponding to
$C = 9 / 8$, but it is not very pronounced in this case. This indeed
reflects the conclusions of Remark \ref{remBandwidthChoice} that state
the bandwidth should undersmooth but not undersmooth by too
much. Nevertheless, the estimator $\hFemp$ does show insensitivity to
the choice of bandwidth when this parameter is appropriately
chosen. We therefore expect that a simple rule-of-thumb approach can
be an effective strategy for choosing an appropriate bandwidth, where
one (say) compares plots of several estimators of $F$ and chooses a
bandwidth parameter among those that produced very similar estimators
of $F$.

Summing up, we have numerically demonstrated that the estimator
$\hFemp$ has good finite sample performance with samples sizes as
small as $100$. Our numerical results show the bandwidth parameter
sequence $\{a_n\}_{n \geq 1}$ used to construct the
covariate-localized model estimators does not appear to have strong
influence on the behavior of $\hFemp$ even at smaller samples. A
possible explanation for this behavior is that we are averaging over
the local estimators of $F$; see \eqref{Fhat} for the definition of
$\hFemp$. The estimator $\hFemp$ shows strong potential for use in
applications where the unknown error distribution function $F$
requires estimation; e.g.\ testing model assumptions, building
confidence intervals / bands, etc.

%%%%% Section: Analysis of breast cancer data.

\subsection{Analysis of breast cancer data}
In this section we illustrate the estimators of the components from
model \eqref{modeleq}, i.e.\ $\pi$, $m$, $s$ and $F$, using a set of
data obtained from frozen samples of tumour tissue stored at the
Erasmus Medical Center (Rotterdam, Netherlands) of patients who were
treated for lymph-node negative breast cancer during $1980$--$95$.
It has been observed that about $60\%$--$70\%$ of patients treated
are cured (see page $671$ of Wang et al., 2005). These data were
collected to study distant metastasis of lymph-node-negative breast
cancer sufferers, where it is desirable to (for example) identify
medical treatments that increase the amount of time before (possible)
metastasis occurs. See Wang et al.\ (2005) for a complete description
of these data.

\begin{figure}
\includegraphics[width=0.60\textwidth]{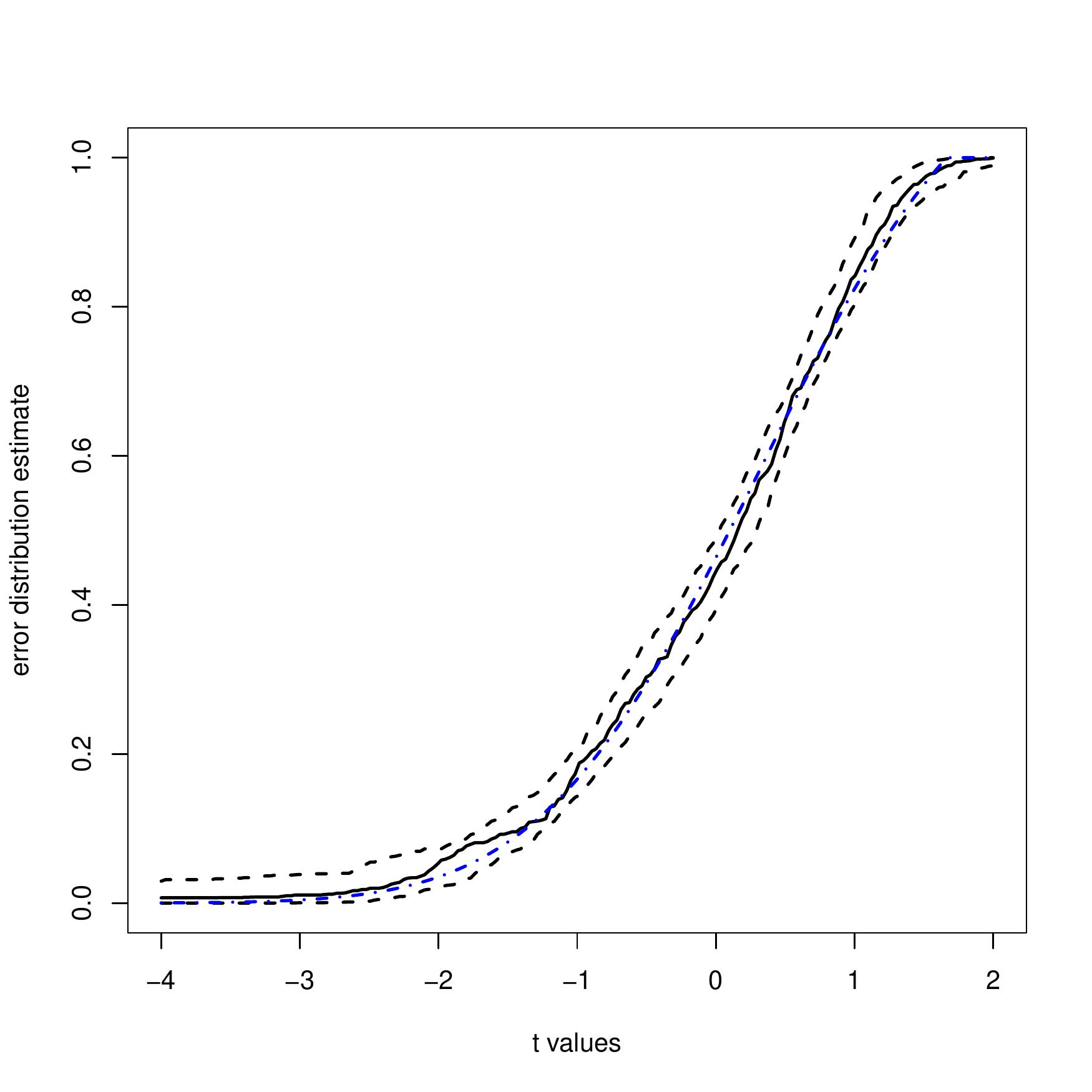}
\vspace{1ex}
\caption{A plot of the error distribution function estimate (solid
  black curve) overlaid by approximate $95\%$ pointwise
  confidence intervals (dashed curves) and a plausible truncated normal
  error distribution (blue dot-dashed curve).}
\label{veridex_analysis}
\end{figure}

Our analysis considers three variables measured: the number of days
before metastasis was detected (observed or censored), a censoring
indicator and the patient's age (in years). There are $286$ original
data values, and about $63\%$ of the reported time lengths before
metastasis had been detected were censored at large values (indicating
a possible cure effect). The ages of the patients range between $26$
and $83$ years with a median age of $52$ years. The oldest patient
with an uncensored response is $78$ years. Since there were only
two (censored) observations for patients older than $80$, these cases
were removed because the data was too sparse in this region to obtain
good model estimates, and our analysis considers the remaining $284$
patients. We are interested in a nonparametric location-scale
modelling of the log-transformed time length before detectable
metastasis by the patient's age that accounts for both the presence of
censoring and an apparent presence of a cure effect.

We obtain from $a_{n} = C\hat{\sigma}_{X}n^{-1/4 - \gamma} \log^{1/4 +
\gamma}(n)$, with $n = 284$ and $\hat{\sigma}_{X} \approx 12.3$
years and choosing $C = 9/8$ and $\gamma = 1/28$, a bandwidth of
$4.51$ years. This choice corresponds with our simulations
from the previous section and corresponds with an appropriate choice
for Theorem \ref{thmFhatWC}. The score function $J$ is chosen as in
the previous section (a smooth approximation of a step function).

Pointwise confidence intervals for $\hFemp$ are built using a
bootstrap as follows. We begin by sampling completely at random and
with replacement from the ages of the patients (covariates). We then
construct bootstrap uncured responses using model \eqref{modeleq},
where $m$ is replaced by the estimator $\mhat$, $s$ is replaced by the
estimator $\shat$ and the model errors are sampled independently from
$\hFemp$ and then appropriately centered and scaled (see our
discussion on identifiability in Section \ref{estimates}). In
addition, a bootstrap cure indicator is independently generated from
$\hpi$ for each selected age. When this indicator is equal to one the
associated bootstrap uncured response value is replaced by
$\infty$. Bootstrap censoring variables are then independently sampled
from the local censoring distribution estimators at each selected
age. The bootstrap responses are then obtained by taking the minimum
between the resulting augmented uncured responses (that include
$\infty$ as possible values) and the selected censoring
variable. Whenever a response is not censored we set a bootstrap
censoring indicator equal to one and zero otherwise. The bootstrap
distribution of $\hFemp$ is simulated on $300$ replicate bootstrap
datasets, and we obtain our confidence intervals using the approximate
quantiles from the simulated bootstrap distribution.

The confidence intervals considered in this analysis have
approximately $95\%$ coverage.

Figure \ref{veridex_analysis} shows a plot of the error
distribution function estimate, which appears to be truncated near
$1.7$ and (comparing the solid black curve with the blue dot-dashed
curve) this estimate appears to describe very well a (truncated)
normal distribution. In conclusion, the estimator
$\hFemp$ appears to be very promising for use in applications, and
this estimator does not require a computationally costly bandwidth
selection procedure to be effective.

%%%%% Section: Appendix

\section{Appendix}
\label{appendix}

From the discussion in Section \ref{estimates} we observed that
\eqref{curemodid} implies $\tau_0 < \tau_M(x)$, $x \in [0,\,1]$. This
means that we may view our cure model as a special case of
right--censored response models. In the following result, we specify
the asymptotic order of the estimators $\hM$ and $\hM^1$, which follows
directly from Lemma 4.2 of Du and Akritas (2002).

\begin{lemma}
\label{lemhMhM1Consistency}
Let Assumptions \ref{assumpbw} -- \ref{assumpG}, \ref{assumpMandM1}
(ii) and (iv) hold. Then
\begin{equation*}
\sup_{x \in [0,\,1]} \sup_{-\infty < t \leq \tau_0}
 \big|\hM(t\,|\,x) - M(t\,|\,x)\big| 
 = O\big((na_n)^{-1/2}\log^{1/2}(n)\big),
\qquad\text{a.s.},
\end{equation*}
and
\begin{equation*}
\sup_{x \in [0,\,1]} \sup_{-\infty \leq t \leq \tau_0}
 \big|\hM^1(t\,|\,x) - M^1(t\,|\,x)\big|
 = O\big((na_n)^{-1/2}\log^{1/2}(n)\big),
\qquad\text{a.s.}
\end{equation*}
\end{lemma}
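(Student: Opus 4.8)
The plan is to deduce the result directly from Lemma~4.2 of Du and Akritas (2002), which establishes precisely this iterated-logarithm–type uniform rate for Stone-type (Nadaraya--Watson) estimators of a conditional distribution function in the right-censored regression setting. As noted in the paragraph preceding the statement, condition \eqref{curemodid} gives $\tau_0 < \tau_M(x)$ for every $x \in [0,\,1]$, so on the region $(-\infty,\,\tau_0] \times [0,\,1]$ our cure model is a special instance of the right-censored model treated there, and the estimators $\hM$, $\hM^1$ of \eqref{hMhM1} are exactly the ones considered. Thus the substance of the proof is the verification that the present hypotheses imply those of the cited lemma.

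First, Assumption \ref{assumpK} supplies a symmetric, compactly supported kernel (the bounded second derivative is more than enough), and Assumption \ref{assumpG} supplies a design density $g$ that is bounded, bounded away from zero, and twice continuously differentiable --- exactly what is needed to control the normalizing denominator $\sk K((x - X_k)/a_n)$ uniformly in $x$ and to expand the smoothing bias. Second, the smoothness in $x$ required there is furnished by Assumption \ref{assumpMandM1}: parts (ii) and (iv) give continuous first partial derivatives $\dot M$, $\dot M^1$ and bounded second partial derivatives, yielding a second-order Taylor expansion in $x$ and hence a bias of order $a_n^2$, uniformly in $(t,\,x) \in (-\infty,\,\tau_0]\times[0,\,1]$. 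Third, Assumption \ref{assumpbw} is precisely the bandwidth regime in which Lemma~4.2 of Du and Akritas (2002) operates: $(na_n^2)^{-1}\log\log(n) = O(1)$ bounds the stochastic oscillation from below, while $na_n^5\log^{-1}(n) = O(1)$ forces the squared bias $a_n^4$ to be of order at most $(na_n)^{-1}\log(n)$, so that the variance contribution $\{(na_n)^{-1}\log(n)\}^{1/2}$ dominates. Combining the uniform stochastic bound --- obtained from an exponential (Bernstein-type) inequality applied over a sufficiently fine grid in $x$, together with the monotonicity of $t \mapsto \1(Z_j \leq t)$ so that the supremum over $t$ costs only a polynomial factor absorbed into the logarithm --- with the $O(a_n^2)$ bias bound gives the claimed $O((na_n)^{-1/2}\log^{1/2}(n))$ rate for $\hM - M$.

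The argument for $\hM^1 - M^1$ is identical: the summands $\delta_j\1(Z_j \leq t)W_j(x)$ remain bounded and uniformly monotone in $t$, so the same exponential inequality and grid argument apply verbatim, and the relevant smoothness of $M^1$ in $x$ is again Assumption \ref{assumpMandM1} (ii) and (iv). No genuinely new estimate is needed, and indeed there is no real obstacle beyond careful bookkeeping: the only points requiring attention are the matching of our notation and regularity conditions to those of Du and Akritas (2002) and --- should one prefer a self-contained derivation --- the uniform treatment of the supremum over the unbounded range $t \leq \tau_0$, which is routine since $M(\cdot\,|\,x)$ and $M^1(\cdot\,|\,x)$ are distribution-type functions of total mass at most one whose tails are controlled by $L_1(\tau_0) < \infty$ in Assumption \ref{assumpMandM1} (i).
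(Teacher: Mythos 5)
Your proposal is correct and takes the same approach as the paper, which also simply invokes Lemma~4.2 of Du and Akritas (2002) and notes that the cure model reduces to the right-censored setting on $(-\infty,\,\tau_0]\times[0,\,1]$; you merely spell out the (routine) verification of that lemma's hypotheses, which the paper leaves implicit. One small slip: in your final parenthetical you appeal to $L_1(\tau_0) < \infty$ from Assumption~\ref{assumpMandM1}~(i), which is not among the lemma's stated hypotheses, but the point is moot since $M$ and $M^1$ are already bounded by one, so no equicontinuity-in-$t$ condition is needed for this consistency rate (it is only required later for the modulus-of-continuity result, Lemma~\ref{lemhMhM1modulus}).
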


In addition, we can specify the asymptotic order for moduli of
continuity for the estimators $\hM$ and $\hM^1$, which follows
directly by applications of Lemma 4.4 of Du and Akritas (2002).
\begin{lemma}
\label{lemhMhM1modulus}
Let Assumptions \ref{assumpbw} -- \ref{assumpMandM1} hold. Set $L =
L_1 + L_2$ and $b_n = O((na_n)^{-1/2}\log^{1/2}(n))$. Then, almost
surely,
\begin{equation*}
\sup_{x \in [0,\,1]} \sup_{-\infty < s, t \leq \tau_0}
 \sup_{|L(t) - L(s)| \leq b_n} \big|
 \hM(t\,|\,x) - M(t\,|\,x) - \hM(s\,|\,x) + M(s\,|\,x)\big| 
 = O\big((na_n)^{-3/4}\log^{3/4}(n)\big)
\end{equation*}
and, now with $L = L_1 + L_3$,
\begin{equation*}
\sup_{x \in [0,\,1]} \sup_{-\infty < s, t \leq \tau_0}
 \sup_{|L(t) - L(s)| \leq b_n} \big|
 \hM^1(t\,|\,x) - M^1(t\,|\,x) - \hM^1(s\,|\,x) + M^1(s\,|\,x)\big| 
 = O\big((na_n)^{-3/4}\log^{3/4}(n)\big).
\end{equation*}
\end{lemma}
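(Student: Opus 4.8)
The plan is to obtain the statement as a direct application of Lemma~4.4 of Du and Akritas (2002), applied once to $\hM$ and once to $\hM^1$, after recognizing that the estimators in \eqref{hMhM1} fall within the scope of that reference and matching our hypotheses to theirs. First I would note that $\hM(\cdot\,|\,x)$ and $\hM^1(\cdot\,|\,x)$ are precisely the Stone-type local empirical (sub)distribution function estimators built from the Nadaraya--Watson weights $W_j(x)$ for which Du and Akritas (2002) develop their oscillation machinery. The quantitative inputs to their Lemma~4.4 are supplied here by Assumption~\ref{assumpbw} (which gives $(na_n^2)^{-1}\log\log(n) = O(1)$ and $na_n^5\log^{-1}(n) = O(1)$), Assumption~\ref{assumpK} (smoothness of $K$), and Assumption~\ref{assumpG} (regularity and positivity of the covariate density $g$); moreover the window width $b_n = O((na_n)^{-1/2}\log^{1/2}(n))$ is exactly the uniform convergence rate of $\hM$ and $\hM^1$ obtained in Lemma~\ref{lemhMhM1Consistency}, which is the scale at which such a modulus of continuity is used later.

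Second, the essential point is to identify the bounding function that makes the modulus-of-continuity estimate hold uniformly in $x$. I would decompose $\hM(t\,|\,x) - M(t\,|\,x)$ into a centered part, whose increments in $t$ are governed by the oscillation of $M(\cdot\,|\,x)$ and hence by the bounding function $L_1$ from Assumption~\ref{assumpMandM1}(i), and a bias part $\sj\{M(t\,|\,X_j) - M(t\,|\,x)\}W_j(x)$ created by the covariate-localization. Taylor-expanding the bias part in the covariate, which is legitimate by the bounded first and second $x$-derivatives in Assumptions~\ref{assumpMandM1}(ii) and (iv), produces a leading term proportional to $\{\dot M(t\,|\,x) - \dot M(s\,|\,x)\}\sj(X_j - x)W_j(x)$, whose modulus in $t$ is controlled by $L_2$ via Assumption~\ref{assumpMandM1}(iii); combined with a second-order remainder of smaller order, this forces the relevant bounding function for $\hM$ to be $L = L_1 + L_2$. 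Lemma~4.4 of Du and Akritas (2002) then estimates the oscillation of $\hM - M$ over $L$-intervals of width at most $b_n$ by $O\big(((na_n)^{-1}b_n\log n)^{1/2}\big)$, almost surely, and inserting $b_n = O((na_n)^{-1/2}\log^{1/2}(n))$ gives the rate $O((na_n)^{-3/4}\log^{3/4}(n))$. Running the identical argument with $M^1$ in place of $M$ and $\dot M^1$ in place of $\dot M$, and invoking the bounding function $L_3$ for $\dot M^1$ from Assumption~\ref{assumpMandM1}(iii), yields the second assertion with $L = L_1 + L_3$.

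I expect the only genuine work to be the verification in the second step, namely that the bias increment generated by the local averaging is absorbed by enlarging the bounding function from $L_1$ to $L_1 + L_2$ (respectively $L_1 + L_3$): one must check that $\{\dot M(t\,|\,x) - \dot M(s\,|\,x)\}\sj(X_j - x)W_j(x)$ and the second-order Taylor remainder are, uniformly in $(s,t,x)$, of the claimed order under Assumption~\ref{assumpbw}, using in particular $na_n^5\log^{-1}(n) = O(1)$ to dominate the bias-type contributions. This is precisely the content carried inside the proof of Lemma~4.4 of Du and Akritas (2002); once their hypotheses are matched as above, the remaining ingredients --- a discretization of the index set $(s,t,x)$, a Bernstein-type exponential inequality for the centered increments, and a Borel--Cantelli step --- are routine and contained in that reference.
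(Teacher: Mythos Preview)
Your proposal is correct and matches the paper's approach: the paper states that this lemma ``follows directly by applications of Lemma 4.4 of Du and Akritas (2002)'' without further detail, and your write-up simply supplies the verification that our Assumptions \ref{assumpbw}--\ref{assumpMandM1} furnish the inputs required there, together with the reason the bounding functions must be enlarged to $L_1+L_2$ (respectively $L_1+L_3$). Nothing more is needed.
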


In order to specify the asymptotic order and modulus of continuity for
the hazard estimator $\hHaz$, we need to state a technical result
common for censored response models. Define
\begin{equation*}
A_n(t\,|\,x) = \int_{-\infty}^{t}
 \frac{\hM(s-\,|\,x) - M(s-\,|\,x)}{\{1 - M(s-\,|\,x)\}^2}
 \big\{\hM^1(ds\,|\,x) - M^1(ds\,|\,x)\big\}.
\end{equation*}
In the following result, we specify the asymptotic order of $A_n$. The
proof of this result follows along the same lines as the proof of
Proposition 4.1 of Du and Akritas (2002), and it is therefore omitted.
\begin{proposition}
\label{propAnOrder}
Let Assumptions \ref{assumpbw} -- \ref{assumpMandM1} hold. Then
\begin{equation*}
\sup_{x \in [0,\,1]} \sup_{-\infty < t \leq \tau_0} \big|A_n(t\,|\,x)\big|
 = O\big((na_n)^{-3/4}\log^{3/4}(n)\big),
 \qquad\text{a.s.}
\end{equation*}
\end{proposition}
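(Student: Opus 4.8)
The plan is to reproduce the standard ``block-partitioning'' estimate for cross-terms of the type that arises when one linearises a conditional Kaplan--Meier--Beran estimator, exactly as in the proof of Proposition~4.1 of Du and Akritas (2002); the two essential inputs are Lemmas~\ref{lemhMhM1Consistency} and~\ref{lemhMhM1modulus}. Write $\psi_n(s\,|\,x) = \{\hM(s-\,|\,x) - M(s-\,|\,x)\}\{1 - M(s-\,|\,x)\}^{-2}$, so that $A_n(t\,|\,x) = \int_{-\infty}^{t}\psi_n(s\,|\,x)\,\{\hM^1(ds\,|\,x) - M^1(ds\,|\,x)\}$. First I would record two preliminary facts. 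By \eqref{curemodid} one has $M(\tau_0\,|\,x) < 1$ for every $x$, while $x \mapsto M(\tau_0\,|\,x)$ is continuous on the compact set $[0,1]$ by Assumption~\ref{assumpMandM1}(ii); hence $\sup_{x}M(\tau_0\,|\,x) < 1$ and $\{1 - M(s-\,|\,x)\}^{-2}$ is bounded uniformly over $(s,x)\in(-\infty,\tau_0]\times[0,1]$. Combined with Lemma~\ref{lemhMhM1Consistency} this gives $\|\psi_n\|_{\infty} = O(b_n)$ almost surely, where $b_n := (na_n)^{-1/2}\log^{1/2}(n)$. Next I would work with $L := L_1 + L_2 + L_3$, which by Assumption~\ref{assumpMandM1} is continuous, nondecreasing and finite at $\tau_0$, and which dominates both $L_1 + L_2$ and $L_1 + L_3$; consequently Lemma~\ref{lemhMhM1modulus} applies over the set $\{|L(t) - L(s)| \le b_n\}$ simultaneously to $\hM - M$ and to $\hM^1 - M^1$.

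I would then fix a partition $-\infty = t_0 < t_1 < \cdots < t_{N_n} = \tau_0$ with $L(t_i) - L(t_{i-1}) \le b_n$ for every $i$ and $N_n = O(b_n^{-1})$ blocks $I_i := (t_{i-1},t_i]$, which is possible since $L$ is continuous, nondecreasing and bounded on $(-\infty,\tau_0]$. On each block Lemma~\ref{lemhMhM1modulus} gives, uniformly in $i$ and $x$ and almost surely,
\[
\sup_{s\in I_i}\big|(\hM^1 - M^1)(s\,|\,x) - (\hM^1 - M^1)(t_{i-1}\,|\,x)\big| = O\big((na_n)^{-3/4}\log^{3/4}(n)\big),
\]
and likewise with $\hM, M$ in place of $\hM^1, M^1$. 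Combining the latter with the elementary bound $|\{1 - M(s-\,|\,x)\}^{-2} - \{1 - M(t_{i-1}\,|\,x)\}^{-2}| = O(L_1(s) - L_1(t_{i-1})) = O(b_n)$ on $I_i$ (Assumption~\ref{assumpMandM1}(i) together with the uniform lower bound on $1 - M$) and $\|\hM - M\|_{\infty} = O(b_n)$, a product-rule decomposition of $\psi_n$ yields the oscillation bound $\sup_{s\in I_i}|\psi_n(s\,|\,x) - \psi_n(t_{i-1}\,|\,x)| = O((na_n)^{-3/4}\log^{3/4}(n))$ almost surely, the leftover $O(b_n^2)$ term being negligible against this rate because $na_n/\log(n) \to \infty$ under Assumption~\ref{assumpbw}.

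The core step is then to decompose, for $t$ lying in the block $I_\ell$,
\[
A_n(t\,|\,x) = \sum_{i=1}^{\ell}\int_{I_i\cap(-\infty,t]}\big\{\psi_n(s\,|\,x) - \psi_n(t_{i-1}\,|\,x)\big\}\,d(\hM^1 - M^1)(s\,|\,x) + \sum_{i=1}^{\ell}\psi_n(t_{i-1}\,|\,x)\,\Delta_i(t\,|\,x),
\]
with $\Delta_i(t\,|\,x) = (\hM^1 - M^1)(t_i\wedge t\,|\,x) - (\hM^1 - M^1)(t_{i-1}\,|\,x)$. For the first sum I would use that $\hM^1(\cdot\,|\,x)$ and $M^1(\cdot\,|\,x)$ are nondecreasing, so $\int_{I_i}|d(\hM^1 - M^1)(s\,|\,x)| \le \{\hM^1(t_i\,|\,x) - \hM^1(t_{i-1}\,|\,x)\} + \{M^1(t_i\,|\,x) - M^1(t_{i-1}\,|\,x)\}$, which telescopes over $i$ to at most $\hM^1(\tau_0\,|\,x) + M^1(\tau_0\,|\,x) = O(1)$; inserting the $\psi_n$-oscillation bound from the previous paragraph makes the first sum $O((na_n)^{-3/4}\log^{3/4}(n))$. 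For the second sum I would use $|\psi_n(t_{i-1}\,|\,x)| \le \|\psi_n\|_{\infty} = O(b_n)$ and the per-block increment bound $|\Delta_i(t\,|\,x)| = O((na_n)^{-3/4}\log^{3/4}(n))$, summed over the $N_n = O(b_n^{-1})$ blocks, which produces $O(b_n^{-1})\cdot O(b_n)\cdot O((na_n)^{-3/4}\log^{3/4}(n)) = O((na_n)^{-3/4}\log^{3/4}(n))$. All these estimates are uniform in $t \le \tau_0$ and $x \in [0,1]$ and hold on a single almost-sure event, which is the claim.

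I expect the only genuinely delicate point to be the rate bookkeeping in the last step: the naive bound $|A_n(t\,|\,x)| \le \|\psi_n\|_{\infty}\cdot\mathrm{TV}(\hM^1 - M^1) = O(b_n)$ falls short of the target by a factor $(na_n)^{-1/4}\log^{1/4}(n)$, and recovering that factor relies on (i) using the modulus-of-continuity Lemma~\ref{lemhMhM1modulus} — not mere consistency — to control both the per-block oscillation of $\psi_n$ and the per-block increment of $\hM^1 - M^1$ by $(na_n)^{-3/4}\log^{3/4}(n)$, and (ii) keeping the global control $\mathrm{TV}(\hM^1 - M^1) = O(1)$ coming from the monotonicity of $\hM^1$ and $M^1$, so that the first sum does not erode the gain. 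The left/right-continuity conventions in the Stieltjes integrals — the integrand involves $M(s-\,|\,x)$ and $\hM(s-\,|\,x)$, the integrator is $d(\hM^1 - M^1)$, and $\hM^1(\cdot\,|\,x)$ is a pure-jump function whose jumps are of order $(na_n)^{-1} = o((na_n)^{-3/4}\log^{3/4}(n))$ — are handled exactly as in Du and Akritas (2002) and do not affect the rates.
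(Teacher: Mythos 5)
Your proof is correct and reproduces the block-partitioning argument of Du and Akritas (2002, Proposition 4.1), which is exactly the proof the paper omits with a reference. The two-sum decomposition, the use of Lemma~\ref{lemhMhM1modulus} with $L = L_1 + L_2 + L_3$ (which dominates both $L_1+L_2$ and $L_1+L_3$ since $L_3$ and $L_2$ are nonnegative and nondecreasing), the telescoping total-variation bound from the monotonicity of $\hM^1$ and $M^1$, and the final bookkeeping $N_n b_n = O(1)$ all match the intended argument.
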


With the results above we can state the asymptotic order and modulus of
continuity for the hazard estimator $\hHaz$.
\begin{lemma}
\label{lemhHazConsistencyModulus}
Let Assumptions \ref{assumpbw} -- \ref{assumpMandM1} hold. Then
\begin{equation*}
\sup_{x \in [0,\,1]} \sup_{-\infty < t \leq \tau_0}
 \big|\hHaz(t\,|\,x) - \Haz(t\,|\,x)\big|
 = O\big((na_n)^{-1/2}\log^{1/2}(n)\big),
 \qquad\text{a.s.}
\end{equation*}
Set $L = L_1 + L_2 + L_3$ and $b_n =
O((na_n)^{-1/2}\log^{1/2}(n))$. Then
\begin{equation*}
\sup_{x \in [0,\,1]} \sup_{-\infty < s,t \leq \tau_0}
 \sup_{|L(t) - L(s)| \leq b_n}
 \big| \hHaz(t\,|\,x) - \Haz(t\,|\,x)
 - \hHaz(s\,|\,x) + \Haz(s\,|\,x) \big|
 = O\big((na_n)^{-3/4}\log^{3/4}(n)\big),
 \qquad\text{a.s.}
\end{equation*}
\end{lemma}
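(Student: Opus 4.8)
\noindent\emph{Proof proposal.} The plan is to linearize $\hHaz-\Haz$ around $(M,M^1)$ with the standard hazard-estimator decomposition and then read off both assertions from Lemmas \ref{lemhMhM1Consistency} and \ref{lemhMhM1modulus} together with Proposition \ref{propAnOrder}. Recalling that $\hHaz$ is obtained from \eqref{hazY} by substituting $\hM$ and $\hM^1$ for $M$ and $M^1$, i.e.\ $\hHaz(t\,|\,x) = \int_{-\infty}^t \{1-\hM(s-\,|\,x)\}^{-1}\hM^1(ds\,|\,x)$, I would first add and subtract, expand $\{1-\hM\}^{-1} - \{1-M\}^{-1} = (\hM-M)\{1-M\}^{-2} + (\hM-M)^2\{1-M\}^{-2}\{1-\hM\}^{-1}$, and split $\hM^1(ds\,|\,x) = M^1(ds\,|\,x) + \{\hM^1-M^1\}(ds\,|\,x)$ in the resulting leading term. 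Identifying the quantity $A_n(t\,|\,x)$ defined before Proposition \ref{propAnOrder}, this gives
\begin{equation*}
\hHaz(t\,|\,x) - \Haz(t\,|\,x) = \int_{-\infty}^t \frac{\hM^1(ds\,|\,x) - M^1(ds\,|\,x)}{1 - M(s-\,|\,x)} + \int_{-\infty}^t \frac{\hM(s-\,|\,x) - M(s-\,|\,x)}{\{1 - M(s-\,|\,x)\}^2}\,M^1(ds\,|\,x) + A_n(t\,|\,x) + \rho_n(t\,|\,x),
\end{equation*}
where $\rho_n$ collects the quadratic remainder. Using Lemma \ref{lemhMhM1Consistency} and that $1 - M(\cdot\,|\,x)$ (and eventually $1 - \hM(\cdot\,|\,x)$) is bounded away from zero on $(-\infty,\tau_0]$ uniformly in $x$ — a consequence of \ref{assumppi}, \ref{assumpRho} and $\tau_0 < \tau_M(x)$ — one gets $\sup_x\sup_{t\le\tau_0}|\rho_n(t\,|\,x)| = O((na_n)^{-1}\log(n))$, a.s., and this is $o((na_n)^{-3/4}\log^{3/4}(n))$ since the ratio equals $(na_n)^{-1/4}\log^{1/4}(n)$; by Proposition \ref{propAnOrder}, $\sup_x\sup_{t\le\tau_0}|A_n(t\,|\,x)| = O((na_n)^{-3/4}\log^{3/4}(n))$, a.s. Thus $A_n$ and $\rho_n$ already contribute at most the target remainder order, at both scales, and only the two integral terms above need attention.

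For the first (consistency) claim, integrate the first integral by parts and use that $s\mapsto\{1-M(s-\,|\,x)\}^{-1}$ has total variation on $(-\infty,\tau_0]$ bounded uniformly in $x$: since $M(\cdot\,|\,x)$ is continuous by \ref{assumpMandM1Densities}, $d\{1-M(s-\,|\,x)\}^{-1} = \{1-M(s-\,|\,x)\}^{-2}M(ds\,|\,x) \le C\,L_1(ds)$ with $L_1(\tau_0)<\infty$ by \ref{assumpMandM1}(i). This yields a bound of order $\sup_x\sup_s|\hM^1(s\,|\,x)-M^1(s\,|\,x)| = O((na_n)^{-1/2}\log^{1/2}(n))$, a.s. The second integral is bounded by $\sup_x\sup_s|\hM(s\,|\,x)-M(s\,|\,x)|$ times $\int_{-\infty}^{\tau_0}\{1-M(s-\,|\,x)\}^{-2}M^1(ds\,|\,x)\le C\,L_1(\tau_0)<\infty$ (using $M^1(ds\,|\,x)\le M(ds\,|\,x)\le L_1(ds)$), hence is again $O((na_n)^{-1/2}\log^{1/2}(n))$, a.s. This proves the first statement.

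For the modulus statement, fix $s<t$ with $|L(t)-L(s)|\le b_n$ (the case $t<s$ is symmetric) and subtract the above representation at $s$ from that at $t$; the increments of $A_n$ and $\rho_n$ are bounded by twice their suprema, hence of order $(na_n)^{-3/4}\log^{3/4}(n)$. In the first integral over $(s,t]$, integration by parts produces the boundary term $\{1-M(t-\,|\,x)\}^{-1}\{(\hM^1-M^1)(t\,|\,x)-(\hM^1-M^1)(s\,|\,x)\} + (\hM^1-M^1)(s\,|\,x)\{(1-M(t-\,|\,x))^{-1}-(1-M(s-\,|\,x))^{-1}\}$ together with $-\int_{(s,t]}(\hM^1-M^1)(r-\,|\,x)\,d\{1-M(r-\,|\,x)\}^{-1}$. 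The first piece is $O((na_n)^{-3/4}\log^{3/4}(n))$ by the $\hM^1$-part of Lemma \ref{lemhMhM1modulus} — applicable because $|(L_1+L_3)(t)-(L_1+L_3)(s)|\le|L(t)-L(s)|\le b_n$ by monotonicity of $L_2$ — while the second piece is $O((na_n)^{-1/2}\log^{1/2}(n))\cdot O(L_1(t)-L_1(s)) = O((na_n)^{-1}\log(n))$ using \ref{assumpMandM1}(i) and $L_1(t)-L_1(s)\le b_n$, and the remaining integral is $\le\sup_x\sup_r|\hM^1(r\,|\,x)-M^1(r\,|\,x)|\cdot C(L_1(t)-L_1(s)) = O((na_n)^{-1}\log(n))$. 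In the second integral over $(s,t]$, write $\hM(r-\,|\,x)-M(r-\,|\,x) = \{(\hM-M)(r-\,|\,x)-(\hM-M)(s\,|\,x)\} + (\hM-M)(s\,|\,x)$: by the $\hM$-part of Lemma \ref{lemhMhM1modulus} (with $L_1+L_2$) the first bracket is $O((na_n)^{-3/4}\log^{3/4}(n))$ uniformly over $r\in(s,t]$, and integrating it against $\{1-M(r-\,|\,x)\}^{-2}M^1(dr\,|\,x)$ (total mass $\le C(L_1(t)-L_1(s))\le C$) preserves this order, whereas the $(\hM-M)(s\,|\,x)$ term contributes $O((na_n)^{-1/2}\log^{1/2}(n))\cdot O(L_1(t)-L_1(s)) = O((na_n)^{-1}\log(n))$. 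Collecting everything, the increment is $O((na_n)^{-3/4}\log^{3/4}(n))$, a.s., uniformly in $x$ and in $s,t$ with $|L(t)-L(s)|\le b_n$.

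I expect the main obstacle to be the careful bookkeeping of left limits and the integration-by-parts formula for Lebesgue--Stieltjes integrals against $\hM^1-M^1$ (which has jumps), together with checking that each of the individual $L_i$-modulus hypotheses needed to invoke Lemma \ref{lemhMhM1modulus} is genuinely implied by the single condition $|L(t)-L(s)|\le b_n$ (this uses only monotonicity of $L_1,L_2,L_3$). Once the decomposition above is in place, the rest reduces to Lemmas \ref{lemhMhM1Consistency}, \ref{lemhMhM1modulus}, Proposition \ref{propAnOrder}, Assumption \ref{assumpMandM1}(i), and the uniform-in-$x$ bounds (away from zero, and on total variation) for $\{1-M(\cdot\,|\,x)\}^{-1}$ on $(-\infty,\tau_0]$.
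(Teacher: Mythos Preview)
Your proposal is correct and follows essentially the same route as the paper: the paper decomposes $\hHaz-\Haz$ into three pieces $R_1+R_2+R_3$ with denominators $\{1-\hM\}\{1-M\}$, whereas you Taylor-expand one step further to get denominators $\{1-M\}^2$ plus a quadratic remainder $\rho_n$, which has the advantage that your cross term is exactly the $A_n$ of Proposition~\ref{propAnOrder}; both versions then control the $d(\hM^1-M^1)$ integral via integration by parts and invoke Lemmas~\ref{lemhMhM1Consistency}, \ref{lemhMhM1modulus} and Proposition~\ref{propAnOrder}. One bookkeeping remark: you invoke Assumptions~\ref{assumpMandM1Densities} and~\ref{assumpRho}, which are \emph{not} among the hypotheses of this lemma, but the facts you need (continuity of $M(\cdot\,|\,x)$ and the uniform lower bound on $1-M(\cdot\,|\,x)$ on $(-\infty,\tau_0]$) already follow from Assumption~\ref{assumpMandM1}(i) (since $L_1$ is continuous) together with~\ref{assumppi} and \eqref{curemodid}, so the argument stands once the citations are corrected.
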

\begin{proof}
Beginning with the first assertion, we can write $\hHaz(t\,|\,x) -
\Haz(t\,|\,x) = R_1(t\,|\,x) + R_2(t\,|\,x) + R_3(t\,|\,x)$, where
\begin{equation*}
R_1(t\,|\,x) = \int_{-\infty}^t
 \frac{\hM(s-\,|\,x) - M(s-\,|\,x)}{
 \{1 - \hM(s-\,|\,x)\}\{1 - M(s-\,|\,x)\}} M^1(ds\,|\,x),
\end{equation*}
\begin{equation*}
R_2(t\,|\,x) = \int_{-\infty}^t
 \frac{\hM^1(ds\,|\,x) - M^1(ds\,|\,x)}{1 - M(s-\,|\,x)}
\end{equation*}
and
\begin{equation*}
R_3(t\,|\,x) = \int_{-\infty}^t 
 \bigg\{ \frac{\hM(s-\,|\,x) - M(s-\,|\,x)}{
 \{1 - \hM(s-\,|\,x)\}\{1 - M(s-\,|\,x)\}} \bigg\}
 \big\{\hM^1(ds\,|\,x) - M^1(ds\,|\,x)\big\}.
\end{equation*}
Condition \eqref{curemodid} implies that $M(t\,|\,x) < 1$ for all $t
\in (-\infty,\,\tau_0]$ and all $x \in [0,\,1]$. In addition, the
assumptions of Lemma \ref{lemhMhM1Consistency} are satisfied, which
implies that $\sup_{x \in [0,\,1]} \sup_{-\infty < t \leq \tau_0}
|\hM(t\,|\,x) - M(t\,|\,x)| = O((na_n)^{-1/2}\log^{1/2}(n))$, almost
surely. Combining these results, it is easy to see that $\sup_{x \in
  [0,\,1]} \sup_{-\infty < t \leq \tau_0} |R_1(t\,|\,x)| =
O((na_n)^{-1/2}\log^{1/2}(n))$, almost surely. Using integration by parts,
we can write $R_2(t\,|\,x)$ as
\begin{equation*}
\frac{\hM^1(t\,|\,x) - M^1(t\,|\,x)}{1 - M(t-\,|\,x)}
 - \int_{-\infty}^t 
 \frac{\hM^1(s\,|\,x) - M^1(s\,|\,x)}{\{1 - M(s-\,|\,x)\}^2}
 M(ds\,|\,x).
\end{equation*}
Similar to the first term, both of the terms in the display above have
the order $O((na_n)^{-1/2}\log^{1/2}(n))$, almost surely. The
assumptions of Proposition \ref{propAnOrder} are satisfied. Combining
the statement of this result with the uniform, strong consistency of
$\hM$ for $M$, it follows that $\sup_{x \in [0,\,1]} \sup_{-\infty < t
\leq \tau_0} |R_3(t\,|\,x)| = O((na_n)^{-3/4}\log^{3/4}(n)) =
o((na_n)^{-1/2}\log^{1/2}(n))$, almost surely. This concludes the
proof of the first assertion. The second assertion follows by similar
arguments and is therefore omitted.
\end{proof}

With the results above on the cumulative hazard estimator $\hHaz$, we
can state the asymptotic orders of the strong, uniform consistency and
modulus of continuity of the Beran estimator $\hQ$. In addition, the
uniform, strong i.i.d.\ representation of the Beran estimator $\hQ$ is
given.

\begin{lemma}[{\sc properties of the Beran estimator $\hQ$}]
\label{lemhQConsistencyModulusExpan}
Let Assumptions \ref{assumpbw} -- \ref{assumpMandM1Densities}
hold. Then
\begin{equation*}
\sup_{x \in [0,\,1]} \sup_{-\infty < t \leq \tau_0}
 \big|\hQ(t\,|\,x) - Q(t\,|\,x)\big|
 = O\big((na_n)^{-1/2}\log^{1/2}(n)\big),
 \qquad\text{a.s.}
\end{equation*}
Set $L = L_1 + L_2 + L_3$ and $b_n =
O((na_n)^{-1/2}\log^{1/2}(n))$. Then
\begin{equation*}
\sup_{x \in [0,\,1]} \sup_{-\infty < s,t \leq \tau_0}
 \sup_{|L(t) - L(s)| \leq b_n}
 \big| \hQ(t\,|\,x) - Q(t\,|\,x)
 - \hQ(s\,|\,x) + Q(s\,|\,x) \big|
 = O\big((na_n)^{-3/4}\log^{3/4}(n)\big),
 \qquad\text{a.s.}
\end{equation*}
Finally, with $\zeta$ defined in Section \ref{asymptotics},
\begin{equation*}
\hQ(t\,|\,x) - Q(t\,|\,x) = \frac{1 - Q(t\,|\,x)}{g(x)na_n} \sj
 K\bigg(\frac{x - X_j}{a_n}\bigg) \zeta(x,Z_j,\delta_j,t) + R_n(t\,|\,x),
\end{equation*}
where $\sup_{x \in [0,\,1]} \sup_{t \in (-\infty,\,\tau_0]}
|R_n(t\,|\,x)| = O((na_n)^{-3/4}\log^{3/4}(n))$, almost surely.
\end{lemma}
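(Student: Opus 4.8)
The plan is to transfer the properties of the cumulative hazard estimator $\hHaz$ established in Lemma~\ref{lemhHazConsistencyModulus} to the Beran estimator $\hQ$ via the product-limit representation. Since $Q(\cdot\,|\,x)$ is continuous on $(-\infty,\,\tau_0]$ by Assumption~\ref{assumpMandM1Densities}, one has $1 - Q(t\,|\,x) = \exp\{-\Haz(t\,|\,x)\}$, whereas \eqref{hQ} reads $1 - \hQ(t\,|\,x) = \prod_{Z_{(j)} < t}\{1 - \Delta_j(x)\}$ with jumps $\Delta_j(x) = \delta_{(j)}W_{(j)}(x)\big/\sum_{k \geq j} W_{(k)}(x) = \hHaz(Z_{(j)}\,|\,x) - \hHaz(Z_{(j)}-\,|\,x)$. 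Condition \eqref{curemodid} gives $\inf_x\{1 - M(\tau_0\,|\,x)\} > 0$, so by Lemma~\ref{lemhMhM1Consistency} the quantity $\sum_{k \geq j} W_{(k)}(x) = 1 - \hM(Z_{(j)}-\,|\,x)$ is, almost surely, bounded away from zero uniformly over $Z_{(j)} \leq \tau_0$, $x \in [0,\,1]$ and large $n$; hence $\max_j \Delta_j(x) = O((na_n)^{-1})$ almost surely, uniformly in $x$. Using $|\log(1-u) + u| \leq u^2$ for $u \leq 1/2$ and $\sum_{Z_{(j)} < t}\Delta_j(x) = \hHaz(t-\,|\,x) \leq \sup_x \hHaz(\tau_0\,|\,x) < \infty$ (the last bound from the first part of Lemma~\ref{lemhHazConsistencyModulus} and $\sup_x\Haz(\tau_0\,|\,x)<\infty$), I obtain $\big|\log\{1 - \hQ(t\,|\,x)\} + \hHaz(t\,|\,x)\big| \leq \max_j\Delta_j(x)\,\sup_x\hHaz(\tau_0\,|\,x) = O((na_n)^{-1})$, and therefore $1 - \hQ(t\,|\,x) = \exp\{-\hHaz(t\,|\,x)\} + O((na_n)^{-1})$ uniformly in $(t,\,x)$, almost surely.

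The first two assertions then follow from a Taylor expansion of the exponential. As $\hHaz(\cdot\,|\,\cdot)$ and $\Haz(\cdot\,|\,\cdot)$ are uniformly bounded on $(-\infty,\,\tau_0] \times [0,\,1]$, the previous step gives $\hQ(t\,|\,x) - Q(t\,|\,x) = (1 - Q(t\,|\,x))\{\hHaz(t\,|\,x) - \Haz(t\,|\,x)\} + \rho(t\,|\,x)$ with $\sup_{x,t}|\rho(t\,|\,x)| = O((na_n)^{-1}\log(n))$ (the quadratic exponential remainder together with the consistency in Lemma~\ref{lemhHazConsistencyModulus}, plus the $O((na_n)^{-1})$ above). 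The leading term immediately yields the uniform consistency of $\hQ$. For the modulus of continuity, subtract this expansion at $s$ from that at $t$: the term $(1 - Q(t\,|\,x))[\{\hHaz(t\,|\,x) - \Haz(t\,|\,x)\} - \{\hHaz(s\,|\,x) - \Haz(s\,|\,x)\}]$ is $O((na_n)^{-3/4}\log^{3/4}(n))$ on $\{|L(t) - L(s)| \leq b_n\}$ by the modulus part of Lemma~\ref{lemhHazConsistencyModulus}; the term $\{Q(s\,|\,x) - Q(t\,|\,x)\}\{\hHaz(s\,|\,x) - \Haz(s\,|\,x)\}$ is $O((na_n)^{-1}\log(n))$ because Assumption~\ref{assumpMandM1}(i) and the relations discussed after the assumptions in Section~\ref{asymptotics} give $|Q(t\,|\,x) - Q(s\,|\,x)| \leq C\,|L(t) - L(s)| \leq Cb_n$; and $\rho(t\,|\,x) - \rho(s\,|\,x)$ is $O((na_n)^{-1}\log(n))$. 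Since Assumption~\ref{assumpbw} forces $na_n/\log(n) \to \infty$, each error is $o((na_n)^{-3/4}\log^{3/4}(n))$, proving the second assertion.

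It remains to prove the i.i.d.\ representation, which is the main step. Starting from $\hQ(t\,|\,x) - Q(t\,|\,x) = (1 - Q(t\,|\,x))\{\hHaz(t\,|\,x) - \Haz(t\,|\,x)\} + O((na_n)^{-1}\log(n))$ uniformly in $(t,\,x)$, I would extract a representation of $\hHaz - \Haz$ from the proof of Lemma~\ref{lemhHazConsistencyModulus}: in the decomposition $\hHaz - \Haz = R_1 + R_2 + R_3$ there, $R_3 = O((na_n)^{-3/4}\log^{3/4}(n))$ by Proposition~\ref{propAnOrder}; in $R_1$ one replaces $\{1 - \hM(s-\,|\,x)\}^{-1}$ by $\{1 - M(s-\,|\,x)\}^{-1}$ at a cost of $O((na_n)^{-1}\log(n))$ (Lemma~\ref{lemhMhM1Consistency}, boundedness of $\{1-M\}^{-1}$ and of $M^1(\tau_0\,|\,x)$). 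Writing, via $\sum_j W_j(x) = 1$, the identities $\hM^1(t\,|\,x) - M^1(t\,|\,x) = \sum_j W_j(x)\{\delta_j\1[Z_j \leq t] - M^1(t\,|\,x)\}$ and $\hM(s-\,|\,x) - M(s-\,|\,x) = \sum_j W_j(x)\{\1[Z_j < s] - M(s-\,|\,x)\}$, substituting into the linearized $R_1 + R_2$, interchanging sum and integral, and simplifying the $s$-integrals (using $\1[Z_j < s] = 1 - \1[Z_j > s]$ a.s.\ for continuous $Z_j$ and $\int_{-\infty}^{t}\{1 - M(s-\,|\,x)\}^{-1}M^1(ds\,|\,x) = \Haz(t\,|\,x)$) yields $R_1 + R_2 = \sum_j W_j(x)\zeta(x,Z_j,\delta_j,t) + O((na_n)^{-1}\log(n))$ uniformly. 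Finally, with $\hat{g}(x) = (na_n)^{-1}\sum_k K((x - X_k)/a_n)$ one has $\sum_j W_j(x)\zeta(x,Z_j,\delta_j,t) = \{na_n\hat{g}(x)\}^{-1}\sum_j K((x - X_j)/a_n)\zeta(x,Z_j,\delta_j,t)$; since $\sup_x|\hat{g}(x) - g(x)| = O((na_n)^{-1/2}\log^{1/2}(n))$ a.s.\ under Assumptions~\ref{assumpbw}, \ref{assumpK}, \ref{assumpG}, the estimator $\hat{g}$ is uniformly bounded away from zero for large $n$, and $\sup_{x,t}\big|\sum_j W_j(x)\zeta(x,Z_j,\delta_j,t)\big| = O((na_n)^{-1/2}\log^{1/2}(n))$, replacing $\hat{g}(x)$ by $g(x)$ costs only $O((na_n)^{-1}\log(n))$. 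Collecting all terms into $R_n$ gives $\hQ(t\,|\,x) - Q(t\,|\,x) = \{na_n g(x)\}^{-1}(1 - Q(t\,|\,x))\sum_j K((x - X_j)/a_n)\zeta(x,Z_j,\delta_j,t) + R_n(t\,|\,x)$ with the stated order of $R_n$.

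The main obstacle is the uniform-in-$(t,x)$ control of the passage between $\hQ$ and $\exp\{-\hHaz\}$ near $\tau_0$ — bounding $\max_j\Delta_j(x)$ and $\sum_{Z_{(j)}<t}\Delta_j(x)^2$ — together with verifying that every second-order term arising from the successive linearizations (the exponential remainder $\rho$, the $1 - \hM \to 1 - M$ substitution, and the renormalization of the Nadaraya--Watson weights) is genuinely $o((na_n)^{-3/4}\log^{3/4}(n))$ uniformly; here it is precisely \eqref{curemodid}, which keeps $\inf_x\{1 - M(\tau_0\,|\,x)\} > 0$, together with Assumption~\ref{assumpbw} that make the bookkeeping close. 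Alternatively, because $\tau_0 < \tau_M(x)$ renders the present situation a special case of right-censored regression on $(-\infty,\,\tau_0]$, all three assertions can be read off from the analogous statements for the Beran estimator in Van Keilegom and Akritas (1999) and Du and Akritas (2002).
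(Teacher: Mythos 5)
Your proposal is correct, but it takes a genuinely different route to the same linearization. The paper starts from the Duhamel-type integral identity
\begin{equation*}
\hQ(t\,|\,x) - Q(t\,|\,x)
 = \{1 - Q(t\,|\,x)\}\big\{\hHaz(t\,|\,x) - \Haz(t\,|\,x)\big\}
 + \{1 - Q(t\,|\,x)\}\int_{-\infty}^t
 \frac{Q(s-\,|\,x) - \hQ(s-\,|\,x)}{1 - Q(t\,|\,x)}\,
 d\big(\hHaz - \Haz\big)(s\,|\,x),
\end{equation*}
as in Theorem 3.2 of Du and Akritas (2002), bounds the second integral by referring to expressions (5.5)--(5.8) of that reference, and then simply cites Theorem 3.2 of Du and Akritas for the i.i.d.\ representation. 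You instead pass through the logarithm of the product-limit form, bound $|\log\{1 - \hQ\} + \hHaz| \le \max_j \Delta_j(x)\,\hHaz(\tau_0\,|\,x) = O((na_n)^{-1})$ using the uniform lower bound on $1 - \hM(\cdot\,|\,x)$ guaranteed by \eqref{curemodid} and Lemma~\ref{lemhMhM1Consistency}, and then Taylor-expand $\exp\{-\hHaz\}$ around $\exp\{-\Haz\} = 1 - Q$; this gives the same leading term $(1 - Q)(\hHaz - \Haz)$ with an even tighter $O((na_n)^{-1}\log n)$ remainder. What you gain is a self-contained derivation of the i.i.d.\ representation: unrolling $\hHaz - \Haz = R_1 + R_2 + R_3$ from the proof of Lemma~\ref{lemhHazConsistencyModulus}, linearizing $R_1$ (cost $O((na_n)^{-1}\log n)$), writing the Nadaraya--Watson sums, using $\1(Z_j < s) = 1 - \1(Z_j > s)$ a.s.\ to cancel the two $\Haz(t\,|\,x)$ terms, and renormalizing $\hat g \to g$ yields exactly $\sum_j W_j(x)\zeta(x,Z_j,\delta_j,t) + O((na_n)^{-3/4}\log^{3/4}(n))$ — this verifies the form and sign of $\zeta$ explicitly rather than deferring to Du--Akritas. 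One small point worth flagging in the modulus-of-continuity step: the bound $|Q(t\,|\,x) - Q(s\,|\,x)| \le C\,|L(t) - L(s)|$ is not stated directly in Assumption~\ref{assumpMandM1}(i), which controls $M$, not $Q$; it does follow from $1 - M(t\,|\,x) = (1 - Q(t\,|\,x))P(C > t\,|\,X = x)$ and the lower bound on $P(C > \tau_0\,|\,X = x)$ implied by \eqref{curemodid}, but you should spell that one line out since it is where the cure-model identifiability condition genuinely enters.
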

\begin{proof}
Write, as in the proof of Theorem 3.2 of Du and Akritas (2002),
\begin{align*}
\hQ(t\,|\,x) - Q(t\,|\,x) &=
 \{1 - Q(t\,|\,x)\} \int_{-\infty}^t
 \frac{1 - Q(t-\,|\,x)}{1 - Q(t\,|\,x)} \,
 d\big(\hHaz(t\,|\,x) - \Haz(t\,|\,x)\big) \\
&\quad + \{1 - Q(t\,|\,x)\} \int_{-\infty}^t
 \frac{Q(t-\,|\,x) - \hQ(t-\,|\,x)}{1 - Q(t\,|\,x)} \,
 d\big(\hHaz(t\,|\,x) - \Haz(t\,|\,x)\big).
\end{align*}
Since $Y_u$ is a continuous random variable, it follows that $1 -
Q(t-\,|\,x) = 1 - Q(t\,|\,x)$ and the first term on the right-hand side
in the display above is equal to $\{1 - Q(t\,|\,x)\}\{\hHaz(t\,|\,x) -
\Haz(t\,|\,x)\}$, which, by the first statement of Lemma
\ref{lemhHazConsistencyModulus}, has the order
$O((na_n)^{-1/2}\log^{1/2}(n))$, almost surely, uniformly over $(t,x)
\in (-\infty,\,\tau_0] \times [0,\,1]$. Following the arguments in the
proof of Theorem 3.2 of Du and Akritas (2002), the second term has the
order $O((na_n)^{-3/4}\log^{3/4}(n))$, almost surely, uniformly over
$(t,x) \in (-\infty,\,\tau_0] \times [0,\,1]$ (see the decomposition
of $(\text{B})/S_x(t)$ into expressions $(5.5)$--$(5.8)$ in that
article, where these expressions are shown to have the desired
order). This completes the proof of the first assertion. The second
assertion follows by a similar argument and is therefore
omitted. Finally, the last assertion follows directly by an
application of Theorem 3.2 of Du and Akritas (2002).
\end{proof}

L\'opez-Cheda et al. (2017) give the asymptotic order of strong
consistency of $\maxZ1$ for $\tau_0$, which we restate here for
convenience.

\begin{lemma}[{\sc Lemma 5 of L\'opez-Cheda et al. (2017)}]
\label{lemMaxZ1Consistency}
Let Assumption \ref{assumpRho} hold. Then
\begin{equation*}
n^\alpha \big(\tau_0 - \maxZ1\big) = o(1),
 \quad\text{a.s.},
 \quad \alpha \in (0,\,1).
\end{equation*}
Note, for a sequence of bandwidths $a_n$ satisfying $a_n \to 0$, as $n
\to \infty$, such that $na_n^5\log^{-1}(n) = O(1)$, it follows that
\begin{equation*}
\tau_0 - \maxZ1 = o\big((na_n)^{-3/4}\log^{3/4}(n)\big),
 \qquad\text{a.s.}
\end{equation*}
\end{lemma}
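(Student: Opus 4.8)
The plan is to combine a deterministic upper bound on $\maxZ1$ with a small-ball estimate for the largest uncensored observation, and then pass from a rate in probability to one almost surely via the Borel--Cantelli lemma; this is essentially the argument of L\'opez-Cheda et al.\ (2017), which I reproduce here for completeness.

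The first step is to record that $\maxZ1 \le \tau_0$ almost surely: whenever $\delta_j = 1$ one has $Z_j = Y_j \le C_j < \infty$, so $Y_j$ is a realization of the finite (uncured) response and therefore $Y_j \le \tau_u(X_j) \le \tau_0$ by \eqref{curemodid}. Consequently $\tau_0 - \maxZ1 \ge 0$, and it suffices to bound this quantity from above. The second step is to show, for $0 < \ve \le \ve_0$ with $\ve_0$ small, the small-ball lower bound
\begin{equation*}
p_\ve := P\big(Z \ge \tau_0 - \ve,\ \delta = 1\big) \ge c\,\ve
\end{equation*}
for some constant $c > 0$. Here one writes $p_\ve = P(\delta = 1)\,P(\tau_0 - \ve \le Z \le \tau_0 \mid \delta = 1)$ using $Z \le \tau_0$ a.s.\ on $\{\delta = 1\}$ from the first step, and invokes Assumption \ref{assumpRho}, which supplies both $P(\delta = 1) > 0$ and a density of $P(Z \le t \mid \delta = 1)$ that is bounded away from zero on $[\tau_0 - \ve_0,\,\tau_0]$; hence the conditional probability above is at least a constant multiple of $\ve$. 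Since the pairs $(Z_j,\delta_j)$ are i.i.d., this yields
\begin{equation*}
P\big(\maxZ1 < \tau_0 - \ve\big) = (1 - p_\ve)^n \le e^{-c n \ve},
 \qquad 0 < \ve \le \ve_0 .
\end{equation*}

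The third step is to choose $\ve = \eta\, n^{-\alpha}$ for a fixed $\eta > 0$ and $\alpha \in (0,1)$; since $\ve \le \ve_0$ for all large $n$, the previous display gives $P\big(n^\alpha(\tau_0 - \maxZ1) > \eta\big) \le e^{-c\eta\, n^{1-\alpha}}$, which is summable in $n$ because $1 - \alpha > 0$. Borel--Cantelli then shows $n^\alpha(\tau_0 - \maxZ1) \le \eta$ eventually, almost surely, and letting $\eta$ run through a sequence decreasing to zero gives $n^\alpha(\tau_0 - \maxZ1) = o(1)$ a.s., which is the first assertion. For the displayed consequence, note that $na_n^5\log^{-1}(n) = O(1)$ forces $a_n = O\big((\log(n)/n)^{1/5}\big)$, hence $na_n = O\big(n^{4/5}\log^{1/5}(n)\big)$ and $(na_n)^{3/4}\log^{-3/4}(n) = O\big(n^{3/5}\log^{-3/5}(n)\big) = O(n^{3/5})$; multiplying by $\tau_0 - \maxZ1$ and applying the first assertion with $\alpha = 3/5 \in (0,1)$ gives $\tau_0 - \maxZ1 = o\big((na_n)^{-3/4}\log^{3/4}(n)\big)$ a.s.

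The only genuinely probabilistic ingredient, and the one place where anything could go wrong, is the small-ball bound $p_\ve \ge c\ve$: the subdistribution of the uncensored responses must put an order-$\ve$ amount of mass within $\ve$ of the right endpoint $\tau_0$. This is precisely what the clause ``with the density function of $P(Z \le t \mid \delta = 1)$ bounded away from zero at $t = \tau_0$'' in Assumption \ref{assumpRho} is designed to guarantee; once it is in hand the rest is a routine i.i.d.\ extreme-value computation, so I do not anticipate a real obstacle.
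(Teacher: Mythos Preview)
Your argument is correct. The paper itself does not supply a proof of this lemma: it is merely restated from L\'opez-Cheda et al.\ (2017) ``for convenience,'' so there is no in-paper proof to compare against. Your reconstruction---the deterministic upper bound $\maxZ1 \le \tau_0$, the small-ball estimate $P(Z \ge \tau_0 - \ve,\ \delta = 1) \ge c\ve$ from Assumption~\ref{assumpRho}, the i.i.d.\ product bound, and Borel--Cantelli---is exactly the standard extreme-value argument one would expect, and your derivation of the second display from the bandwidth condition via $\alpha = 3/5$ is clean and correct.
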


We can now state the proof of Proposition \ref{prophpi} from Section
\ref{asymptotics}.

\begin{proof}[{\sc Proof of Proposition \ref{prophpi}}]
Beginning with the first assertion, we can write $\hpi(x) - \pi(x) =
R_1(x) + R_2(x) + R_3(x)$, where
\begin{equation*}
R_1(x) = Q(\tau_0\,|\,x) - \hQ(\tau_0\,|\,x),
\end{equation*}
\begin{equation*}
R_2(x) = \hQ(\tau_0\,|\,x) - Q(\tau_0\,|\,x)
 - \hQ(\maxZ1\,|\,x) + Q(\maxZ1\,|\,x)
\end{equation*}
and
\begin{equation*}
R_3(x) = Q(\tau_0\,|\,x) - Q(\maxZ1\,|\,x)
\end{equation*}
The assumptions of Lemma \ref{lemhQConsistencyModulusExpan} are
satisfied, and the first statement of this result implies that
$\sup_{x \in [0,\,1]} |R_1(x)| = O((na_n)^{-1/2}\log^{1/2}(n))$,
almost surely. Since the assumptions of Lemma
\ref{lemMaxZ1Consistency} are satisfied, we obtain the desired
$\sup_{x \in [0,\,1]} |R_2(x)| = O((na_n)^{-3/4}\log^{3/4}(n))$,
almost surely, by combining the statement of Lemma
\ref{lemMaxZ1Consistency} with the second statement of Lemma
\ref{lemhQConsistencyModulusExpan}. Finally, combining the result of
Lemma \ref{lemMaxZ1Consistency} with the fact that $Q$ has a bounded
density shows that $\sup_{x \in [0,\,1]} |R_3(x)| =
o((na_n)^{-3/4}\log^{3/4}(n))$, almost surely. This finishes the proof
of the first assertion. The second assertion follows from applying the
third statement of Lemma \ref{lemhQConsistencyModulusExpan} to
$R_1(x)$.
\end{proof}

Next we give sketches of the proofs of Proposition \ref{propmhat} and
Proposition \ref{propshat} from Section \ref{asymptotics} that follow
along the same lines of arguments given in the proofs of Proposition
3, Proposition 6 and Proposition 7 of Akritas and Van Keilegom (2001).

\begin{proof}[{\sc Proof of Proposition \ref{propmhat}}]
We begin with proving the first assertion that $\|\mhat -
m\|_{\infty}$ has the order $O((na_n)^{-1/2}\log^{1/2}(n))$, almost
surely, writing $\|\cdot\|_{\infty}$ for the supremum norm. Following
the procedure in the proof of Proposition 3 from Akritas and Van
Keilegom (2001), write $I(q) = \int_0^q J(p)\,dp$. We have that
$\mhat(x) - m(x)$ is equal to
\begin{align} \nonumber
&\int_0^1 \hxi\big(\big\{1 - \hpi(x)\big\}p\,\big|\,x\big)J(p)\,dp
 - \int_0^1 \xi(\{1 - \pi(x)\}p\,|\,x)J(p)\,dp \\ \nonumber
&= \int_0^1 \int_{-\infty}^{\tau_0}
 \1\bigg[\frac{\hQ(t\,|\,x)}{1 - \hpi(x)} \leq p\bigg]J(p)\,dt\,dp
 - \int_0^1 \int_{-\infty}^{\tau_0}
 \1\bigg[\frac{Q(t\,|\,x)}{1 - \pi(x)} \leq p\bigg]J(p)\,dt\,dp \\
\label{mhatmIntegralForm}
&= \int_{-\infty}^{\tau_0} \bigg\{
 I\bigg(\frac{Q(t\,|\,x)}{1 - \pi(x)}\bigg)
 - I\bigg(\frac{\hQ(t\,|\,x)}{1 - \hpi(x)}\bigg) \bigg\}\,dt,
\end{align}
where we have used that $I(1) = \int_0^1 J(p)\,dp = 1$ in the second
equality. Since the score function $J$ is bounded, it follows that $I$
is Lipschitz continuous (with constant $\|J\|_{\infty}$)
and $\|\mhat - m\|_{\infty}$ is bounded by
\begin{equation*}
\bigg[ \inf_{x \in [0,\,1]} \inf_{p_l \leq p \leq p_u}
 q\big(\xi(\{1 - \pi(x)\}p\,|\,x)\,|\,x\big) \bigg]^{-1}
 \sup_{p_l \leq p \leq p_u} J(p)
 \sup_{x \in [0,\,1]} \sup_{-\infty < t \leq \tau_0}
 \bigg| \frac{\hQ(t\,|\,x)}{1 - \hpi(x)}
 - \frac{Q(t\,|\,x)}{1 - \pi(x)} \bigg|,
\end{equation*}
where the first two terms are finite. The last term is easily shown to
be of order $O((na_n)^{-1/2}\log^{1/2}(n))$, almost surely, using the
first statement of Lemma \ref{lemhQConsistencyModulusExpan} and the
first statement of Proposition \ref{prophpi}. This completes the proof
of the first assertion.

Turning now to the second assertion, we can use the procedure in the
proof of Proposition 6 of Akritas and Van Keilegom (2001); i.e.\ we
can write $\mhat(x) - m(x)$ as
\begin{equation} \label{mhatdiffmremainder}
\int_{-\infty}^{\tau_0} \bigg\{
 \frac{Q(t\,|\,x)}{1 - \pi(x)} - \frac{\hQ(t\,|\,x)}{1 - \hpi(x)}
 \bigg\}J\big(Q_u(t\,|\,x)\big)\,dt
 + O\big((na_n)^{-1}\log(n)\big),
 \qquad\text{a.s.,}
\end{equation}
using a Taylor expansion of $I$ in the right-hand side of
\eqref{mhatmIntegralForm} and the fact that $J'$ is bounded. The
difference term from \eqref{mhatdiffmremainder} becomes
\begin{equation} \label{mhatExpanDiffTerms}
\frac{Q(t\,|\,x) - \hQ(t\,|\,x)}{1 - \pi(x)}
 - Q_u(t\,|\,x)\frac{\hpi(x) - \pi(x)}{1 - \pi(x)}
 + O\big((na_n)^{-1}\log(n)\big),
 \qquad\text{a.s.,}
\end{equation}
where we have used that $Q(t\,|\,x) = \{1 - \pi(x)\}Q_u(t\,|\,x)$. The
third statement of Lemma \ref{lemhQConsistencyModulusExpan} implies
the first term in \eqref{mhatExpanDiffTerms} is equal to
\begin{equation} \label{mhatExpanResult1}
-\frac{1 - Q(t\,|\,x)}{1 - \pi(x)} \frac1{g(x)na_n} \sj
 K\bigg(\frac{x - X_j}{a_n}\bigg)\zeta(x,Z_j,\delta_j,t)
 + O\big((na_n)^{-3/4}\log^{3/4}(n)\big).
\end{equation}
Applying the second statement of Proposition \ref{prophpi} shows the
second term in \eqref{mhatExpanDiffTerms} is equal to
\begin{equation} \label{mhatExpanResult2}
-\frac{\pi(x)}{1 - \pi(x)}
 \frac{Q_u(t\,|\,x)}{g(x)na_n} \sj
 K\bigg(\frac{x - X_j}{a_n}\bigg)\zeta(x,Z_j,\delta_j,\tau_0)
 + O\big((na_n)^{-3/4}\log^{3/4}(n)\big).
\end{equation}
Note, the order terms in the displays above hold uniformly over
$(t,\,x) \in (-\infty,\,\tau_0] \times [0,\,1]$. The result then
follows by combining \eqref{mhatExpanResult1} and
\eqref{mhatExpanResult2} with the approximation
\eqref{mhatExpanDiffTerms} under the integral in
\eqref{mhatdiffmremainder}.
\end{proof}

\begin{proof}[{\sc Proof of Proposition \ref{propshat}}]
Beginning with the first assertion, write $\hat v(x) - v(x)$ as
\begin{equation} \label{vhatminusvdef}
\int_0^1 \hxi^2\big(\big(1 - \hpi(x)\big)p\,|\,x\big)J(p)\,dp
 - \int_0^1 \xi^2\big((1 - \pi(x))p\,|\,x\big)J(p)\,dp
 - \mhat^2(x) + m^2(x).
\end{equation}
Since $\mhat^2(x) - m^2(x) = 2m(x)\{\mhat(x) - m(x)\} + \{\mhat(x) -
m(x)\}^2$, it follows from the first statement of Proposition
\ref{propmhat} that $\|\mhat^2 - m^2\|_{\infty} =
O((na_n)^{-1/2}\log^{1/2}(n))$, almost surely. With some technical
effort, the difference of integrals in \eqref{vhatminusvdef} can be
shown to be equal to
\begin{equation*}
\int_0^{\tau_0} \bigg\{ I\bigg( \frac{Q(\sqrt{t}\,|\,x)}{1 - \pi(x)} \bigg)
 - I\bigg( \frac{\hQ(\sqrt{t}\,|\,x)}{1 - \hpi(x)} \bigg) \bigg\}\,dt
 - \int_0^{\infty} \bigg\{
 I\bigg( \frac{Q(-\sqrt{t}\,|\,x)}{1 - \pi(x)} \bigg)
 - I\bigg( \frac{\hQ(-\sqrt{t}\,|\,x)}{1 - \hpi(x)} \bigg) \bigg\}\,dt,
\end{equation*}
where $I(q) = \int_0^q J(p)\,dp$, $q \in [0,\,1]$. It therefore
follows from similar lines of argument to those in the proof of
Proposition \ref{propmhat} that the difference of integrals in
\eqref{vhatminusvdef} is of the order $O((na_n)^{-1/2}\log^{1/2}(n))$,
almost surely, uniformly in $x \in [0,\,1]$. Combining this statement
with the statement $\|\mhat^2 - m^2\|_{\infty} =
O((na_n)^{-1/2}\log^{1/2}(n))$, almost surely, from above, we can see
that $\|\hat v - v\|_{\infty} = O((na_n)^{-1/2}\log^{1/2}(n))$, almost
surely. The first assertion then follows
from the fact that
\begin{equation} \label{shatapprox}
\shat(x) - s(x) + \frac1{2s(x)}\big\{\shat(x) - s(x)\big\}^2
 = \frac1{2s(x)}\big\{\hat v(x) - v(x)\big\}.
\end{equation}

To prove the second assertion, combine \eqref{vhatminusvdef} with
\eqref{shatapprox} to obtain the approximation for $\shat(x) - s(x)$:
\begin{align} \label{shatexpanterms}
&\frac1{2s(x)}\bigg\{
 \int_0^1 \hxi^2\big(\big(1 - \hpi(x)\big)p\,|\,x\big)J(p)\,dp
 - \int_0^1 \xi^2\big((1 - \pi(x))p\,|\,x\big)J(p)\,dp \bigg\}
 \\ \nonumber
&\quad - \frac{m(x)}{s(x)}\big\{\mhat(x) - m(x)\big\}
 + O\big((na_n)^{-1}\log(n)\big),
\end{align}
where we have again used that $\mhat^2(x) - m^2(x) = 2m(x)\{\mhat(x) -
m(x)\} + \{\mhat(x) - m(x)\}^2$ and the order term holds almost
surely, uniformly in $x \in [0,\,1]$. As above, with additional
technical effort, the difference of integrals in
\eqref{shatexpanterms} can be shown to be equal to
\begin{equation*}
2\int_{-\infty}^{\tau_0} \bigg\{
 I\bigg( \frac{Q(y\,|\,x)}{1 - \pi(x)} \bigg)
 - I\bigg( \frac{\hQ(y\,|\,x)}{1 - \hpi(x)} \bigg) \bigg\} y\,dy.
\end{equation*}
Therefore, one can then work with an expansion similar to
\eqref{mhatdiffmremainder} in the proof of Proposition \ref{propmhat}
to derive an approximation of the first term in \eqref{shatexpanterms},
i.e.\
\begin{equation} \label{shatexpanfirstterm}
\int_{-\infty}^{\tau_0} \bigg\{
 \frac{Q(y\,|\,x) - \hQ(y\,|\,x)}{1 - \pi(x)}
 - Q_u(y\,|\,x)\frac{\hpi(x) - \pi(x)}{1 - \pi(x)} \bigg\}
 J\big(Q_u(y\,|\,x)\big)\frac{y}{s(x)}\,dy
 + O\big((na_n)^{-1}\log(n)\big),
\end{equation}
where the order term holds almost surely, uniformly in $x \in
[0,\,1]$. The result then follows by combining the approximations
\eqref{mhatExpanResult1} and \eqref{mhatExpanResult2} from the proof
of Proposition \ref{propmhat} with \eqref{shatexpanfirstterm} for the
first term in \eqref{shatexpanterms} and applying the second statement
of Proposition \ref{propmhat} to the second term in
\eqref{shatexpanterms}.
\end{proof}

With the asymptotic properties of $\hQ$, $\hpi$, $\mhat$ and $\shat$
fully described, we can state the proof of our first main result: a
strong uniform representation for the difference $\hFemp - F$.

\begin{proof}[{\sc Proof of Theorem \ref{thmFhat}}]
We can write $\hFemp(t) - F(t)$ as
\begin{align*}
&\avj \frac{\hQ(t\shat(X_j) + \mhat(X_j)\,|\,X_j)}{1 - \hpi(X_j)}
 - \avj \frac{Q(ts(X_j) + m(X_j)\,|\,X_j)}{1 - \pi(X_j)} \\
&= \avj \frac{\hQ(ts(X_j) + m(X_j)\,|\,X_j)
 - Q(ts(X_j) + m(X_j)\,|\,X_j)}{1 - \pi(X_j)} \\
&\quad + F(t) \avj \frac{\hpi(X_j) - \pi(X_j)}{1 - \pi(X_j)}
 + f(t)\avj \bigg\{ t\frac{\shat(X_j) - s(X_j)}{s(X_j)}
 + \frac{\mhat(X_j) - m(X_j)}{s(X_j)} \bigg\} \\
&\quad + \avj \frac{\hQ(t\shat(X_j) + \mhat(X_j)}{1 - \pi(X_j)}
 \frac{\{\hpi(X_j) - \pi(X_j)\}^2}{1 - \hpi(X_j)} \\
&\quad + \avj \bigg\{ \Big\{
 \hQ\big(t\shat(X_j) + \mhat(X_j)\,\big|\,X_j\big)
 - Q\big(t\shat(X_j) + \mhat(X_j)\,\big|\,X_j\big) \\
&\qquad\phantom{ + \avj \bigg\{ \Big\{}
 - \hQ(ts(X_j) + m(X_j)\,|\,X_j)
 + Q(ts(X_j) + m(X_j)\,|\,X_j) \Big\}
 \Big\slash \Big\{ 1 - \pi(X_j) \Big\} \bigg\} \\
&\qquad\phantom{ + \avj} \times
 \bigg\{ \frac{\hpi(X_j) - \pi(X_j)}{1 - \pi(X_j)} \bigg\} \\
&\quad + \avj \frac{\hQ(ts(X_j) + m(X_j)\,|\,X_j)
 - Q(ts(X_j) + m(X_j)\,|\,X_j)}{1 - \pi(X_j)}
 \frac{\hpi(X_j) - \pi(X_j)}{1 - \pi(X_j)} \\
&\quad + \avj \bigg\{
 F\bigg(t + t\frac{\shat(X_j) - s(X_j)}{s(X_j)}
 + \frac{\mhat(X_j) - m(X_j)}{s(X_j)} \bigg)
 - F(t) \bigg\}
 \frac{\hpi(X_j) - \pi(X_j)}{1 - \pi(X_j)} \\
&\quad + \avj \Big\{
 \hQ\big(t\shat(X_j) + \mhat(X_j)\,\big|\,X_j\big)
 - Q\big(t\shat(X_j) + \mhat(X_j)\,\big|\,X_j\big) \\
&\qquad\phantom{ + \avj \Big\{}
 - \hQ(ts(X_j) + m(X_j)\,|\,X_j)
 + Q(ts(X_j) + m(X_j)\,|\,X_j) \Big\}
 \Big\slash \Big\{1 - \pi(X_j) \Big\} \\
&\quad + \avj \bigg\{
 F\bigg(t + t\frac{\shat(X_j) - s(X_j)}{s(X_j)}
 + \frac{\mhat(X_j) - m(X_j)}{s(X_j)}\bigg) - F(t) \\
&\qquad\phantom{ + \avj \bigg\{}
 - f(t)\bigg\{t\frac{\shat(X_j) - s(X_j)}{s(X_j)}
 + \frac{\mhat(X_j) - m(X_j)}{s(X_j)} \bigg\} \bigg\}.
\end{align*}
We can therefore write
\begin{equation*}
R_{4,n}(t) = \hFemp(t) - F(t) - E_n(t)
 = \sum_{i = 1}^{10} D_{i,n}(t),
\end{equation*}
where $E_n(t)$ is defined in the statement of the theorem and
\begin{equation*}
D_{1,n}(t) = \avj \frac{\hQ(t\shat(X_j) + \mhat(X_j)\,|\,X_j)}{1 - \pi(X_j)}
 \frac{\{\hpi(X_j) - \pi(X_j)\}^2}{1 - \hpi(X_j)},
\end{equation*}
\begin{align*}
D_{2,n}(t) = \avj &\bigg\{\Big\{
 \hQ\big(t\shat(X_j) + \mhat(X_j)\,\big|\,X_j\big)
 - Q\big(t\shat(X_j) + \mhat(X_j)\,\big|\,X_j\big) \\
&\phantom{\bigg\{\Big\{}\quad - \hQ(ts(X_j) + m(X_j)\,|\,X_j)
 + Q(ts(X_j) + m(X_j)\,|\,X_j) \Big\}
 \Big\slash \Big\{ 1 - \pi(X_j) \Big\}\bigg\} \\
&\quad\times \bigg\{ \frac{\hpi(X_j) - \pi(X_j)}{1 - \pi(X_j)} \bigg\},
\end{align*}
\begin{equation*}
D_{3,n}(t) = \avj \frac{\hQ(ts(X_j) + m(X_j)\,|\,X_j)
 - Q(ts(X_j) + m(X_j)\,|\,X_j)}{1 - \pi(X_j)}
 \frac{\hpi(X_j) - \pi(X_j)}{1 - \pi(X_j)},
\end{equation*}
\begin{equation*}
D_{4,n}(t) = \avj \bigg\{F\bigg(t + t\frac{\shat(X_j) - s(X_j)}{s(X_j)}
 + \frac{\mhat(X_j) - m(X_j)}{s(X_j)}\bigg) - F(t)\bigg\}
 \frac{\hpi(X_j) - \pi(X_j)}{1 - \pi(X_j)},
\end{equation*}
\begin{align*}
D_{5,n}(t) = \avj \Big\{& \hQ\big(t\shat(X_j) + \mhat(X_j)\,\big|\,X_j\big)
 - Q\big(t\shat(X_j) + \mhat(X_j)\,\big|\,X_j\big) \\
&\quad - \hQ(ts(X_j) + m(X_j)\,|\,X_j) + Q(ts(X_j) + m(X_j)\,|\,X_j) \Big\}
 \Big\slash \Big\{ 1 - \pi(X_j) \Big\},
\end{align*}
\begin{align*}
D_{6,n}(t) = \avj \bigg\{&
 F\bigg(t + t\frac{\shat(X_j) - s(X_j)}{s(X_j)}
 + \frac{\mhat(X_j) - m(X_j)}{s(X_j)}\bigg) - F(t) \\
&\quad - f(t) \bigg\{ t\frac{\shat(X_j) - s(X_j)}{s(X_j)}
 + \frac{\mhat(X_j) - m(X_j)}{s(X_j)} \bigg\} \bigg\},
\end{align*}
\begin{equation*}
D_{7,n}(t) = \avj \frac{R_n(ts(X_j) + m(X_j)\,|\,X_j)}{1 - \pi(X_j)},
\end{equation*}
where $R_n$ is given in the third statement of Lemma
\ref{lemhQConsistencyModulusExpan},
\begin{equation*}
D_{8,n}(t) = F(t)\avj \frac{R_{1,n}(X_j)}{1 - \pi(X_j)},
\end{equation*}
where $R_{1,n}$ is given in the second statement of Proposition
\ref{prophpi},
\begin{equation*}
D_{9,n}(t) = f(t)\avj \frac{R_{2,n}(X_j)}{s(X_j)},
\end{equation*}
where $R_{2,n}$ is given in the second statement of Proposition
\ref{propmhat}, and
\begin{equation*}
D_{10,n}(t) = tf(t)\avj \frac{R_{3,n}(X_j)}{s(X_j)},
\end{equation*}
where $R_{3,n}$ is given in the second statement of Proposition
\ref{propshat}.

The assumptions of Proposition \ref{prophpi} are satisfied, and the
first statement of this result gives $\|\hpi - \pi\|_{\infty} =
O((na_n)^{-1/2}\log^{1/2}(n))$, almost surely. Combining this
statement with the facts that $\|\pi\|_{\infty} < 1$ and that
$|D_{1,n}(t)|$ is bounded by $[1 - \|\pi\|_{\infty}]^{-1} [1 -
\|\hpi\|_{\infty}]^{-1} \|\hpi - \pi\|_{\infty}^2$ shows that
$\|D_{1,n}\|_{\infty}$ is of the order $O((na_n)^{-1}\log(n)) =
o((na_n)^{-3/4}\log^{3/4}(n))$, almost surely.

We can see that $|D_{2,n}(t)|$ is bounded by
\begin{align} \label{D2nBound}
&\sup_{x \in [0,\,1]} \sup_{-\infty < t \leq \tau_F} \Big|
 \hQ\big(t\shat(x) + \mhat(x)\,\big|\,x\big)
 - Q\big(t\shat(x) + \mhat(x)\,\big|\,x\big) \\ \nonumber
&\phantom{\sup_{x \in [0,\,1]} \sup_{-\infty < t \leq \tau_F} \Big|}\quad
 - \hQ(ts(x) + m(x)\,|\,x) + Q(ts(x) + m(x)\,|\,x) \Big| \\ \nonumber
&\times \Big[1 - \|\pi\|_{\infty}\Big]^2
 \sup_{x \in [0,\,1]} \Big| \hpi(x) - \pi(x) \Big|,
\end{align}
and we have already used that $\|\pi\|_{\infty} < 1$ and $\|\hpi -
\pi\|_{\infty} = O((na_n)^{-1/2}\log^{1/2}(n))$, almost surely. Hence,
we only need to treat the first term in \eqref{D2nBound}, which we do
by using the modulus of continuity for the Beran estimator $\hQ$ given
in the second statement of Lemma
\ref{lemhQConsistencyModulusExpan}. The assumptions of Lemma
\ref{lemhQConsistencyModulusExpan} are satisfied. However, in order to
use the second statement of this result, we need to show the related
result
\begin{equation*}
\sup_{x \in [0,\,1]} \sup_{-\infty < t \leq \tau_F} \Big|
L\big(t\shat(x) + \mhat(x)\big) - L\big(ts(x) + m(x)\big) \Big|
 = O\big((na_n)^{-1/2}\log^{1/2}(n)\big),
 \qquad\text{a.s.}
\end{equation*}
This is equivalent to showing
\begin{equation} \label{D2nModulusPart}
\sup_{x \in [0,\,1]} \sup_{-\infty < t \leq \tau_0} \bigg|
 L\bigg(t + t\frac{\shat(x) - s(x)}{s(x)}
 + \frac{\mhat(x) - m(x)}{s(x)}\bigg) - L(t) \bigg|
 = O\big((na_n)^{-1/2}\log^{1/2}(n)\big),
 \qquad\text{a.s.}
\end{equation}

To see that \eqref{D2nModulusPart} holds, recall the sequences of real
numbers $\{u_n\}_{n \geq 1}$ and $\{v_n\}_{n \geq 1}$ introduced in
the discussion following \eqref{assumpLcurve}. It follows that $|L(t +
t(v_n - v) + u_n - u) - L(t)|$ is bounded by
\begin{align} \label{LBound}
&\big|u_n - u\big|\sup_{-\infty < t \leq \tau_0} \Big|L'(t)\Big|
 + \big|v_n - v\big|\sup_{-\infty < t \leq \tau_0} \Big|tL'(t)\Big|
 \\ \nonumber
&+ \sup_{-\infty < t \leq \tau_0} \Big|
 L\big(t + t(v_n - v) + u_n - u\big) - L\big(t + t(v_n - v)\big)
 - (u_n - u)L'\big(t + t(v_n - v)\big) \Big|
 \\ \nonumber
&+ \sup_{-\infty < t \leq \tau_0} \Big|
 L\big(t + t(v_n - v)\big) - L(t)
 - (v_n - v)tL'(t) \Big|.
\end{align}
The terms in the first line of \eqref{LBound} are easily seen to be of
the order $O((v_n - v) + (u_n - u))$, as desired. The quantity inside
the absolute brackets in the second line of the same display is equal
to
\begin{equation} \label{LTerm1Holder}
(u_n - u)\int_0^1 \Big\{ L'\big(t + t(v_n - v) + p(u_n - u)\big)
 - L'\big(t + t(v_n - v)\big)\Big\}\,dp.
\end{equation}

We will now use \eqref{assumpLcurve} to show that \eqref{LTerm1Holder}
is of the order $O(|u_n - u|^2)$. Let $-\infty < a < b \leq
\tau_0$. It follows from the fact that $L$ is nondecreasing that $(1 +
a^2)|L'(b) - L'(a)|$ is bounded by
\begin{equation*}
\Big|\big(1 + b^2\big)L'(b) - \big(1 + a^2\big)L'(a)\Big| =
 \bigg|\int_a^b\,\big(1 + w^2\big)L''(w)\,dw
 + 2\int_a^b\,wL'(w)\,dw\bigg|.
\end{equation*}
The triangle inequality in combination with \eqref{assumpLcurve}
implies that the right-hand side of the display above is further
bounded by
\begin{align*}
&\int_a^b\,\big(1 + w^2\big)\bigg|\frac{L''(w)}{L'(w)}\bigg|\,L(dw)
 + 2\sup_{-\infty < t \leq \tau_0}\big|tL'(t)\big| \big|b - a\big| \\
&\leq \Bigg\{\bigg\{\int_0^1\,\Big(1 + \big(a + p(b - a)\big)^2\Big)
 \bigg\{\frac{L''(a + p(b - a))}{L'(a + p(b - a))}\bigg\}^2
 L'\big(a + p(b - a)\big)\,dp \\
&\quad\phantom{\Bigg\{\bigg\{} \times
 \int_0^1\,\Big(1 + \big(a + p(b - a)\big)^2\Big)
 L'\big(a + p(b - a)\big)\,dp\bigg\}^{1/2}
 + 2\sup_{-\infty < t \leq \tau_0}\Big|tL'(t)\Big|\Bigg\}\big|b - a\big| \\
&\leq \Bigg\{\bigg\{ \int_{-\infty}^{\tau_0}\,
 \big(1 + w^2\big)\bigg\{\frac{L''(w)}{L'(w)}\bigg\}\,L(dw)
 \int_{-\infty}^{\tau_0}\,\big(1 + w^2\big)\,L(dw)\bigg\}^{1/2}
 + 2\sup_{-\infty < t \leq \tau_0}\Big|tL'(t)\Big|\Bigg\}\big|b - a\big|,
\end{align*}
where the middle inequality follows from H\"older's
inequality and the final inequality follows by the facts that the
integrands are nonnegative and $[a,\,b] \subset
(-\infty,\,\tau_0]$. This means that we can find a constant $C > 0$
such that
\begin{equation} \label{LLipschitz}
\Big|L'(b) - L'(a)\Big| \leq C\frac{|b - a|}{1 + a^2},
\qquad -\infty < a, b \leq \tau_0.
\end{equation}

Setting $a = \min\{t + t(v_n - v),\,t + t(v_n - v) + p(u_n - u)\}$ and
$b = \max\{t + t(v_n - v),\,t + t(v_n - v) + p(u_n - u)\}$ in
\eqref{LLipschitz} implies that $|L'(t + t(v_n - v) + p(u_n - u)) -
L'(t + t(v_n - v))|$ is bounded by
\begin{equation} \label{LTerm1Bound}
C\sup_{-\infty < t \leq \tau_0}\sup_{0 \leq p \leq 1}
 \frac{|p|}{1 + (\min\{t + t(v_n - v),\,t + t(v_n - v) + p(u_n - u)\})^2}
 \big|u_n - u\big|.
\end{equation}
It follows for \eqref{LTerm1Holder} to be of the order $O(|u_n -
u|^2)$, and, hence, the term in the second line of \eqref{LBound} is
also of the order $O(|u_n - u|^2) = o(|u_n - u|)$. A similar argument
shows the term in the third line of \eqref{LBound} is of the order
$O(|v_n - v|^2) = o(|v_n - v|)$, where the fraction in
\eqref{LLipschitz} becomes $|p|t^2/(1 + t^2(\min\{1,\,1 + v_n -
v\})^2)$, which is bounded for all $n$ where $-1 \neq v_n - v$. This
shows the desired result
\begin{equation} \label{LOrder}
\sup_{-\infty < t \leq \tau_0} \Big|L(t + t(v_n - v) + (u_n - u)) - L(t)\Big|
 = O\big(|v_n - v| + |u_n - u|\big).
\end{equation}

Since the assumptions of Proposition \ref{propmhat} and Proposition
\ref{propshat} are satisfied, combining the first statements of these
results with the fact that $s$ is bounded away from zero and
\eqref{LOrder} establishes the desired \eqref{D2nModulusPart}. We can
therefore apply the second statement of Lemma
\ref{lemhQConsistencyModulusExpan} to see that the first term of
\eqref{D2nBound} is of the order $O((na_n)^{-3/4}\log^{3/4}(n))$. It
then follows that $\|D_{2,n}\|_{\infty}$ is of the order
$O((na_n)^{-5/4}\log^{5/4}(n)) = o((na_n)^{-3/4}\log^{3/4}(n))$,
almost surely.

We can use the first statements of Lemma
\ref{lemhQConsistencyModulusExpan} and Proposition \ref{prophpi} to
treat $D_{3,n}(t)$, since this remainder term is bounded in absolute
value by
\begin{equation*}
\Big[1 - \|\pi\|_{\infty}\Big]^2
 \sup_{x \in [0,\,1]}\sup_{-\infty < t \leq \tau_0}
 \big|\hQ(t\,|\,x) - Q(t\,|\,x)\big|
 \sup_{x \in [0,\,1]} \big|\hpi(x) - \pi(x)\big|.
\end{equation*}
Therefore, $\|D_{3,n}\|_{\infty}$ is of the order
$O((na_n)^{-1}\log(n)) = o((na_n)^{-3/4}\log^{3/4}(n))$, almost surely.

Since $F$ satisfies \eqref{assumpLcurve}, with $F$ in place of $L$,
$f$ in place of $L'$ and $f'$ in place of $L''$, the same argument
used to verify \eqref{D2nModulusPart} can be used to show
\begin{equation*}
\sup_{x \in [0,\,1]}\sup_{-\infty < t \leq \tau_F} \bigg|
 F\bigg(t + t\frac{\shat(x) - s(x)}{s(x)}
 + \frac{\mhat(x) - m(x)}{s(x)}\bigg) - F(t) \bigg|
 = O\big((na_n)^{-1/2}\log^{1/2}(n)\big),
 \qquad\text{a.s.}
\end{equation*}
This fact combined with the result $\|\hpi - \pi\|_{\infty} =
O((na_n)^{-1/2}\log^{1/2}(n))$, almost surely, from the first
statement of Proposition \ref{prophpi}, and the fact that
$|D_{4,n}(t)|$ is bounded by
\begin{equation*}
\Big[1 - \|\pi\|_{\infty}\Big]^{-1}
 \sup_{x \in [0,\,1]}\sup_{-\infty < t \leq \tau_F} \bigg|
 F\bigg(t + t\frac{\shat(x) - s(x)}{s(x)}
 + \frac{\mhat(x) - m(x)}{s(x)}\bigg) - F(t) \bigg|
 \sup_{x \in [0,\,1]} \Big|\hpi(x) - \pi(x)\Big|
\end{equation*}
shows that $\|D_{4,n}\|_{\infty}$ is of the order
$O((na_n)^{-1}\log(n)) = o((na_n)^{-3/4}\log^{3/4}(n))$, almost
surely.

Similar to the arguments for the remainder term $D_{2,n}(t)$, we can
apply the second statement of Lemma
\ref{lemhQConsistencyModulusExpan} and the fact that $\|\pi\|_{\infty}
< 1$ to treat the remainder term $D_{5,n}(t)$, since $|D_{5,n}(t)|$ is
bounded by $[1 - \|\pi\|_{\infty}]^{-1}$ multiplied by
\begin{equation*}
\sup_{x \in [0,\,1]} \sup_{-\infty < t \leq \tau_F} \Big|
 \hQ\big(t\shat(x) + \mhat(x)\,\big|\,x\big)
 - Q\big(t\shat(x) + \mhat(x)\,\big|\,x\big)
 - \hQ(ts(x) + m(x)\,|\,x) + Q(ts(x) + m(x)\,|\,x) \Big|.
\end{equation*}
Therefore, $\|D_{5,n}\|_{\infty}$ is of the order
$O((na_n)^{-3/4}\log^{3/4}(n))$, almost surely, because we have
already shown that the quantity in the display above is
$O((na_n)^{-3/4}\log^{3/4}(n))$, almost surely, using the modulus of
continuity of the Beran estimator $\hQ$ given in the second statement
of Lemma \ref{lemhQConsistencyModulusExpan}.

We can write the remainder term $D_{6,n}(t)$ as the sum
\begin{align*}
&\avj \bigg\{ F\bigg(t + t\frac{\shat(X_j) - s(X_j)}{s(X_j)}
 + \frac{\mhat(X_j) - m(X_j)}{s(X_j)}\bigg)
 - F\bigg(t + t\frac{\shat(X_j) - s(X_j)}{s(X_j)}\bigg) \\
&\phantom{\avj \bigg\{} \quad
 - f\bigg(t + t\frac{\shat(X_j) - s(X_j)}{s(X_j)}\bigg)
 \frac{\mhat(X_j) - m(X_j)}{s(X_j)}\bigg\} \\
& + \avj \bigg\{ F\bigg(t + t\frac{\shat(X_j) - s(X_j)}{s(X_j)}\bigg)
 - F(t) - tf(t)\frac{\shat(X_j) - s(X_j)}{s(X_j)} \bigg\} \\
& + \avj \bigg\{ f\bigg(t + t\frac{\shat(X_j) - s(X_j)}{s(X_j)}\bigg)
 - f(t) \bigg\} \frac{\mhat(X_j) - m(X_j)}{s(X_j)}.
\end{align*}
Hence, $|D_{6,n}(t)|$ is bounded by the sum of the quantities
\begin{align} \label{D6Bound1}
\sup_{x \in [0,\,1]} \sup_{-\infty < t \leq \tau_F} \bigg|
& F\bigg(t + t\frac{\shat(x) - s(x)}{s(x)}
 + \frac{\mhat(x) - m(x)}{s(x)}\bigg)
 - F\bigg(t + t\frac{\shat(x) - s(x)}{s(x)}\bigg)
 \\ \nonumber
&\quad - f\bigg(t + t\frac{\shat(x) - s(x)}{s(x)}\bigg)
 \frac{\mhat(x) - m(x)}{s(x)} \bigg|,
\end{align}
\begin{equation} \label{D6Bound2}
\sup_{x \in [0,\,1]} \sup_{-\infty < t \leq \tau_F} \bigg|
 F\bigg(t + t\frac{\shat(x) - s(x)}{s(x)}\bigg)
 - F(t) - tf(t)\frac{\shat(x) - s(x)}{s(x)} \bigg|
\end{equation}
and
\begin{equation} \label{D6Bound3}
\bigg[\inf_{x \in [0,\,1]} s(x)\bigg]^{-1}
\sup_{x \in [0,\,1]} \sup_{-\infty < t \leq \tau_F} \bigg|
 f\bigg(t + t\frac{\shat(x) - s(x)}{s(x)}\bigg)
 - f(t) \bigg|
\sup_{x \in [0,\,1]} \Big|\mhat(x) - m(x)\Big|.
\end{equation}
Since $F$ satisfies \eqref{assumpLcurve}, analogous arguments to those
that are used to show the second and third terms of \eqref{LBound} are
of the orders $O(|u_n - u|^2)$ and $O(|v_n - v|^2)$, respectively,
combined with the first statements of Proposition \ref{propmhat} and
Proposition \ref{propshat} show the bounds \eqref{D6Bound1} and
\eqref{D6Bound2} are both of the order $O((na_n)^{-1}\log(n)) =
o((na_n)^{-3/4}\log^{3/4}(n))$, almost surely. Also, a similar
argument that is used to find the bound \eqref{LTerm1Bound} combined with
the first statement of Proposition \ref{propshat} can be used to show
the second term in \eqref{D6Bound3} is of the order
$O((na_n)^{-1/2}\log^{1/2}(n))$, almost surely. The third term in
\eqref{D6Bound3} has the order $O((na_n)^{-1/2}\log^{1/2}(n))$, almost
surely, from the first statement of Proposition
\ref{propmhat}. Therefore, we can see that $\|D_{6,n}\|_{\infty}$ is
of the order $O((na_n)^{-1}\log(n)) = o((na_n)^{-3/4}\log^{3/4}(n))$,
almost surely.

The assumptions of Lemma \ref{lemhQConsistencyModulusExpan} are
satisfied, and it follows from the third statement of this result that
$\|R_n\|_{\infty}$ is of the order $O((na_n)^{-3/4}\log^{3/4}(n))$,
almost surely. It then follows that $\|D_{7,n}\|_{\infty}$ is also of
the order $O((na_n)^{-3/4}\log^{3/4}(n))$, almost surely, because
$|D_{7,n}(t)|$ is bounded by $[1 -
\|\pi\|_{\infty}]^{-1}\|R_n\|_{\infty}$. Since $|D_8(t)|$ is bounded
by $[1 - \|\pi\|_{\infty}]^{-1}\|R_{1,n}\|_{\infty}$, it follows from
the second statement of Proposition \ref{prophpi} for
$\|D_8\|_{\infty}$ to be of the order $O((na_n)^{-3/4}\log^{3/4}(n))$,
almost surely. The second statement from
Proposition \ref{propmhat} shows that $\|D_9\|_{\infty}$ is of the
order $O((na_n^{-3/4}\log^{3/4}(n))$, almost surely, which follows
from the fact that $|D_9(t)|$ is bounded by
$\|f\|_{\infty}[\inf_{x \in [0,\,1]}
s(x)]^{-1}\|R_{2,n}\|_{\infty}$. Similarly, the second statement of
Proposition \ref{propshat} shows that $\|D_{10,n}\|_{\infty}$ is of
the order $O((na_n)^{-3/4}\log^{3/4}(n))$, almost surely, which
concludes the proof.
\end{proof}

Before we can prove of our second main result we need to state the
asymptotic order of the mean of the process $\{E_n(t)\,:\,-\infty < t
\leq \tau_F\}$ introduced in Theorem \ref{thmFhat}.

\begin{lemma} \label{lemEntmeanOrder}
Under the conditions of Theorem \ref{thmFhat} it follows that
\begin{equation*}
\sup_{-\infty < t \leq \tau_F} \Big| E\big[E_n(t)\big] \Big|
 = O\big(a_n^2\big).
\end{equation*}
\end{lemma}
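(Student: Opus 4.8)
The plan is to exploit the fact that each of $T_n(t)$, $U_n$, $V_n$ and $W_n(t)$ is a normalised double sum of the shape $(n^2a_n)^{-1}\sum_{j,k}K((X_j-X_k)/a_n)\,\Phi(X_j)\,\zeta(X_j,Z_k,\delta_k,\cdot)$, where $\Phi$ gathers the smooth coefficients (together with a $dy$-integral over $(-\infty,\tau_0]$ in the cases of $U_n$ and $V_n$). Since $E[E_n(t)]=E[T_n(t)]-f(t)\{E[U_n]+tE[V_n]\}-E[W_n(t)]$ and $\|f\|_\infty<\infty$, $\sup_{t\le\tau_F}|tf(t)|<\infty$ by hypothesis, it suffices to prove that $\sup_{t\le\tau_F}|E[T_n(t)]|$, $|E[U_n]|$, $|E[V_n]|$ and $\sup_{t\le\tau_F}|E[W_n(t)]|$ are all $O(a_n^2)$. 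In each case I split the double sum into the diagonal part $j=k$ and the off-diagonal part $j\neq k$.

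For the diagonal part I use the key conditional-mean-zero identity $E[\zeta(x,Z,\delta,r)\mid X=x]=0$, valid for every fixed $x\in[0,1]$ and every fixed $r\le\tau_0$: computing termwise, $E[\delta\1[Z\le r]\{1-M(Z-|x)\}^{-1}\mid X=x]=\int_{-\infty}^r\{1-M(s-|x)\}^{-1}M^1(ds|x)$, while $E[\1[Z>s]\mid X=x]=1-M(s|x)=1-M(s-|x)$ because $Z$ is continuous, so the two pieces of $\zeta$ cancel in expectation. Conditioning on $X_j$ (and, for $U_n$ and $V_n$, interchanging the conditional expectation with the $dy$-integral by Fubini, which is licit since all integrands are bounded on $(-\infty,\tau_0]\times[0,1]$ by Assumption~\ref{assumpMandM1} and $\inf_x\{1-M(\tau_0|x)\}>0$), the diagonal contribution to each of $E[T_n(t)]$, $E[U_n]$, $E[V_n]$, $E[W_n(t)]$ is therefore exactly $0$.

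For the off-diagonal part $X_j$ and $X_k$ are independent, so I first take the conditional expectation of $\zeta(X_j,Z_k,\delta_k,r)$ given $(X_j,X_k)=(x_j,x_k)$; the same computation as above, now with $(Z_k,\delta_k)$ governed by $M(\cdot|x_k)$ but the denominators built from $M(\cdot|x_j)$, yields a deterministic function $\psi(x_j,x_k,r)=\int_{-\infty}^r\{1-M(s-|x_j)\}^{-1}M^1(ds|x_k)-\int_{-\infty}^r\{1-M(s-|x_j)\}^{-2}\{1-M(s|x_k)\}M^1(ds|x_j)$ which, crucially, satisfies $\psi(x,x,r)=0$ for all $x$ and all $r\le\tau_0$. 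After the substitution $x_k=x_1-a_nu$ in the kernel, the off-diagonal contribution to $E[T_n(t)]$ equals $\tfrac{n-1}{n}\int_0^1\!\!\int K(u)\,A(x_1)\,\psi(x_1,x_1-a_nu,ts(x_1)+m(x_1))\,g(x_1)g(x_1-a_nu)\,du\,dx_1$, with $A$ the smooth coefficient of $T_n$; expanding $\psi$ in its second argument about $x_1$ (where it vanishes) and $g(x_1-a_nu)$ about $x_1$, the $O(1)$ term is $0$, the $O(a_n)$ term equals $-a_n g(x_1)^2A(x_1)\partial_2\psi(x_1,x_1,\cdot)\int uK(u)\,du=0$ by symmetry of $K$, and what remains is $O(a_n^2)\int u^2K(u)\,du$, uniformly in $x_1$ and in $t$; integrating over $x_1\in[0,1]$ gives $O(a_n^2)$. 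The arguments for $U_n$ and $V_n$ are identical, and for $W_n$ even simpler since the threshold is the fixed $\tau_0$ and the $t$-dependence sits in the bounded scalar factor $F(t)-f(t)\{C_m(X_j)+tC_s(X_j)\}$ (bounded uniformly in $t$ and $X_j$ by $\|F\|_\infty<\infty$, $\|f\|_\infty<\infty$, $\sup|tf(t)|<\infty$ and boundedness of $C_m$, $C_s$).

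The main obstacle is the uniform control of the Taylor remainders: one must check that $\psi$ has first and second partial derivatives in its covariate arguments that are bounded uniformly over $x_j,x_k\in[0,1]$ and over the threshold $r\le\tau_0$ (note that in $T_n$ one has $r=ts(x_1)+m(x_1)\le\tau_u(x_1)\le\tau_0$, which runs off to $-\infty$ with $t$ but stays in the good region, where in fact $\psi$ and its derivatives decay). These bounds are obtained by differentiating $\psi$ under the integral sign, using that $M$, $M^1$, $\dot M$, $\dot M^1$ and the second $x$-derivatives of $M$, $M^1$ are bounded on $(-\infty,\tau_0]\times[0,1]$ with total-variation bounds $L_1,L_2,L_3$ (Assumption~\ref{assumpMandM1}), together with $\inf_x\{1-M(\tau_0|x)\}>0$, which follows from \eqref{curemodid} and the continuity of $x\mapsto M(\tau_0|x)$ on the compact set $[0,1]$; the smoothness of $m$, $s$, $\pi$, $g$ and $Q$ then disposes of the remaining coefficients. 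Assembling the four $O(a_n^2)$ bounds, multiplying by the bounded factors $f(t)$ and $tf(t)$ where needed, and taking the supremum over $t\le\tau_F$ completes the proof.
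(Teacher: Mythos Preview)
Your proof is correct and follows essentially the same strategy as the paper: reduce to bounding the mean of each of $T_n$, $U_n$, $V_n$, $W_n$ separately, exploit the conditional mean-zero property of $\zeta$ (your function $\psi$ with $\psi(x,x,r)=0$), Taylor-expand in the covariate difference after the kernel substitution, and kill the first-order term via the symmetry $\int uK(u)\,du=0$. Your packaging via $\psi$ and your choice to hold $x_1=X_j$ fixed (so the threshold $ts(x_1)+m(x_1)$ does not move under the substitution) is somewhat cleaner than the paper's explicit twelve-term expansion of $E[\zeta(v+a_nw,Z,\delta,\cdot)\mid X=v,X'=v+a_nw]$, but the underlying argument is the same.
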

\begin{proof}
Recall from the statement of Theorem \ref{thmFhat} that $E_n(t) =
T_n(t) - f(t)\{U_n + tV_n\} - W_n(t)$. Hence, the assertion follows
from showing $\|E[T_n]\|_{\infty} = O(a_n^2)$, $|E[U_n]| = O(a_n^2)$,
$|E[V_n]| = O(a_n^2)$ and $\|E[W_n]\|_{\infty} = O(a_n^2)$. We will
only show the result that $\|E[T_n]\|_{\infty} = O(a_n^2)$ because the
remaining statements follow by similar lines of argument.

Write
\begin{align*}
&\frac{1 - Q(ts(v + a_nw) + m(v + a_nw)\,|\,v + a_mw)}{1 - \pi(v + a_nw)}
 - \frac{1 - Q(ts(v) + m(v)\,|\,v)}{1 - \pi(v)} \\
&= a_nw\frac{\int_0^1\,\dot{\pi}(v + a_nwp)\,dp}{
 \{1 - \pi(v + a_nw)\}\{1 - \pi(v)\}},
\end{align*}
where $\dot{\pi}$ is the first derivative of the function $\pi$
with respect to its argument. Additionally, write
\begin{align} \label{ETnSecondTermApprox}
&E\Big[ \zeta\big(v + a_nw, Z, \delta, ts(v + a_nw) + m(v + a_nw)\big)
 \,\Big|\,X = v,\,X' = v + a_nw\Big] \\ \nonumber
&= -a_nw \int_{-\infty}^{ts(v) + m(v)}\,
 \frac1{1 - M(s-\,|\,v)}\,\dot{M}^1(ds\,|\,v)
 - a_nw \int_{-\infty}^{ts(v) + m(v)}\,
 \frac{\dot{M}(s-\,|\,v)}{\{1 - M(s-\,|\,v)\}^2}\,
 M^1(ds\,|\,v) \\ \nonumber
&\quad + a_nw \int_{ts(v) + m(v)}^{ts(v + a_nw) + m(v + a_nw)}\,
 \frac{\int_0^1\,\dot{M}(s-\,|\,v + a_nwp)\,dp}{
 \{1 - M(s-\,|\,v + a_nw)\}\{1 - M(s-\,|\,v)\}}\,
 M^1(ds\,|\,v) \\ \nonumber
&\quad - a_nw \int_{ts(v) + m(v)}^{ts(v + a_nw) + m(v + a_nw)}\,
 \frac{\int_0^1\,\dot{M}(s-\,|\,v + a_nwp)\,dp}{
 \{1 - M(s-\,|\,v + a_nw)\}^2}\,
 M^1(ds\,|\,v + a_nw) \\ \nonumber
&\quad - a_nw \int_{ts(v) + m(v)}^{ts(v + a_nw) + m(v + a_nw)}\,
 \frac{\int_0^1\,\dot{M}(s-\,|\,v + a_nwp)\,dp}{
 \{1 - M(s-\,|\,v + a_nw)\}\{1 - M(s-\,|\,v)\}}\,
 M^1(ds\,|\,v + a_nw) \\ \nonumber
&\quad - a_nw \int_0^1\, \bigg\{ \int_{ts(v) + m(v)}^{ts(v + a_nw) + m(v + a_nw)}\,
 \frac1{1 - M(s-\,|\,v)}\,\dot{M}^1(ds\,|\,v + a_nwq)
 \bigg\}\,dq \\ \nonumber
&\quad - a_n^2w^2 \int_{-\infty}^{ts(v) + m(v)}\,
 \frac{\int_0^1\,p\{\int_0^1\,\ddot{M}(s-\,|\,v + a_nwpq)\,dq\}\,dp}{
 \{1 - M(s-\,|\,v)\}^2}\,
 M^1(ds\,|\,v) \\ \nonumber
&\quad - a_n^2w^2 \int_{-\infty}^{ts(v) + m(v)}\,
 \frac{\{\int_0^1\,\dot{M}(s-\,|\,v + a_nwp)\,dp\}^2}{
 \{1 - M(s-\,|\,v + a_nw)\}\{1 - M(s-\,|\,v)\}^2}\,
 M^1(ds\,|\,v) \\ \nonumber
&\quad - a_n^2w^2 \int_{-\infty}^{ts(v) + m(v)}\,
 \frac{\{\int_0^1\,\dot{M}(s-\,|\,v + a_nwp)\,dp\}^2}{
 \{1 - M(s-\,|\,v + a_nw)\}^2\{1 - M(s-\,|\,v)\}}\,
 M^1(ds\,|\,v) \\ \nonumber
&\quad - a_n^2w^2 \int_0^1\, \bigg\{ \int_{-\infty}^{ts(v) + m(v)}\,
 \frac{\int_0^1\,\dot{M}(s-\,|\,v + a_nwp)\,dp}{
 \{1 - M(s-\,|\,v + a_nw)\}^2}\,
 \dot{M}^1(ds\,|\,v + a_nwq) \bigg\}\,dq \\ \nonumber
&\quad - a_n^2w^2 \int_0^1 \bigg\{ \int_{-\infty}^{ts(v) + m(v)}\,
 \frac{\int_0^1\,\dot{M}(s-\,|\,v + a_nwp)\,dp}{
 \{1 - M(s-\,|\,v + a_nw)\}\{1 - M(s-\,|\,v)\}}\,
 \dot{M}^1(ds\,|\,v + a_nwq) \bigg\}\,dq \\ \nonumber
&\quad - a_n^2w^2 \int_0^1\, p\bigg\{ \int_0^1\, \bigg\{
 \int_{-\infty}^{ts(v) + m(v)}\,\frac1{1 - M(s-\,|\,v)}\,
 \ddot{M}^1(ds\,|\,v + a_nwpq) \bigg\}\,dq \bigg\}\,dp.
\end{align}
Here $\ddot{M}$ is the second partial derivative of $M$ with respect to
$x$ and $\ddot{M}^1$ is the second partial derivative of $M^1$ with
respect to $x$. For large enough $n$, $E[T_n(t)]$ is equal to
\begin{align} \label{ETn}
&\int_0^1 \bigg\{ \int_{-1}^{1}\,E\Big[
 \zeta\big(v + a_nw,Z,\delta,ts(v + a_nw) + m(v + a_nw)\big)
 \,\Big|\,X = v,\,X' = v + a_nw \Big]K(w)\,dw\bigg\} \\ \nonumber
&\phantom{\int_0^1}\quad\times
 \frac{1 - Q(ts(v) + m(v)\,|\,v)}{1 - \pi(v)}\,dv \\ \nonumber
& + a_n\int_0^1 \bigg\{ \int_{-1}^{1}\,E\Big[
 \zeta\big(v + a_nw,Z,\delta,ts(v + a_nw) + m(v + a_nw)\big)
 \,\Big|\,X = v,\,X' = v + a_nw \Big] \\ \nonumber
&\phantom{ + a_n\int_0^1 \bigg\{ \int_{-1}^{1}\,}\quad \times
 \frac{\int_0^1\,\dot{\pi}(v + a_nws)\,ds}{
 \{1 - \pi(v + a_nw)\}\{1 - \pi(v)\}}
 wK(w)\,dw \bigg\}\,dv.
\end{align}
Since the first two terms on the right-hand side of
\eqref{ETnSecondTermApprox} depend only on $w$ multiplied by a
quantity not depending on $w$, the kernel function $K$ having mean
zero implies the associated terms in \eqref{ETn} are equal to
zero, while the remaining terms in the right-hand side of
\eqref{ETnSecondTermApprox} are easily shown to be of the order
$O(a_n^2)$. The assertion then follows by combining the right-hand
side of \eqref{ETnSecondTermApprox} with expression \eqref{ETn}, and
observing the remaining nonzero terms are all of the order $O(a_n^2)$ or
$o(a_n^2)$.
\end{proof}

To continue we will introduce some notation. Write $\mathcal{H}$ for a
class of measurable functions and let $\rho$ be a pseudometric for
$\mathcal{H}$. As is in Definition 2.1.5 of van der Vaart and Wellner
(1996), we will call $N(\epsilon,\mathcal{H},\rho)$ the covering
number of $\mathcal{H}$, which is the minimum number of balls
$\{g\,:\,\rho(g,h) < \epsilon\}$ of radius $\epsilon$ that is
required to cover $\mathcal{H}$. Note that the centers of the balls
need not belong to $\mathcal{H}$, but are required to have finite
length under $\rho$. We will call the logarithm of the covering number
the entropy. Also as in Definition 2.1.6 of van der Vaart and Wellner
(1996), when given two functions satisfying $h_l \leq h_u$ we will
call the collection of functions from $\mathcal{H}$ satisfying $h_l
\leq h \leq h_u$ a bracket, and an $\epsilon$-bracket when
the length of $h_u - h_l$ under $\rho$ is smaller than $\epsilon$. We
will then call the minimum number of $\epsilon$-brackets required to
cover $\mathcal{H}$ the bracketing number of $\mathcal{H}$, and write
$N_{[\,]}(\epsilon,\mathcal{H},\rho)$. As in the definition of the
covering number, the bracketing functions $h_l \leq h_u$ need not
belong to $\mathcal{H}$ but are required to have finite lengths under
$\rho$.

It is common to let the pseudometric $\rho$ be a scaled
$L_q(P)$--metric for some $q \geq 1$ ($\rho_{L_q(P)}(h,g) \propto [\,
\int\,|h - g|^q\,dP\,]^{1/q}$) or a scaled supremum metric (see
$\|\cdot\|_\infty$ introduced at the beginning of Section \nolinebreak
\ref{estimates}). A function $H$ such that $|h| \leq H$ for every $h
\in \mathcal{H}$ is called an envelope function for $\mathcal{H}$, and
this function is useful for scaling the pseudometric $\rho$. When
$\int\,H^q\,dP < \infty$ it is helpful to think of $\mathcal{H}$ as a
subset of the class $\mathcal{L}_q(P)$, writing $\mathcal{L}_q(P)$ for
the class of measurable functions with finite length under the
$L_q(P)$--metric. Covering numbers and bracketing numbers are very
helpful in understanding asymptotic properties of the process
$\{n^{-1/2}\{\hFemp(t) - F(t)\}\,:\, -\infty < t \leq \tau_F\}$, which
depends on the index set $-\infty < t \leq \tau_F$ and a bandwidth
sequence $\{a_n\}_{n \geq 1}$. We conclude this section with the proof
of our second main result: the weak convergence of $n^{1/2}\{\hFemp -
F\}$.

\begin{proof}[{\sc Proof of Theorem \ref{thmFhatWC}}]
The conditions of Theorem \ref{thmFhat} are satisfied with
$(na_n)^{-3/4}\log^{3/4}(n) = o(n^{-1/2})$, and we can
write $\hFemp(t) - F(t) = E_n(t) + \opn$, $-\infty < t \leq \tau_F$,
where the process $E_n(t) = T_n(t) - f(t)\{U_n + tV_n\} - W_n(t)$
depends on the random quantities $T_n(t)$, $U_n$, $V_n$ and $W_n(t)$
given in Theorem \ref{thmFhat}. Since $E_n(t)$ is not centered and the
conditions of Lemma \ref{lemEntmeanOrder} are satisfied with $a_n^2 =
o(n^{-1/2})$, we center the process $E_n(t)$ to obtain $\hFemp(t) -
F(t) = E_n(t) - E[E_n(t)] + \opn = T_n(t) - E[T_n(t)] - f(t)\{U_n -
E[U_n] + t\{V_n - E[V_n]\}\} - \{W_n(t) - E[W_n(t)]\} + \opn$. The
assertion follows if each of $T_n(t) - E[T_n(t)]$, $U_n - E[U_n]$,
$V_n - E[V_n]$ and $W_n(t) - E[W_n(t)]$ are asymptotically linear and
satisfy appropriate central limit theorems. We will prove only that
$T_n(t) - E[T_n(t)]$ is asymptotically linear, uniformly in $-\infty <
t \leq \tau_F$, and satisfies a functional central limit theorem. The
remaining statements can be shown using similar and easier arguments
that have been omitted for brevity.

We will now introduce some notation. As in Pakes and Pollard (1989),
we will call a class of functions $\mathcal{H}$ a {\em Euclidean
class} with envelope function $H$ (with respect to the $L_q(P)$--metric)
when there are constants $C_1,C_2 > 0$ such that the covering numbers
$N(\epsilon,\mathcal{H},L_q(P))$ satisfy
\begin{equation*}
N\big(\epsilon,\mathcal{H},L_q(P)\big) \leq C_2\epsilon^{-C_1},
 \quad 0 < \epsilon \leq 1.
\end{equation*}
The constants $C_1$ and $C_2$ cannot depend on $P$. Note, Sherman
(1994) requires that the envelope $H$ satisfies $\int\,H^2\,dP <
\infty$. This condition is always satisfied for uniformly bounded $H$,
and in this case we do not mention the distribution $P$.

To show that $T_n(t) - E[T_n(t)]$ is asymptotically linear, we will
apply results from Sherman (1994), who studies weak convergence of
degenerate $U$-processes of order $k \geq 1$. Using the Hoeffding
decomposition of a $U$-process, this author is able to obtain several
useful results concerning tightness properties of these
processes. Corollary 7 of Sherman (1994) states that $k$-th order
$U$-processes indexed by a Euclidean class of mean zero functions is
asymptotically tight at the root-$n$ rate, i.e.\ $\Opn$.

The class of mean zero functions associated to $T_n(t) - E[T_n(t)]$ is
$\mathcal{T} = \mathcal{T}_1 - \mathcal{T}_2$ with $\mathcal{T}_1$
equal to
\begin{align*}
\bigg\{ ((X,Z,\delta),\,(X',Z',\delta')) \mapsto
 &\frac1{a}K\bigg(\frac{X - X'}{a}\bigg)
 \frac{1 - Q(ts(X) + m(X)\,|\,X)}{1 - \pi(X)}
 \frac{\zeta(X,Z',\delta',ts(X) + m(X))}{g(X)} \\
&\quad \,:\, 0 < a < 1,\, -\infty < t \leq \tau_F \bigg\}
\end{align*}
and $\mathcal{T}_2 = \{((X,Z,\delta),\,(X',Z',\delta')) \mapsto
E[f_{a,t}((X,Z,\delta),(X',Z',\delta'))]\,:\,f_{a,t} \in
\mathcal{T}_1\}$. We can see that the amount of entropy residing in
the class $\mathcal{T}_2$ is proportional to that residing in the class
$\mathcal{T}_1$, which can be decomposed into the product of three
classes:
\begin{equation*}
\mathcal{K} = \bigg\{ U = X - X' \mapsto
 \frac1{a}K\bigg(\frac{U}{a}\bigg)\,:\,0 < a < 1 \bigg\},
\end{equation*}
\begin{align*}
\mathcal{S} &= \bigg\{ X \mapsto
 \frac{1 - Q(ts(X) + m(X)\,|\,X)}{1 - \pi(X)}
 \,:\, -\infty < t \leq \tau_F \bigg\} \\
&= \bigg\{ X \mapsto
 \frac{\pi(X)}{1 - \pi(X)}
 + 1 - F(t)
 \,:\, -\infty < t \leq \tau_F \bigg\}
\end{align*}
and
\begin{equation*}
\mathcal{Z} = \bigg\{ ((X,Z,\delta),(X',Z',\delta')) \mapsto
 \frac{\zeta(X,Z',\delta',ts(X) + m(X))}{g(X)}
 \,:\, -\infty < t \leq \tau_F \bigg\}.
\end{equation*}
We can conclude that $\mathcal{T}$ is a Euclidean class when we have
shown that $\mathcal{K}$, $\mathcal{S}$ and $\mathcal{Z}$ are each
Euclidean classes. The class $\mathcal{K}$ is Euclidean by Lemma 22 of
Nolan and Pollard (1987) with constant envelope $\|K\|_{\infty}$.

We will now show that the class $\mathcal{S}$ is Euclidean. Let
$\epsilon > 0$. Since $F$ is a continuous distribution function, we
can partition the (infinite length) interval $[-\infty,\,\tau_F]$ into
segments $[t_i,\,t_{i + 1}]$ satisfying $\max_i|F(t_{i + 1}) - F(t_i)|
\leq \epsilon$ by taking an $\epsilon$-net of $[0,\,1]$, consisting of
$O(\epsilon^{-1})$ many links, and using the quantile $F^{-1}$ to
define the corresponding points $t_i$, $i =
1,\ldots,O(\epsilon^{-1})$, that partition the interval
$[-\infty,\,\tau_F]$. Monotonicity of $F$ motivates the following
brackets for a function from $\mathcal{S}$:
\begin{equation*}
\frac{\pi(X)}{1 - \pi(X)} + 1 - F(t_{i + 1})
 \leq \frac{\pi(X)}{1 - \pi(X)} + 1 - F(t)
 \leq \frac{\pi(X)}{1 - \pi(X)} + 1 - F(t_i),
 \qquad t_i \leq t \leq t_{i + 1}.
\end{equation*}
Working with the supremum metric, we find the maximal length of our
brackets is $\max_i|F(t_{i + 1}) - F(t_i)| \leq \epsilon$ as
desired. Therefore, the number of brackets required to cover
$\mathcal{S}$ with respect to the supremum metric,
$N_{[\,]}(\epsilon,\mathcal{S},\|\cdot\|_{\infty})$, is
$O(\epsilon^{-1})$. Hence, there is a constant $C > 0$ such that
$N_{[\,]}(\epsilon,\mathcal{S},\|\cdot\|_{\infty}) \leq
C\epsilon^{-1}$, and it follows that $\mathcal{S}$ is Euclidean with
constant envelope $\|\pi\|_{\infty}/(1 - \|\pi\|_{\infty}) + 1$.

To show that the class $\mathcal{Z}$ is Euclidean, we write
$\mathcal{Z} = \mathcal{Z}_1 - \mathcal{Z}_2$ as a difference of two
classes, where
\begin{equation*}
\mathcal{Z}_1 = \bigg\{ \big((X,Z,\delta),\,(X',Z',\delta')\big)
 \mapsto
 \frac{\delta'\1[Z' \leq ts(X) + m(X)]}{\{1 - M(Z'-\,|\,X)\}g(X)}
 \,:\, -\infty < t \leq \tau_F \bigg\}
\end{equation*}
and $\mathcal{Z}_2$ is equal to
\begin{equation*}
\bigg\{ \big((X,Z,\delta),\,(X',Z',\delta')\big)
 \mapsto
 \int_{-\infty}^{ts(X) + m(X)}\,
 \frac{\1[Z' > u]}{\{1 - M(Z'-\,|\,X)\}^2g(X)}
 \, M^1(du\,|\,X)
 \,:\, -\infty < t \leq \tau_F \bigg\}.
\end{equation*}
We can therefore conclude that $\mathcal{Z}$ is a Euclidean class when
we have shown that both $\mathcal{Z}_1$ and $\mathcal{Z}_2$ are each
Euclidean classes.

Letting $\epsilon > 0$, as before with the class $\mathcal{S}$, we can
partition $[-\infty,\,\tau_F]$ into segments using points $t_i$, $i =
1,\ldots,O(\epsilon^{-2})$, such that
\begin{equation*}
\max_i\sup_{0 \leq x \leq 1} \Big|
 M^1\big(t_{i + 1}s(x) + m(x)\,|\,x\big)
 - M^1\big(t_is(x) + m(x)\,|\,x\big) \Big| \leq
 \Big[1 - \|M\|_{\infty}\Big]^{2}
 \bigg[\inf_{0 \leq x \leq 1} g(x)\bigg]^{2} \epsilon^2.
\end{equation*}
Also similar to the above arguments, monotonicity of
the indicator function motivates the following brackets for a function
from $\mathcal{Z}_1$:
\begin{equation*}
\frac{\delta'\1[Z' \leq t_is(X) + m(X)]}{\{1 - M(Z'-\,|\,X)\}g(X)}
 \leq \frac{\delta'\1[Z' \leq ts(X) + m(X)]}{\{1 - M(Z'-\,|\,X)\}g(X)}
 \leq \frac{\delta'\1[Z' \leq t_{i + 1}s(X) + m(X)]}{\{1 - M(Z'-\,|\,X)\}g(X)},
\end{equation*}
when $t_i \leq t \leq t_{i + 1}$. The squared length of the proposed
brackets in the $L_2(P \otimes P)$--metric satisfies
\begin{align*}
&E\Bigg[ \bigg\{
 \frac{\delta'\1[Z' \leq t_{i + 1}s(X) + m(X)]}{\{1 - M(Z'-\,|\,X)\}g(X)}
 - \frac{\delta'\1[Z' \leq t_is(X) + m(X)]}{\{1 - M(Z'\,|\,X)\}g(X)}
 \bigg\}^2 \Bigg] \\
&\leq \Big[1 - \|M\|_{\infty}\Big]^{-2}
 \bigg[\inf_{0 \leq x \leq 1} g(x)\bigg]^{-2}
 \sup_{0 \leq x \leq 1} \Big|
 M^1\big(t_{i + 1}s(x) + m(x)\,|\,x\big)
 - M^1\big(t_is(x) + m(x)\,|\,x\big) \Big| \\
&\leq \epsilon^2,
 \qquad i = 1,\ldots,O\big(\epsilon^{-2}\big).
\end{align*}
Since $\mathcal{Z}_1$ has the constant envelope $(1 -
\|M\|_{\infty})^{-1}[\inf_{0 \leq x \leq 1} g(x)]^{-1}$, it then
follows for the number of brackets required to cover $\mathcal{Z}_1$,
$N_{[\,]}(\epsilon,\mathcal{Z}_1,L_2(P \otimes P))$, is
$O(\epsilon^{-2})$. Hence, $\mathcal{Z}_1$ is Euclidean. The class
$\mathcal{Z}_2$ can also be shown to be Euclidean by a similar (and
easier) argument, which is omitted. We conclude that the class
$\mathcal{Z}$ is Euclidean as desired.

Therefore, $\mathcal{T}$ is Euclidean and the requirements for
Corollary 7 of Sherman (1994) are satisfied. We can decompose $T_n(t)$
into
\begin{equation*}
\chi_n(t) = \frac1{n(n - 1)a_n} \sum_{j \neq k}
 K\bigg(\frac{X_j - X_k}{a_n}\bigg)
 \frac{1 - Q(ts(X_j) + m(X_j)\,|\,X_j)}{1 - \pi(X_j)}
 \frac{\zeta(X_j, Z_j, \delta_j, ts(X_j) + m(X_j))}{g(X_j)},
\end{equation*}
and a remainder term $T_n(t) - \chi_n(t)$ that is equal to
\begin{equation*}
\frac1{na_n} \frac1{n}\sj
 K(0)\frac{1 - Q(ts(X_j) + m(X_j)\,|\,X_j)}{1 - \pi(X_j)}
 \frac{\zeta(X_j,Z_j,\delta_j,ts(X_j) + m(X_j))}{g(X_j)}
 -\frac1{n}\chi_n(t),
\end{equation*}
with $-\infty < t \leq \tau_F$. Therefore, up to symmetry in the
kernel, $\chi_n(t) - E[\chi_n(t)]$ is a $2^{\text{nd}}$-order
degenerate $U$-process. Note, from the discussion on page 439
of Sherman (1994), the kernel function characterizing the process
$\chi_n(t) - E[\chi_n(t)]$ need not be symmetric in its arguments in
order for the conclusions from Sherman (1994) to hold because the
corresponding $U$-process is given by symmetrizing the kernel
function. We can therefore apply Corollary 7 of Sherman (1994) to see
that both $\|\chi_n - E[\chi_n]\|_{\infty} = \Opn$ and the remainder
satisfies $\|T_n - E[T_n] - \chi_n + E[\chi_n]\|_{\infty} =
\opn$. However, $\chi_n(t) - E[\chi_n(t)]$ is not asymptotically
linear.

To continue, approximate $\chi_n(t) - E[\chi_n(t)]$ by its H\'ajek
projection. For large enough $n$, a function $\eta_{a,t} \in
\mathcal{T}_1$ has the H\'ajek projection $h_{a,t} = h_{1,a,t} +
h_{2,a,t}$, where
\begin{align*}
h_{1,a,t}(X_j,Z_j,\delta_j) &=
 E\big[ \eta_{a,t}((X,Z,\delta),(X_j,Z_j,\delta_j))
 \,|\, (X_j,Z_j,\delta_j) \big] \\
&= \int_{-1}^{1}\,
 \frac{1 - Q(ts(X_j + au) + m(X_j + au)\,|\,X_j + au)}{
 1 - \pi(X_j + au)} \\
&\quad\phantom{= \int_{-1}^{1}\,}\times
 \zeta(X_j + au,Z_j,\delta_j,ts(X_j + au) + m(X_j + au))K(u)\,du
\end{align*}
and
\begin{align*}
h_{2,a,t}(X_j,Z_j,\delta_j) &=
 E\big[ \eta_{a,t}((X_j,Z_j,\delta_j),(X,Z,\delta))
 \,|\, (X_j,Z_j,\delta_j) \big] \\
&= \frac{1 - Q(ts(X_j) + m(X_j)\,|\,X_j)}{1 - \pi(X_j)} \\
&\quad\times
 \int_{-1}^{1}\, E\big[
 \zeta(X_j,Z,\delta,ts(X_j) + m(X_j))\,|\,X_j,\,X = X_j + au \big]
 \frac{g(X_j + au)}{g(X_j)}K(u)\,du.
\end{align*}
The H\'ajek projection of $\chi_n(t) - E[\chi_n(t)]$ is then given by
$n^{-1}\sj h_{a_n,t}(X_j,Z_j,\delta_j) - E[h_{a_n,t}(X,Z,\delta)]$,
where the bandwidth parameter $a_n$ appears in place of $a$.

From Lemma 6 of Sherman (1994), it follows that the class of H\'ajek
projections, i.e.\
\begin{equation*}
\bigg\{ (X,Z,\delta) \mapsto
 h_{a,t}(X,Z,\delta) - E\big[h_{a,t}(X,Z,\delta)\big]
 \,:\, 0 \leq a < 1,\,-\infty < t \leq \tau_F \bigg\},
\end{equation*}
is Euclidean from the fact that $\mathcal{T}$ is Euclidean. We can
therefore apply Corollary 4 (ii) of Sherman (1994) to see that
\begin{equation*}
\sup_{t \in \R} \Big| \chi_n(t) - E\big[\chi_n(t)\big]
 - \avj h_{a_n,t}(X_j,Z_j,\delta_j) + E\big[h_{a_n,t}(X,Z,\delta)\big]
 \Big| = O_P(n^{-1}) = \opn.
\end{equation*}
Define the function $\psi_t = h_{0,t} = h_{1,0,t}$, since $h_{2,0,t}
\equiv 0$ and $E[h_{1,0,t}(X,Z,\delta)] = 0$. If we can show that
\begin{equation} \label{equicon_rem}
\sup_{t \in \R} \bigg| \avj h_{a_n,t}(X_j,Z_j,\delta_j)
 - E\big[h_{a_n,t}(X,Z,\delta)\big]
 - \avj \psi_t(X_j,Z_j,\delta_j) \bigg| = \opn,
\end{equation}
then $T_n(t) - E[T_n(t)]$ is asymptotically linear with influence
function
\begin{equation*}
\psi_t(X,Z,\delta) = \frac{1 - Q(ts(X) + m(X)\,|\,X)}{1 - \pi(X)}
 \eta(X,Z,\delta,ts(X) + m(X)).
\end{equation*}
When $T_n(t) - E[T_n(t)]$ is asymptotically linear we can describe the
weak convergence of $T_n(t) - E[T_n(t)]$ by a mean zero Gaussian
process with known covariance structure.

To complete the argument that $T_n(t) - E[T_n(t)]$ is asymptotically
linear, we need to more closely examine the space of H\'ajek
projections and rewrite
\begin{equation*}
\mathcal{H} = \bigg\{ (X,Z,\delta) \mapsto \int_{-1}^{1}\,
 \Big\{ f_{aw,t}(X,Z,\delta) - E\big[f_{aw,t}(X,Z,\delta)\big] \Big\}
 K(w)\,dw
 \,:\, f_{aw,t} \in \mathcal \mathcal{S}'\otimes\mathcal{Z}' \bigg\},
\end{equation*}
where the classes $\mathcal{S}'$ and $\mathcal{Z}'$ are related to the
classes $\mathcal{S}$ and $\mathcal{Z}$ above with
\begin{equation*}
\mathcal{S}' = \bigg\{ X \mapsto
 \frac{\pi(X + b)}{1 - \pi(X + b)} + 1 - F(t)
 \,:\, -\infty < t \leq \tau_F,\,-1 < b < 1 \bigg\}
\end{equation*}
and $\mathcal{Z}' = \mathcal{Z}_1' - \mathcal{Z}_2'$ with
\begin{equation*}
\mathcal{Z}_1' = \bigg\{ (X,Z,\delta) \mapsto
 \frac{\delta\1[Z \leq ts(X + b) + m(X + b)]}{\{1 - M(Z-\,|\,X + b)\}g(X + b)}
 \,:\, -\infty < t \leq \tau_F,\, -1 < b < 1 \bigg\}
\end{equation*}
and
\begin{align*}
\mathcal{Z}_2' = \bigg\{ (X,Z,\delta) &\mapsto
 \int_{-\infty}^{ts(X + b) + m(X + b)}\,
 \frac{\1[Z > u]}{\{1 - M(Z-\,|\,X + b)\}^2g(X + b)}\,M^{1}(du\,|\,X + b) \\
&\quad \,:\, -\infty < t \leq \tau_F,\,-1 < b < 1 \bigg\}.
\end{align*}
Therefore, the amount of entropy residing in the class $\mathcal{H}$ depends on
the amounts of entropy residing in the classes $\mathcal{S}'$ and
$\mathcal{Z}'$.

Write $\mathcal{S}' = \mathcal{O} + \mathcal{P}$ as a sum of classes,
with $\mathcal{O} = \{X \mapsto \pi(X + b)/\{1 - \pi(X + b)\}\,:\,-1 <
b < 1\}$ and $\mathcal{P} = \{X \mapsto 1 - F(t)\,:\,-\infty < t \leq
\tau_F\}$. Let $\epsilon > 0$ and set $b_i$, $i =
1,\ldots,O(\epsilon^{-1})$, as the grid points for an
$\|\pi'\|_{\infty}^{-1}(1 - \|\pi\|_{\infty})^2\epsilon$-net of
$(-1,\,1)$. Assumption \ref{assumpMandM1} implies that
\begin{equation*}
\sup_{0 \leq x \leq 1} \bigg|
 \frac{\pi(x + b)}{1 - \pi(x + b)}
 - \frac{\pi(x + b_i)}{1 - \pi(x + b_i)} \bigg|
 \leq \epsilon,
\end{equation*} 
whenever $b_i \leq b \leq b_{i + 1}$, $i =
1,\ldots,O(\epsilon^{-1})$. Since $\mathcal{O}$ has constant envelope
$\|\pi\|_{\infty}/(1 - \|\pi\|_{\infty})$, it follows that
$N(\epsilon,\mathcal{O},\|\cdot\|_{\infty}) = O(\epsilon^{-1})$ and
therefore $N_{[\,]}(\epsilon,\mathcal{O},\|\cdot\|_{\infty}) =
O(\epsilon^{-1})$ (see the note in the parentheses near the top of
page 84 of van der Vaart and Wellner, 1996). Repeating the steps above
for showing the class $\mathcal{S}$ satisfies
$N_{[\,]}(\epsilon,\mathcal{S},\|\cdot\|_{\infty}) = O(\epsilon^{-1})$
yields that $N_{[\,]}(\epsilon,\mathcal{P},\|\cdot\|_{\infty}) =
O(\epsilon^{-1})$ as well. Therefore, there is a constant $C > 0$ not
depending on $\epsilon$ such that
\begin{equation} \label{Sprime_brackets}
N_{[\,]}(\epsilon,\mathcal{S}',L_2(P)) \leq C\epsilon^{-2},
 \qquad \epsilon > 0.
\end{equation}

Similarly, write $\mathcal{Z}_1' = \mathcal{I}\otimes\mathcal{D}$ as a
product of classes, with $\mathcal{I} = \{(X,Z,\delta) \mapsto
\delta\1[Z \leq ts(X + b) + m(X + b)]\,:\,-\infty < t \leq \tau_F,\,-1
< b < 1\}$ and $\mathcal{D} = \{(X,Z,\delta) \mapsto \delta/\{\{1 -
M(Z-\,|\,X + b)\}g(X + b)\}\,:\,-1 < b < 1\}$. Now set $\mathcal{M} =
\{X \mapsto m(X + b)\,:\,-1 < b < 1\}$ and $\mathcal{V} = \{X \mapsto
s(X + b)\,:\,-1 < b < 1\}$. Since Assumption \ref{assumpMandM1}
implies that both of $m$ and $s$ are twice differentiable with bounded
derivatives, it is easy to show that
$N_{[\,]}(\epsilon,\mathcal{M},\|\cdot\|_{\infty}) = O(\epsilon^{-1})$
and $N_{[\,]}(\epsilon,\mathcal{V},\|\cdot\|_{\infty}) =
O(\epsilon^{-1})$ for any $\epsilon > 0$. We can therefore choose
brackets $m_i^l < m_i^u$, $i = 1,\ldots,O(\epsilon^{-2})$, for $m$
and brackets $s_j^l < s_j^u$, $j = 1,\ldots,O(\epsilon^{-2})$, for $s$
such that
\begin{equation*}
\|m_i^u - m_i^l\|_{\infty} \leq \frac1{2\|m^{1}\|_{\infty}} \epsilon^2
\quad\text{and}\quad
\|s_j^u - s_j^l\|_{\infty} \leq \frac12
 \bigg[\sup_{0 \leq x \leq 1}\sup_{-\infty < t \leq \tau_0}
 |tm^1(t\,|\,x)|\bigg]^{-1} \epsilon^2.
\end{equation*} 
Proceeding along similar lines as the proof of Lemma A.1 of Van
Keilegom and Akritas (1999) shows that
$N_{[\,]}(\epsilon,\mathcal{I},L_2(P)) = O(\epsilon^{-6})$. It is easy
to show that $\mathcal{D}$ satisfies
$N_{[\,]}(\epsilon,\mathcal{D},L_2(P)) = O(\epsilon^{-2})$. It then
follows that $N_{[\,]}(\epsilon,\mathcal{Z}_1',L_2(P)) =
O(\epsilon^{-8})$.

The class $\mathcal{Z}_2'$ is treated similarly to $\mathcal{Z}_1'$
above, and, with additional technical effort, one shows that
$N_{[\,]}(\epsilon,\mathcal{Z}_2',L_2(P)) =
O(\epsilon^{-6})$. Combining this result with the order for the
bracketing numbers $N_{[\,]}(\epsilon,\mathcal{Z}_1',L_2(P))$ of
$\mathcal{Z}_1'$ above implies that there is a constant $C > 0$ not
depending on $\epsilon$ such that
\begin{equation} \label{Zprime_brackets}
N_{[\,]}(\epsilon,\mathcal{Z}',L_2(P)) \leq C\epsilon^{-14},
 \qquad \epsilon > 0.
\end{equation}

Combining \eqref{Sprime_brackets} and \eqref{Zprime_brackets} shows
the class $\mathcal{S}'\otimes\mathcal{Z}'$ satisfies
$N_{[\,]}(\epsilon,\mathcal{S}'\otimes\mathcal{Z}',L_2(P)) =
O(\epsilon^{-16})$. Since $\mathcal{H}$ has the constant envelope $U =
2\|K\|_{\infty}[\|\pi\|_{\infty}/(1 - \|\pi\|_{\infty}) +
1]\|\eta\|_{\infty}$, we can see that
$N_{[\,]}(\epsilon,\mathcal{H},L_2(P)) = O(\epsilon^{-16})$ as
well. Therefore, only one bracket is required when $\epsilon >
U$. Otherwise, there are constants $C_1,C_2 > 0$ not depending on
$\epsilon$ such that
\begin{equation*}
\int_0^U\,\sqrt{\log N_{[\,]}(\epsilon,\mathcal{H},L_2(P))}\,d\epsilon
 \leq C_1 + C_2\int_0^1\,\sqrt{\log(1/\epsilon)}\,d\epsilon < \infty.
\end{equation*}
It then follows that the class $\mathcal{H}$ is Donsker.

From Corollary 2.3.12 of van der Vaart and Wellner (1996), the class
of empirical processes indexed by the Donsker class $\mathcal{H}$ is
asymptotically equicontinuous in the sense that, for any $\epsilon >
0$,
\begin{equation} \label{equicon}
\lim_{\alpha \downarrow 0} \limsup_{n \to \infty} P \Bigg(
 \sup_{h_1,h_2 \in \mathcal{H}\,:\,\Var(h_1 - h_2) < \alpha} \bigg|
 n^{-1/2}\sj \Big\{ h_1(X_j,Z_j,\delta_j) - h_2(X_j,Z_j,\delta_j) \Big\}
 \bigg| > \epsilon \Bigg) = 0.
\end{equation}
We can see that \eqref{equicon} implies the desired
\eqref{equicon_rem} if we can show that $h_{a_n,t} - h_{0,t}$
satisfies the variation condition under the norm inside the
probability statement in \eqref{equicon}, where the norm inside the
probability statement is restricted to the subclass of functions from
$\mathcal{H}$ with $\{a_n\}_{n \geq 1}$ in place of $0 \leq a < 1$.

Write $h_{a_n,t} - h_{0,t} = h_{1,a_n,t} - h_{1,0,t}
+ h_{2,a_n,t}$, and observe that
\begin{align*}
\big|h_{2,a_n,t}(X_j,Z_j,\delta_j)\big| &=
 \frac{1 - Q(ts(X_j) + m(X_j)\,|\,X_j)}{1 - \pi(X_j)} \\
&\quad\times
 \bigg| \int_{-1}^{1}\,
 E\Big[\eta\big(X_j,Z,\delta,ts(X_j) + m(X_j)\big)\,\Big|\,X_j,X = a_nu\Big]
 \frac{g(X_j + a_nu)}{g(X_j)}K(u)\,du \bigg| \\
&\leq 2\|K\|_{\infty}\big(1 - \|\pi\|_{\infty}\big)^{-1}
 \frac{\|g\|_{\infty}}{\inf_{0 \leq x \leq 1} g(x)} \\
&\quad\times \sup_{-\infty < t \leq \tau_F}\sup_{-1 < u < 1} \Big|
 E\Big[\eta\big(X_j,Z,\delta,ts(X_j) + m(X_j)\big)
 \,\Big|\,X_j,X = X_j + a_nu\Big] \Big| \\
&= O(a_n),
 \qquad\text{a.s.},
\end{align*}
which follows from the facts that $E[\eta(X_j,Z,\delta,ts(X_j) +
m(X_j))\,|\,X_j,X = x]$ is both bounded and differentiable in $x$
and that $E[\eta(X_j,Z,\delta,ts(X_j) + m(X_j))\,|\,X_j] = 0$. The
variance of $h_{a_n,t} - h_{1,0,t}$ satisfies
\begin{align*}
&E\bigg[ \Big\{ h_{1,a_n,t}(X,Z,\delta) - h_{1,0,t}(X,Z,\delta)
 + h_{2,a_n,t}(X,Z,\delta) - E\big[h_{1,a_n,t}(X,Z,\delta)\big]
 - E\big[h_{2,a_n,t}(X,Z,\delta)\big] \Big\}^2 \bigg] \\
&\leq 2 E\bigg[ \Big\{
 h_{1,a_n,t}(X,Z,\delta) - h_{1,0,t}(X,Z,\delta) \Big\}^2 \bigg]
 + 4 E^2\Big[ h_{1,a_n,t}(X,Z,\delta) \Big] \\
&\quad + 8 E\Big[h_{2,a_n,t}^2(X,Z,\delta)\Big]
 + 16 E^2\Big[ h_{2,a_n,t}(X,Z,\delta) \Big] \\
&= 2 E\bigg[ \Big\{
 h_{1,a_n,t}(X,Z,\delta) - h_{1,0,t}(X,Z,\delta) \Big\}^2 \bigg]
 + 4 E^2\Big[ h_{1,a_n,t}(X,Z,\delta) \Big]
 + O\big(a_n^2\big),
\end{align*}
and the quantity $E[\{h_{1,a_n,t}(X,Z,\delta) -
h_{1,0,t}(X,Z,\delta)\}^2]$ is equal to
\begin{align*}
&E\Bigg[ \bigg\{
 \int_{-1}^{1} \bigg\{
 \frac{1 - Q(ts(X + a_nu) + m(X + a_nu)\,|\,X + a_nu)}{1 - \pi(X + a_nu)} \\
&\phantom{E\Bigg[ \bigg\{ \int_{-1}^{1} \bigg\{}
 \quad\times
 \zeta\big(X + a_nu,Z,\delta,ts(X + a_nu) + m(X + a_nu)\big) \\
&\phantom{E\Bigg[ \bigg\{ \int_{-1}^{1} \bigg\{}
 \quad - \frac{1 - Q(ts(X) + m(X)\,|\,X)}{1 - \pi(X)}
 \zeta\big(X,Z,\delta,ts(X) + m(X)\big) \bigg\}K(u)\,du \bigg\}^2 \Bigg] \\
&= O(a_n),
 \qquad -\infty < t \leq \tau_F,
\end{align*}
which follows from the facts that both
\begin{align*}
&\sup_{0 \leq x \leq 1} \sup_{-1 < u < 1} \bigg|
 \frac{1 - Q(ts(x + a_nu) + m(x + a_nu)\,|\,x + a_nu)}{1 - \pi(x + a_nu)}
 - \frac{1 - Q(ts(x) + m(x)\,|\,x)}{1 - \pi(x)} \bigg| \\
&= \sup_{0 \leq x \leq 1} \sup_{-1 < u < 1} \bigg|
 \frac{\pi(x + a_nu)}{1 - \pi(x + a_nu)}
 - \frac{\pi(x)}{1 - \pi(x)} \bigg| \\
&= O(a_n)
\end{align*}
and
\begin{equation*}
\sup_{-\infty < t \leq \tau_F} \sup_{-1 < u < 1} \bigg|
 E\bigg[ \Big\{
 \zeta\big(X + a_nu,Z,\delta,ts(X + a_nu) + m(X + a_nu)\big)
 - \zeta\big(X,Z,\delta,ts(X) + m(X)\big) \Big\}^2 \bigg]
\end{equation*}
is of the order $O(a_n)$. Similarly, conclude that
$E^2[h_{1,a_n,t}(X,Z,\delta)] = O(a_n^2)$, uniformly in $-\infty < t \leq
\tau_F$. Therefore, the variance of $h_{a_n,t} - h_{0,t}$ is
asymptotically negligible, and we can apply \eqref{equicon} to obtain
the desired \eqref{equicon_rem}. We conclude that $T_n(t) - E[T_n(t)]$
is asymptotically linear with influence function
\begin{equation*}
\psi_t(X,Z,\delta) =
 \frac{1 - Q(ts(X) + m(X)\,|\,X)}{1 - \pi(X)}
 \zeta\big(X,Z,\delta,ts(X) + m(X)\big),
 \qquad -\infty < t \leq \tau_F.
\end{equation*}
It follows that the process $\{T_n(t) - E[T_n(t)]\,:\,-\infty < t \leq
\tau_F\}$ weakly converges to a mean zero Gaussian process
$\{Z_T(t)\,:\,-\infty < t \leq \tau_F\}$ with covariance function
$\Sigma_T(t,v) = E[\psi_t(X,Z,\delta)\psi_v(X,Z,\delta)]$ for $-\infty
< t,v \leq \tau_F$. The assertion then follows by finding similar
conclusions for the random quantities $U_n - E[U_n]$, $V_n - E[V_n]$
and $W_n(t) - E[W_n(t)]$.
\end{proof}

\section*{Acknowledgements}
The authors would like to thank and acknowledge the following sources
of financial support. This research has been supported by the European
Research Council (2016-2021, Horizon 2020 / ERC grant agreement No.\
694409), the IAP research network grant nr.\ P7/06 of the Belgian
government (Belgian science policy), the Collaborative Research Center
``Statistical modelling of nonlinear dynamic processes.'' (SFB 823,
Project C4) of the German Research Foundation.

%%%%% Section: References

%%%%% End of file.

\end{document}